\ifpdf \usepackage[pdftex]{graphicx} \pdfcompresslevel=9
\else \usepackage[dvips]{graphicx} \fi
\ifpdf \usepackage[pdftex]{graphicx} \pdfcompresslevel=9
\else \usepackage[dvips]{graphicx} \fi
\newcommand{\matr}[1]{\mathbfit{#1}}			                								
\renewcommand{\vec}[1]{\mathbf{#1}}                             					
\newcommand{\tran}{^{\mkern-1.5mu {T}}}	                        					
\DeclareMathOperator*{\argmin}{arg\,min}		                							
\newcommand{\myparagraph}[1]{{\vspace{2mm} \noindent \textbf{#1.}}}
\newtheorem{theorem}{Theorem}
\newtheorem{lemma}{Lemma}
\newtheorem{definition}[theorem]{Definition}
\renewcommand{\hl}[1]{#1}
\title[Structured Regularization of Functional Map Computations]%
      {Structured Regularization of Functional Map Computations}
\author[J. Ren, M. Panine, P. Wonka, \& M. Ovsjanikov]
{\parbox{\textwidth}{\centering Jing Ren$^{1}$, 
        Mikhail Panine$^{2}$\orcid{0000-0001-7946-0584}, 
        Peter Wonka$^{1}$\orcid{0000-0003-0627-9746},
        and Maks Ovsjanikov$^{2}$\orcid{0000-0002-5867-4046}
        }
        \\
{\parbox{\textwidth}{\centering $^1$KAUST, \quad
$^2$LIX, \'Ecole Polytechnique, CNRS 
}
}
}
\begin{document}


\maketitle
\begin{abstract}
  We consider the problem of non-rigid shape matching using the functional map
  framework. Specifically, we analyze a commonly used approach for regularizing functional maps,
  which consists in penalizing the failure of the unknown map to commute with the Laplace-Beltrami
  operators on the source and target shapes. We show that this approach has certain undesirable
  fundamental theoretical limitations, and can be undefined even for trivial maps
  in the smooth setting. Instead we propose a novel, theoretically well-justified approach for
  regularizing functional maps, by using the notion of the \emph{resolvent} of the Laplacian
  operator. In addition, we provide a natural one-parameter family of regularizers, that can be
  easily tuned depending on the expected approximate isometry of the input shape pair. We show on a
  wide range of shape correspondence scenarios that our novel regularization leads to an improvement in the quality of the estimated functional, and ultimately pointwise
  correspondences before and after commonly-used refinement techniques.
\begin{CCSXML}
<ccs2012>
<concept>
<concept_id>10010147.10010371.10010396.10010402</concept_id>
<concept_desc>Computing methodologies~Shape analysis</concept_desc>
<concept_significance>500</concept_significance>
</concept>
</ccs2012>
\end{CCSXML}
\ccsdesc[500]{Computing methodologies~Shape analysis}
\printccsdesc   
\end{abstract}  

\section{Introduction} Shape matching is a fundamental problem in geometry processing and computer graphics, with
applications ranging from shape interpolation \cite{heeren2016splines} to shape exploration \cite{huang2014functional} and statistical shape analysis \cite{Bogo:CVPR:2014}.

In this paper, we concentrate on the functional maps framework for computing correspondences between shapes, which has proven to be especially useful when dealing with non-rigid and especially near-isometric shape pairs. This approach, originally introduced in \cite{ovsjanikov2012functional} and subsequently extended in multiple follow-up works, e.g. \cite{kovnatsky2013coupled,rodola2017partial,huang2014functional}, is based on optimizing for a linear mapping between function spaces defined on the shapes, which can be conveniently encoded as a matrix. One attractive feature of this approach is that many desirable geometric objectives for the unknown pointwise correspondence, such as preservation of geodesic distances, can be conveniently encoded as constraints on the matrices representing functional maps, and often lead to simple convex optimization problems.

The majority of objectives when optimizing for a functional map between a pair of shapes consist of a) preservation of some pre-computed descriptors and b) a functional map regularization term, based on promoting some desired global map properties. This latter, regularization term is especially important for ensuring the overall global consistency of the computed map.

Common strategies for regularization include commutativity with the Laplacian operator, exploited in the original article \cite{ovsjanikov2012functional}, a mask promoting a slanted diagonal in the case of partial correspondences \cite{rodola2017partial}, or orthonormality of the functional map, which corresponds to local \emph{area preservation} by the underlying pointwise map \cite{ovsjanikov2012functional,kovnatsky2013coupled,rustamov13}.

\begin{figure}[!t]
\vspace{6pt}
    \centering
    \begin{overpic}
    [trim=0cm 0cm 0cm 0cm,clip,width=1\columnwidth, grid=false]{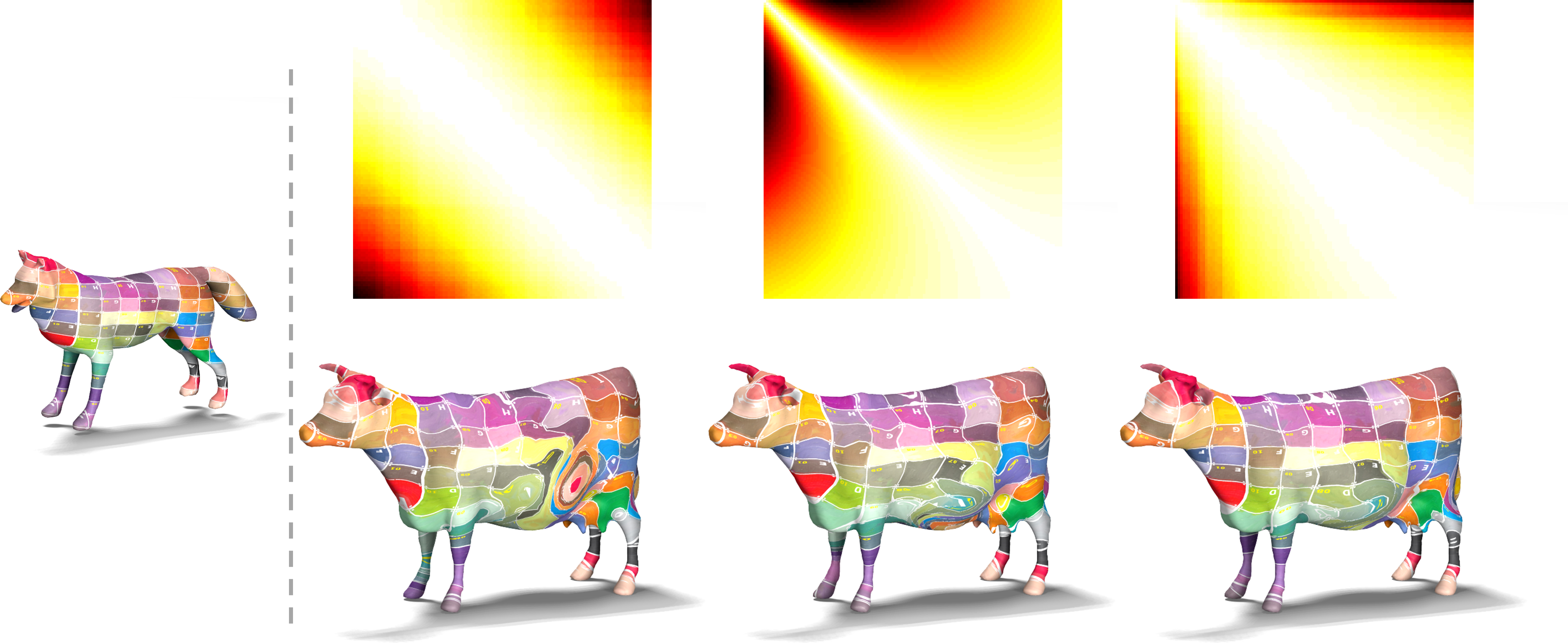}
    \put(3,28){Source}
    \put(25,42){Standard}
    \put(52,42){Slanted}
    \put(80,42){\textbf{Ours}}
    \put(99,40.5){\rotatebox{-90}{Regularizing}}
    \put(95,35){\rotatebox{-90}{mask}}
    \put(99,18){\rotatebox{-90}{Pointwise}}
    \put(95,15){\rotatebox{-90}{map}}
    
  \end{overpic}
    \caption{Comparison of different masks. Standard: the Laplacian commutativity operator can be equivalently formulated as a penalty or regularizing mask. Slanted: the weight mask proposed in~\cite{rodola2017partial} designed to promote a slanted structure. Ours: the mask proposed in this paper based on the resolvent of the Laplacian operator. The penalty masks of these three methods are visualized in the first row for an example pair from the SHREC dataset, and the corresponding optimized point-wise maps from the functional map pipeline are shown in the second row. \vspace{-2mm}     \label{fig:teaser_v2} }
    \vspace{-0.2cm}
\end{figure}

Unfortunately, although  functional map regularization has been instrumental in obtaining high-quality results on some shape correspondence benchmarks, these regularization terms often lack a theoretical foundation and indeed in some cases result in optimization objectives which would not be well-defined in the case of smooth surfaces.


In this paper, we introduce a novel way of regularizing functional maps, based on the concept of the resolvent of the Laplacian operator. Unlike the original approach, based on commutativity with the Laplacians, our method has a natural theoretically well-defined analogue in the case of smooth surfaces, and is guaranteed to be bounded even in the full (infinite dimensional) basis case. Moreover, our approach provides a simple one-parameter family of regularizers, that can be tuned depending on the expected approximate deviation from isometry in a given shape pair. Finally, we demonstrate on a wide range of challenging settings that our approach  leads to a quantitative and qualitative improvement for the computed functional and eventually pointwise maps, without incurring any additional computational complexity (see Fig.~\ref{fig:teaser_v2} as an example).

\section{Related Work}
Shape matching in its full generality is \hl{an} extremely well-studied area of geometry processing and
computer graphics, and its full overview is beyond the scope of our paper. Therefore, below we only
review the most closely related methods based primarily on the functional maps framework. We refer
the interested readers to recent surveys including
\cite{van2011survey,tam2013registration,biasotti2016recent} for an in-depth treatment of other
shape matching approaches.

\myparagraph{Functional Maps} Our approach fits within the functional map framework, which was
originally introduced in \cite{ovsjanikov2012functional} for solving non-rigid shape matching
problems, and extended significantly in follow-up works, including
\cite{kovnatsky2013coupled,aflalo2013spectral,kovnatsky2015functional,rodola2017partial,ezuz2017deblurring,burghard2017embedding} among many others
(see also \cite{ovsjanikov2017course} for an overview). The key observation in these techniques
is that it is often easier to estimate correspondences between real-valued functions, rather than
points on the shapes. This is because functions have a natural vector space structure, and moreover
functional transformations are most often linear, which means that functional maps can be encoded,
and thus optimized for, as small matrices in a reduced functional basis.

As observed by several works in this domain,
\cite{kovnatsky2013coupled,rustamov13,rodola2017partial,burghard2017embedding} many natural properties on the
underlying pointwise correspondences can be expressed as objectives on functional maps. Most notably,
this includes: orthonormality of functional maps, which corresponds to the local area-preservation
nature of pointwise correspondences \cite{ovsjanikov2012functional,kovnatsky2013coupled,rustamov13}; preservation
of inner products of gradients of functions, which corresponds to conformal maps \cite{rustamov13,burghard2017embedding,wang2018vector};
preservation of \emph{pointwise products} of functions, which corresponds to functional maps arising
from point-to-point correspondences \cite{nogneng2017informative,nogneng2018improved}; slanted diagonal structure of
functional maps, which corresponds to correspondences between partial shapes
\cite{rodola2017partial,litany2017fully}. Similarly, several other regularizers have been proposed, including using
robust norms and matrix completion techniques \cite{kovnatsky2013coupled,kovnatsky2015functional}, exploiting the relation between functional maps in different directions \cite{eynard2016coupled}, the map adjoint  \cite{huang2017adjoint}, and powerful
cycle-consistency constraints \cite{huang2014functional} in the context of shape collections, among many
others. More recently constraints on functional maps have been introduced to promote
\emph{continuity} of the recovered pointwise correspondence \cite{poulenard2018topological}, maps between curves defined on
shapes \cite{gehre2018interactive}, kernel-based techniques aimed at extracting more information
from given descriptor constraints \cite{wang2018kernel}, and an approach for incorporating
\emph{orientation} information into the functional map infererence pipeline \cite{ren2018continuous} among
others.

Among all of these, perhaps the most widely-used building block for regularizing functional maps,
introduced in \cite{ovsjanikov2012functional} and extended in follow-up works, including
\cite{wang2013image,rodola2017partial,litany2016non,litany2017fully}, is based on the commutativity with the Laplacian operators, which implies
a diagonal (or slanted diagonal in the case of partial correspondence) structure for functional
maps. To promote this structure, the most common method (see also Chapter 2.4 in
\cite{ovsjanikov2017course}) consists in adding an energy to the functional map estimation
pipeline, which penalizes the failure of the unknown functional map to commute with the
Laplace-Beltrami operators on the source and target shapes. Conveniently, this term still leads to a
convex optimization problem during functional map estimation. Moreover, as observed in
\cite{ovsjanikov2012functional}, for functional maps arising from pointwise ones, this term is zero
if and only if the map preserves geodesic distances exactly. 

Unfortunately, although functional map regularization via commutativity with the Laplace-Beltrami
operators has been instrumental in obtaining high quality results in challenging benchmarks, the
exact properties of this regularization are not well-understood. In particular, as we show below,
the commonly-used energy is not bounded in the full (infinite-dimensional) basis in general. Instead, our novel regularizer, although based on a
similar underlying principle, overcomes this limitation, and both leads to a theoretically better
justified energy, and a practical improvement on a range of challenging datasets.



\myparagraph{Optimal Transport} We also note briefly that other commonly-used relaxations for
matching problems include those based on optimal transport,
e.g. \cite{solomon2016entropic,mandad2017variance,vestner2017product}.  These techniques have
recently gained prominence especially due to the computational advances for adressing large-scale
transport problems, primarily using the Sinkhorn method with entropic regularization
\cite{cuturi2013sinkhorn,solomon2015convolutional}. Furthermore, other techniques that exploit the
formalism of optimal transport, for solving assignment problems include the recent variants
of Product Manifold Filter using Gaussian and Heat Kernels
\cite{mandad2017variance,vestner2017product}. Interestingly, these latter methods argue that
near-isometric shape matching can be better addressed via preservation of general functional
\emph{kernels} rather than preservation of e.g. geodesic distances. Our modification of the
regularization term of functional maps can also be seen through a similar light, as we show that a
more theoretically justified functional operator leads to an improvement of the overall structural
properties of the map, which eventually result in more accurate functional and pointwise maps.


\section{Background}
Our work is based on the functional map representation and the estimation pipeline. Below we review
the basic notions and the main steps for estimating a map between a pair of shapes using this
framework, and refer the interested reader to a recent set of course notes
\cite{ovsjanikov2017course} for a more in-depth discussion.

\myparagraph{Basic Pipeline} Given a pair of shapes, $S_1,S_2$ represented as triangle meshes, and
containing, respectively, $n_1$ and $n_2$ vertices, the general pipeline for computing a map between
them using the functional map representation, consists of the following main steps:
\begin{enumerate}
\item Compute a small set of $k_1 << n_1$ and $k_2 << n_2$ basis functions on each shape. The most common choice consists in using the first $k$ eigenfunctions of the
  Laplace-Beltrami operator of each shape, although other bases derived from  the Hamiltonian operator \cite{choukroun2018hamiltonian} and more localized basis functions \cite{neumann2014compressed,melzi2018localized} have also been used.
\item Compute a set of descriptor functions on each shape, that are expected
  to be approximately preserved by the unknown map. Store their coefficients in the corresponding
  bases as columns of matrices $A_1, A_2$.
\item Compute the optimal \emph{functional map} $C$ by solving the following optimization problem:
\begin{align}
C_{\text{opt}} = \argmin_{\matr{C}_{12}} E_{\text{desc}}\big(\matr{C}_{12}\big) + \alpha E_{\text{reg}}\big(\matr{C}_{12}\big) 
\end{align}
where the first term aims at the descriptor preservation: $E_{\text{desc}}\big(\matr{C}_{12}\big)
= \big\Vert \matr{C}_{12} \matr{A}_1 - \matr{A}_2\big\Vert^2$, whereas the second term regularizes
the map by promoting the correctness of its overall structural properties. As mentioned above, the
most common approach consists of penalizing the failure of the unknown
functional map to commute with the Laplace-Beltrami operators, which can be written as:
        \begin{align}
        \label{eq:energy:laplacian_comm}
        E_{\text{reg}}(C_{12}) = E_{\text{comm}}\big(\matr{C}_{12}\big) = \big\Vert \matr{C}_{12}\Delta_1 - \Delta_2\matr{C}_{12} \big\Vert^2    
        \end{align}
         where $\Delta_1$ and $\Delta_2$ are the Laplace-Beltrami operators of the two shapes expressed in the respective bases. Here, and throughout the rest of our paper, unless stated otherwise $\Vert \,\cdot \,\Vert$ denotes the matrix Frobenius norm.
\item Convert the functional map $C$ to a point-to-point map, for example using an iterative approach,
  such as the Iterative Closest Point (ICP) in the spectral embedding, or using other more advanced
  techniques \cite{rodola-vmv15,ezuz2017deblurring}.
\end{enumerate}

One of the attractive properties of this pipeline is that the functional map computation in step 3 leads to a simple (convex) least squares optimization with a relatively small number of unknowns, independent
of the number of points on the shapes. This step has been further extended e.g. using more powerful descriptor preservation constraints via commutativity \cite{nogneng2017informative}, or using manifold optimization \cite{kovnatsky2016madmm} among many others (see also Chapter 3 in \cite{ovsjanikov2017course}).

\label{sec:background}

\section{Functional Map Regularization}
\begin{figure}[!t]
  \vspace{0.5cm}
  \centering
  \begin{overpic}
  [trim=0cm 0cm 0cm 0cm,clip,width=0.95\linewidth, grid=false]{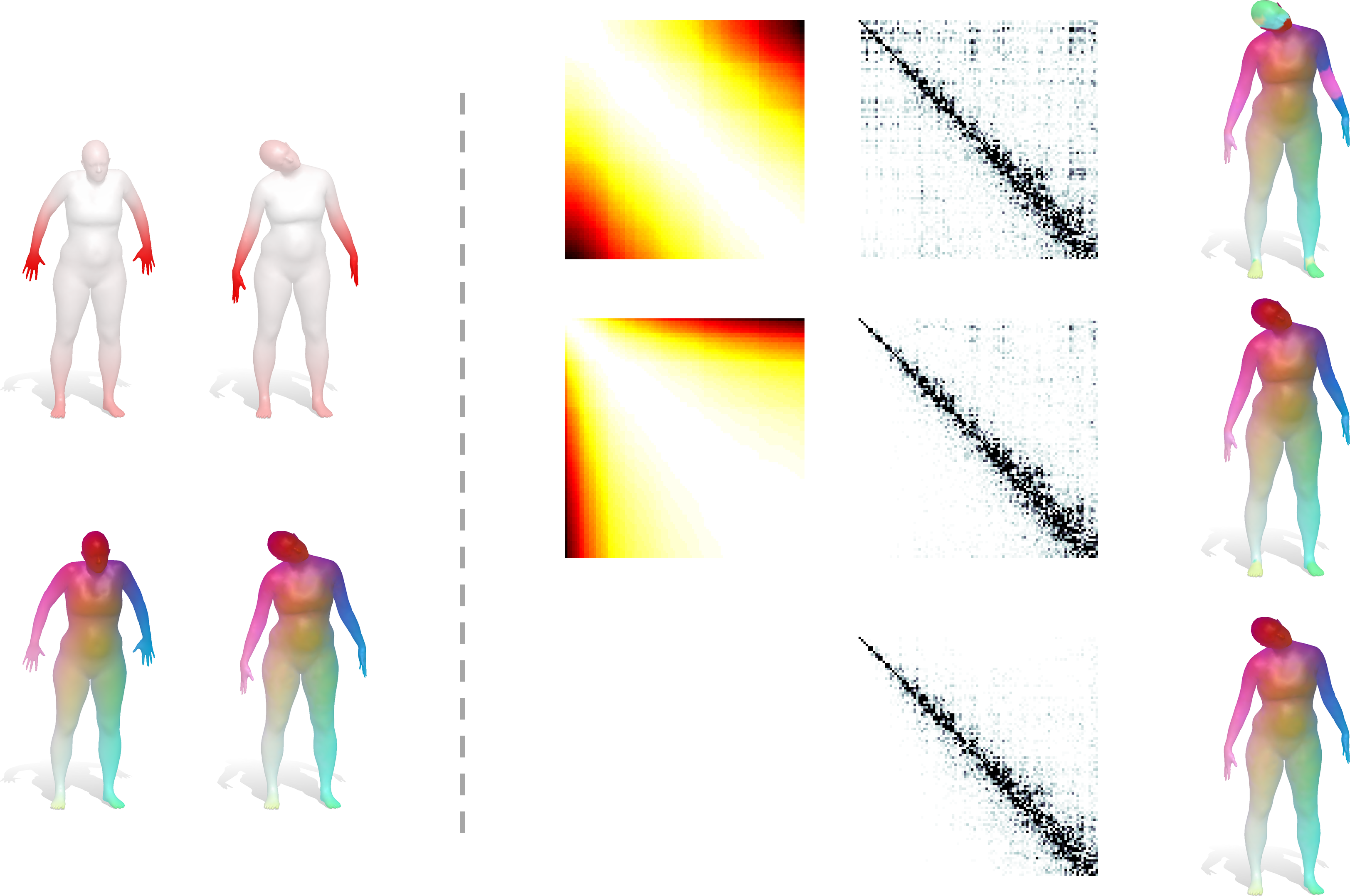}
  \put(2,60){Given descriptor}
  \put(2,30){Ground-truth map}
  \put(101,21){\rotatebox{-90}{Ground-truth}}
  \put(101,41){\rotatebox{-90}{\textbf{Ours}}}
  \put(101,63){\rotatebox{-90}{Standard}}
  
  \put(45, 68.5){Mask}
  \put(64, 70){Functional}
  \put(68, 67){map}
  \put(88, 70){Pointwise}
  \put(92, 67) {map}
  \end{overpic}
\caption{Given a single pair of WKS descriptors, we optimize a 100-by-100 functional map using the 
standard mask and our resolvent mask and compare to ground truth. The converted point-wise maps are shown on the right. We can see that the functional map stemming from our resolvent mask has less noise than the functional map computed with the standard mask. Also, our resolvent mask leads to a 
point-wise map with better quality. }
\label{fig:mtd:example}
\vspace{-0.3cm}
\end{figure}

Our main goals are to analyze the commonly-used functional map regularization term, to bring attention to some of its
theoretical limitations, and to propose a novel regularizer with better theoretical properties, which lead to practical
improvements. Therefore, in this in this section, we first consider the standard Laplacian-commutativity term, and then
introduce our new regularizer based on the resolvent operator of the Laplacian.

\subsection{Reformulation of the Laplacian-Commutativity term}
As mentioned above, the Laplacian-commutativity term given in Eq. \eqref{eq:energy:laplacian_comm} was first introduced
to promote approximately isometric correspondences. If the functional map is expressed in the basis given by the
eigenfunctions of the Laplace-Beltrami operator, and letting $\Lambda_1$ and $\Lambda_2$ represent vectors that store
the eigenvalues of the Laplacians of shape $S_1$ and $S_2$ respectively, this term can be equivalently reformulated as:
\begin{equation} \label{eq:comm_as_mask}
\begin{split}
E_{\text{comm}}\big(\matr{C}_{12}\big) 
 &= \big\Vert \matr{C}_{12}\Delta_1 - \Delta_2\matr{C}_{12} \big\Vert^2 \\
 &= \big\Vert\matr{C}_{12}\text{diag}\big(\Lambda_1\big) - \text{diag}\big(\Lambda_2\big)\matr{C}_{12}\big\Vert^2\\
 &= \big\Vert \matr{C}_{12} \odot \big(\mathbf{1}_{k_2}\Lambda_1\tran\big) - \big(\Lambda_2\mathbf{1}_{k_1}\tran\big)\odot \matr{C}_{12} \big\Vert^2\\
 &= \big\Vert \big(\mathbf{1}_{k_2}\Lambda_1\tran - \Lambda_2\mathbf{1}_{k_1}\tran \big) \odot \matr{C}_{12}\big\Vert^2\\
 &\triangleq E_{\text{mask}}\big(\matr{C}_{12}\big) = \sum_{ij} \left[\matr{M}_{LB} \right]_{ij} \left[ \matr{C}_{12} \right]_{ij}^2~, 
\end{split}\end{equation}
where $\odot$ is the entry-wise matrix multiplication, $\mathbf{1}_{k}$ is a $k$-dim all-ones vector and $[A]_{ij}$ denotes the $(i,j)$ entry of a matrix $A$. 

In other words, the regularization term  $E_\text{comm}$ can be written as a product between the matrix $M_{LB}$, which
we call the penalty \emph{mask} matrix and the squares of the entries of the unknown functional map $C_{12}$. In the case of $E_\text{comm}$, the
entries of $M_{LB}$ are given as $M_{LB}(i,j) = (\Lambda_2(i)-\Lambda_1(j))^2$. Fig.~\ref{fig:mtd:example} shows a heat
map of this matrix (see the first row in the ``Mask'' column).

Unfortunately, this simple formulation has certain fundamental undesirable properties. In particular, in the case of smooth surfaces, Laplace-Beltrami operators are \emph{unbounded} operators on square-integrable functions \cite{minakshisundaram1949some}.
Consequently, in general, the energy term $\| C_{12}\Delta_{1} - \Delta_{2}C_{12} \|^{2}$ is undefined on smooth surfaces.
As a simple example, consider the situation where $C_{12} = Id$, the identity operator, and the two surfaces are scaled versions of each other (i.e., $\Delta_2 = c\Delta_1$ for some constant $c \neq 1$). In this case, $\| C_{12}\Delta_{1} - \Delta_{2}C_{12} \|^{2} = |1-c|^2 \| \Delta_1 \|^2$ is infinite. 

That being said, the ill-definiteness of $\| C_{12}\Delta_{1} - \Delta_{2}C_{12} \|^{2}$ is not a mere question of scale. Recall that, by Weyl's law \cite{dodziuk1981eigenvalues}, large Laplacian eigenvalues of surfaces can be estimated in terms of the surface area $S$ as follows:
\begin{equation}
    \lambda_k \sim \frac{4 \pi}{S}k~.
\end{equation}
\noindent Thus, by rescaling the surfaces such that their areas match guarantees that their eigenvalues have comparable
asymptotic growth. This, however, is not sufficient to make $\| C_{12}\Delta_{1} - \Delta_{2}C_{12} \|^{2}$ well-defined. This can be seen from another simple, if slightly artificial, example. Consider a sphere ($\mathcal{M}_1$) and a flat square torus ($\mathcal{M}_2$), both of unit
area. While these surfaces are not diffeomorphic, they are among the few whose Laplace-Beltrami eigenvalues can be explicitly computed (see \cite{sauvigny2006partial1,sauvigny2006partial2} among many others), which is why we consider them here. Said spectra, as well as the corresponding Weyl estimate are illustrated \hl{in} Fig. \ref{fig:mtd:sphere_torus}.
For simplicity's sake, we once again pick a functional map of the form $C_{12} = Id$. 
We illustrate the spectra, as well as the corresponding $\| C_{12}\Delta_{1} - \Delta_{2}C_{12} \|^{2}$ on Fig.  \ref{fig:mtd:spectra_combined}. Notice the divergence of the Laplacian commutator energy. On the same figure, we illustrate its proposed replacement, defined in the next section. Note its rapid convergence.

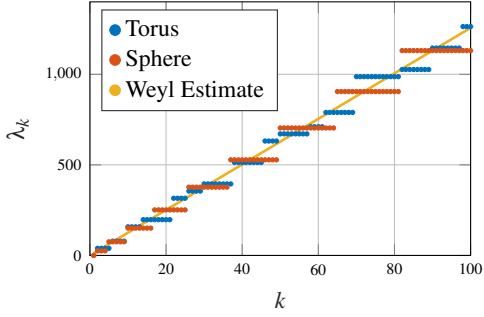
\begin{figure}[!t]
\centering
%
%
\definecolor{mycolor1}{rgb}{0.00000,0.44700,0.74100}%
\definecolor{mycolor2}{rgb}{0.85000,0.32500,0.09800}%
\definecolor{mycolor3}{rgb}{0.92900,0.69400,0.12500}%
\pgfplotsset{
compat=1.11,
legend image code/.code={
\draw[mark repeat=2,mark phase=2]
plot coordinates {
(0cm,0cm)
(0.15cm,0cm)        
(0.3cm,0cm)         
};%
}
}
\begin{tikzpicture}

\begin{axis}[%
width=0.6\linewidth,
height=0.4\linewidth,
at={(0.772in,0.516in)},
scale only axis,
xmin=0,
xmax=100,
xlabel style={font=\color{white!15!black}},
xlabel={$k$},
ymin=0,
ymax=1400,
ylabel style={font=\color{white!15!black}},
ylabel={$\lambda_k$},
xmajorgrids,
ymajorgrids,
every x tick label/.append style={font=\color{black}, font=\tiny},
every y tick label/.append style={font=\color{black}, font=\tiny},
axis background/.style={fill=white},
legend style={at={(0.03,0.97)}, anchor=north west, legend cell align=left, align=left, draw=white!15!black}
]
\addlegendimage{only marks, mark=*,  mark options={solid, mycolor1}}
\addlegendimage{only marks, mark=*,  mark options={solid, mycolor2}}
\addlegendimage{only marks, mark=*,  mark options={solid, mycolor3}}
\addplot [color=mycolor1, draw=none, mark size=0.8pt, mark=*, mark options={solid, mycolor1}]
  table[row sep=crcr]{%
1	0\\
2	39.4784176043574\\
3	39.4784176043574\\
4	39.4784176043574\\
5	39.4784176043574\\
6	78.9568352087149\\
7	78.9568352087149\\
8	78.9568352087149\\
9	78.9568352087149\\
10	157.91367041743\\
11	157.91367041743\\
12	157.91367041743\\
13	157.91367041743\\
14	197.392088021787\\
15	197.392088021787\\
16	197.392088021787\\
17	197.392088021787\\
18	197.392088021787\\
19	197.392088021787\\
20	197.392088021787\\
21	197.392088021787\\
22	315.827340834859\\
23	315.827340834859\\
24	315.827340834859\\
25	315.827340834859\\
26	355.305758439217\\
27	355.305758439217\\
28	355.305758439217\\
29	355.305758439217\\
30	394.784176043574\\
31	394.784176043574\\
32	394.784176043574\\
33	394.784176043574\\
34	394.784176043574\\
35	394.784176043574\\
36	394.784176043574\\
37	394.784176043574\\
38	513.219428856647\\
39	513.219428856647\\
40	513.219428856647\\
41	513.219428856647\\
42	513.219428856647\\
43	513.219428856647\\
44	513.219428856647\\
45	513.219428856647\\
46	631.654681669719\\
47	631.654681669719\\
48	631.654681669719\\
49	631.654681669719\\
50	671.133099274076\\
51	671.133099274076\\
52	671.133099274076\\
53	671.133099274076\\
54	671.133099274076\\
55	671.133099274076\\
56	671.133099274076\\
57	671.133099274076\\
58	710.611516878434\\
59	710.611516878434\\
60	710.611516878434\\
61	710.611516878434\\
62	789.568352087149\\
63	789.568352087149\\
64	789.568352087149\\
65	789.568352087149\\
66	789.568352087149\\
67	789.568352087149\\
68	789.568352087149\\
69	789.568352087149\\
70	986.960440108936\\
71	986.960440108936\\
72	986.960440108936\\
73	986.960440108936\\
74	986.960440108936\\
75	986.960440108936\\
76	986.960440108936\\
77	986.960440108936\\
78	986.960440108936\\
79	986.960440108936\\
80	986.960440108936\\
81	986.960440108936\\
82	1026.43885771329\\
83	1026.43885771329\\
84	1026.43885771329\\
85	1026.43885771329\\
86	1026.43885771329\\
87	1026.43885771329\\
88	1026.43885771329\\
89	1026.43885771329\\
90	1144.87411052637\\
91	1144.87411052637\\
92	1144.87411052637\\
93	1144.87411052637\\
94	1144.87411052637\\
95	1144.87411052637\\
96	1144.87411052637\\
97	1144.87411052637\\
98	1263.30936333944\\
99	1263.30936333944\\
100	1263.30936333944\\
};
\addlegendentry{Torus}

\addplot [color=mycolor2, draw=none, mark size=0.8pt, mark=*, mark options={solid, mycolor2}]
  table[row sep=crcr]{%
1	0\\
2	25.1327412287183\\
3	25.1327412287183\\
4	25.1327412287183\\
5	75.398223686155\\
6	75.398223686155\\
7	75.398223686155\\
8	75.398223686155\\
9	75.398223686155\\
10	150.79644737231\\
11	150.79644737231\\
12	150.79644737231\\
13	150.79644737231\\
14	150.79644737231\\
15	150.79644737231\\
16	150.79644737231\\
17	251.327412287183\\
18	251.327412287183\\
19	251.327412287183\\
20	251.327412287183\\
21	251.327412287183\\
22	251.327412287183\\
23	251.327412287183\\
24	251.327412287183\\
25	251.327412287183\\
26	376.991118430775\\
27	376.991118430775\\
28	376.991118430775\\
29	376.991118430775\\
30	376.991118430775\\
31	376.991118430775\\
32	376.991118430775\\
33	376.991118430775\\
34	376.991118430775\\
35	376.991118430775\\
36	376.991118430775\\
37	527.787565803085\\
38	527.787565803085\\
39	527.787565803085\\
40	527.787565803085\\
41	527.787565803085\\
42	527.787565803085\\
43	527.787565803085\\
44	527.787565803085\\
45	527.787565803085\\
46	527.787565803085\\
47	527.787565803085\\
48	527.787565803085\\
49	527.787565803085\\
50	703.716754404114\\
51	703.716754404114\\
52	703.716754404114\\
53	703.716754404114\\
54	703.716754404114\\
55	703.716754404114\\
56	703.716754404114\\
57	703.716754404114\\
58	703.716754404114\\
59	703.716754404114\\
60	703.716754404114\\
61	703.716754404114\\
62	703.716754404114\\
63	703.716754404114\\
64	703.716754404114\\
65	904.77868423386\\
66	904.77868423386\\
67	904.77868423386\\
68	904.77868423386\\
69	904.77868423386\\
70	904.77868423386\\
71	904.77868423386\\
72	904.77868423386\\
73	904.77868423386\\
74	904.77868423386\\
75	904.77868423386\\
76	904.77868423386\\
77	904.77868423386\\
78	904.77868423386\\
79	904.77868423386\\
80	904.77868423386\\
81	904.77868423386\\
82	1130.97335529233\\
83	1130.97335529233\\
84	1130.97335529233\\
85	1130.97335529233\\
86	1130.97335529233\\
87	1130.97335529233\\
88	1130.97335529233\\
89	1130.97335529233\\
90	1130.97335529233\\
91	1130.97335529233\\
92	1130.97335529233\\
93	1130.97335529233\\
94	1130.97335529233\\
95	1130.97335529233\\
96	1130.97335529233\\
97	1130.97335529233\\
98	1130.97335529233\\
99	1130.97335529233\\
100	1130.97335529233\\
};
\addlegendentry{Sphere}

\addplot [color=mycolor3, line width=1.0pt]
  table[row sep=crcr]{%
1	12.5663706143592\\
2	25.1327412287183\\
3	37.6991118430775\\
4	50.2654824574367\\
5	62.8318530717959\\
6	75.398223686155\\
7	87.9645943005142\\
8	100.530964914873\\
9	113.097335529233\\
10	125.663706143592\\
11	138.230076757951\\
12	150.79644737231\\
13	163.362817986669\\
14	175.929188601028\\
15	188.495559215388\\
16	201.061929829747\\
17	213.628300444106\\
18	226.194671058465\\
19	238.761041672824\\
20	251.327412287183\\
21	263.893782901543\\
22	276.460153515902\\
23	289.026524130261\\
24	301.59289474462\\
25	314.159265358979\\
26	326.725635973338\\
27	339.292006587698\\
28	351.858377202057\\
29	364.424747816416\\
30	376.991118430775\\
31	389.557489045134\\
32	402.123859659494\\
33	414.690230273853\\
34	427.256600888212\\
35	439.822971502571\\
36	452.38934211693\\
37	464.955712731289\\
38	477.522083345649\\
39	490.088453960008\\
40	502.654824574367\\
41	515.221195188726\\
42	527.787565803085\\
43	540.353936417444\\
44	552.920307031804\\
45	565.486677646163\\
46	578.053048260522\\
47	590.619418874881\\
48	603.18578948924\\
49	615.752160103599\\
50	628.318530717959\\
51	640.884901332318\\
52	653.451271946677\\
53	666.017642561036\\
54	678.584013175395\\
55	691.150383789754\\
56	703.716754404114\\
57	716.283125018473\\
58	728.849495632832\\
59	741.415866247191\\
60	753.98223686155\\
61	766.54860747591\\
62	779.114978090269\\
63	791.681348704628\\
64	804.247719318987\\
65	816.814089933346\\
66	829.380460547705\\
67	841.946831162065\\
68	854.513201776424\\
69	867.079572390783\\
70	879.645943005142\\
71	892.212313619501\\
72	904.77868423386\\
73	917.34505484822\\
74	929.911425462579\\
75	942.477796076938\\
76	955.044166691297\\
77	967.610537305656\\
78	980.176907920015\\
79	992.743278534375\\
80	1005.30964914873\\
81	1017.87601976309\\
82	1030.44239037745\\
83	1043.00876099181\\
84	1055.57513160617\\
85	1068.14150222053\\
86	1080.70787283489\\
87	1093.27424344925\\
88	1105.84061406361\\
89	1118.40698467797\\
90	1130.97335529233\\
91	1143.53972590668\\
92	1156.10609652104\\
93	1168.6724671354\\
94	1181.23883774976\\
95	1193.80520836412\\
96	1206.37157897848\\
97	1218.93794959284\\
98	1231.5043202072\\
99	1244.07069082156\\
100	1256.63706143592\\
};
\addlegendentry{Weyl Estimate}

\end{axis}
\end{tikzpicture}%
\caption{The spectra of a sphere and a square torus of unit area, as well as the corresponding Weyl estimate. }
\label{fig:mtd:sphere_torus}
\end{figure}


In addition to being ill-defined in the continuous setting, the Laplacian commutativity mask has another significant
problem. Namely, it penalizes the high frequency entries of functional maps in the same way as the low frequency ones,
despite the greater instability of the former, and in this way fails to exhibit the funnel-like structure observed in
ground-truth functional maps (see e.g. Fig.~\ref{fig:mtd:example} above or Figure 4 in
\cite{ovsjanikov2012functional}). Recall that a mask serves as a penalty during the optimization of the functional map
matrix.  Thus, the mask and the ground truth functional map should be complementary, in the sense that for the regions
of ground-truth functional maps with smaller (resp. large) values, the mask should add more (resp. less) penalty.

We illustrate this phenomenon in Fig.~\ref{fig:res:gt_and_masks}, which shows the average of the squared values of functional maps constructed from the ground-truth correspondences using 250 eigenfunctions over 100 FAUST non-isometric shape pairs. We then compare its structure to the Laplacian commutativity mask (labeled ``Standard''), averaged over the same shape pairs. We perform the same computation for the heuristic slanted mask introduced in \cite{rodola2017partial} and, finally, for our proposed resolvent-based mask defined in the next section. Notice that our mask better reproduces the funnel-like structure of the ground-truth maps.

\begin{figure}[!t]
    \centering
    \input{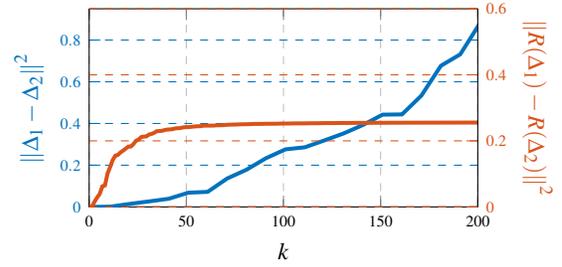}
    \caption{The Laplacian and Resolvent commutator energies for $C_{12} = Id$ computed using the lowest $k$ eigenvalues
      of the unit area sphere and torus (see Fig. \ref{fig:mtd:sphere_torus}). Note the rapid convergence of $\|
      C_{12}R(\Delta_{1}) - R(\Delta_{2})C_{12} \|^{2}$ and the divergence of $\| C_{12}\Delta_{1} - \Delta_{2}C_{12}
      \|^{2}$. The spectra are rescaled with respect to the largest eigenvalue with $k = 100$, as described in
      Eq.~\eqref{eq:mtd:rescaling}.} 
    \label{fig:mtd:spectra_combined}
    \vspace{-0.3cm}
\end{figure}


\subsection{A novel regularization based on the Resolvent}

In this section, we propose an alternative form for the Laplacian commutativity regularizer that overcomes the problems identified in the previous section. The resulting regularizer is better from both the theoretical and practical standpoints.

The first issue of the original Laplacian-commutativity term arises from the fact that the Laplace-Beltrami operator is
unbounded. We thus propose to replace it with a meaningful bounded operator, namely its
resolvent. Below, we give a brief overview and refer the reader to \cite{reedsimon1} for the detailed
functional-analytic underpinnings of the following discussion.

\myparagraph{Resolvent} Let $A: \mathcal{D} \to \mathcal{H}$ be a densely-defined closed  operator on some Hilbert space $\mathcal{H}$ with domain $\mathcal{D} \subset \mathcal{H}$. Let $\rho(A) \subset \mathbb{C}$ be the set of all complex numbers $\mu$ such that $R_{\mu}(A) = \left( A - \mu Id \right)^{-1}$ is defined and bounded. The set $\rho(A)$ and the operator $R_{\mu}(A)$ are known as the resolvent set of $A$ and the resolvent (operator) of $A$ at $\mu$, respectively. The resolvent set $\rho(A)$ is the complement of the spectrum of $A$ in the complex plane.

The general idea of considering the resolvent of an unbounded operator rather than the operator itself comes from the
notion of norm-resolvent convergence, which is used to study the convergence of unbounded self-adjoint
operators \cite{reedsimon1}. In that sense, our choice to use the resolvent of the Laplace-Beltrami operator is a natural one. Note that here, closedness is a technical condition used in the definition of the resolvent. In particular, it is satisfied by self-adjoint operators, such as the ones we consider.

Before proceeding further, we slightly generalize our problem. In what follows we will consider powers of the Laplacian
rather than the Laplacian itself. Specifically, we will use $\Delta^{\gamma}$ for some $\gamma > 0$ rather than
$\Delta$. As explained below, the parameter $\gamma$ controls the funnel-like structure of the resulting mask. Thus, this parameter takes on an important role in the numerical tests reported later in this paper.

Now since the Laplace-Beltrami operator $\Delta$ is positive and self-adjoint, its spectrum is contained in the non-negative real line. The same also holds for $\Delta^\gamma$. Thus, we are guaranteed that $R_{\mu}(\Delta)$ is a well-defined bounded operator for any complex $\mu$ not in the non-negative real line.
 
\noindent For our purposes, the resolvent of $\Delta^{\gamma}$ is expressed as
\begin{equation}
    R(\Delta^{\gamma}) = \left( \Delta^{\gamma} - (a + i b) Id \right)^{-1}~,
\end{equation}

\noindent where $i$ is the imaginary unit and $a,b \in \mathbb{R}$. Our proposal is to use the resolvent of the Laplacian instead of the Laplacian in the commutator term. We thus define a new energy term,
\begin{equation}
\label{eq:res_penalty}
    E_{\text{resolvent}}\big(\matr{C}_{12}\big) = \big\Vert \matr{C}_{12} R \left( \Delta_1^{\gamma} \right) - R \left( \Delta_2^{\gamma} \right) \matr{C}_{12} \big\Vert^2~.
\end{equation}

\noindent Before proceeding further, recall that we only need to consider bounded functional maps $C_{12}$, since functional maps that arise as pullbacks of \hl{diffeomorphisms} are bounded. This last fact is shown in appendix \ref{appendixPullback}, for completeness.
Our proposal is motivated by the following result.


\begin{figure}[!t]
    \vspace{12pt}
    \centering
    \begin{overpic}
    [trim=0cm 0cm 0cm 0cm,clip,width=1\columnwidth, grid=false]{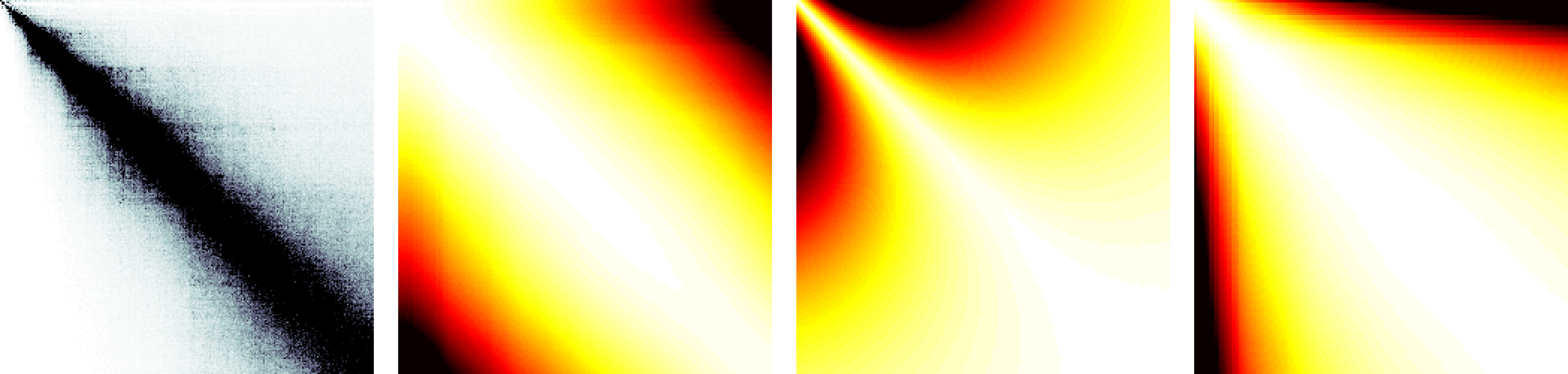}
    \put(3,27){\footnotesize{Ground-truth}}
    \put(28,27){\footnotesize{Standard mask}}
    \put(53,27){\footnotesize{Slanted mask}}
    \put(81,27){\footnotesize{Our mask }}
  \end{overpic}
    \caption{The average squared ground-truth functional map and three penalty masks over 100 FAUST non-isometric shape pairs.
  }
    \label{fig:res:gt_and_masks}
    \vspace{-0.3cm}
\end{figure}

\begin{theorem}[Bounded Resolvent Commutativity]\label{Thm:BoundedResolvent}
Let $C_{12}$ be a bounded functional map. Then, in the operator norm,
\begin{equation}
    \left\| \matr{C}_{12} R \left( \Delta_1^{\gamma} \right) - R \left( \Delta_2^{\gamma} \right) \matr{C}_{12}  \right\|^{2}_{op} < \infty~.
\end{equation}
\end{theorem}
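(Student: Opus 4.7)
The plan is to reduce the commutator bound to two independent boundedness statements using only elementary operator-norm inequalities, and then invoke the spectral theorem on each piece. By the triangle inequality and submultiplicativity of the operator norm,
\begin{equation*}
\bigl\| C_{12} R(\Delta_1^{\gamma}) - R(\Delta_2^{\gamma}) C_{12} \bigr\|_{op}
\;\leq\; \|C_{12}\|_{op}\,\|R(\Delta_1^{\gamma})\|_{op} + \|R(\Delta_2^{\gamma})\|_{op}\,\|C_{12}\|_{op}.
\end{equation*}
Since $C_{12}$ is bounded by hypothesis (appealing to the result from the appendix that pullbacks of diffeomorphisms are bounded), it suffices to show that each $\|R(\Delta_i^{\gamma})\|_{op}$ is finite; squaring then yields the desired finiteness of the commutator in the theorem statement.

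To control $\|R(\Delta_i^{\gamma})\|_{op}$, I would first note that $\Delta_i$ is positive and self-adjoint, so by the Borel functional calculus $\Delta_i^{\gamma}$ is also self-adjoint with spectrum $\sigma(\Delta_i^{\gamma}) = \{\lambda^{\gamma} : \lambda \in \sigma(\Delta_i)\} \subset [0,\infty)$. The choice $\mu = a+ib \notin [0,\infty)$ made in the definition of $R(\Delta_i^{\gamma})$ then places $\mu$ in the resolvent set $\rho(\Delta_i^{\gamma})$, so the operator $(\Delta_i^{\gamma} - \mu \, Id)^{-1}$ is bounded by the very definition of $\rho$ recalled in the paper. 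Quantitatively, the spectral theorem gives
\begin{equation*}
\|R(\Delta_i^{\gamma})\|_{op} \;=\; \sup_{\lambda \in \sigma(\Delta_i^{\gamma})} \frac{1}{|\lambda - \mu|} \;\leq\; \frac{1}{\operatorname{dist}\bigl(\mu, [0,\infty)\bigr)} \;<\; \infty,
\end{equation*}
which makes the dependence on the parameter $\mu$ fully explicit and, incidentally, shows how the energy behaves under the choice of $a$ and $b$.

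In summary, the result follows from just two facts: boundedness of $C_{12}$ (given) and boundedness of the resolvent of a self-adjoint operator at a point off its spectrum (standard functional analysis, cited from Reed--Simon). There is no real obstacle here; the only point requiring a line of justification is that raising $\Delta_i$ to the power $\gamma$ preserves self-adjointness and keeps the spectrum in $[0,\infty)$, which is immediate from the functional calculus. This is precisely the contrast with the original Laplacian commutativity energy, where the unboundedness of $\Delta_i$ itself obstructs any such bound and was the motivating defect addressed by replacing $\Delta_i$ with $R(\Delta_i^{\gamma})$.
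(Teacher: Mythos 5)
Your proposal is correct and follows essentially the same route as the paper: boundedness of $C_{12}$ together with boundedness of $R(\Delta_i^{\gamma})$ (since $\mu = a+ib$ lies off the non-negative real axis containing the spectrum), combined via the fact that products and sums of bounded operators are bounded. The only addition is your explicit quantitative bound $\|R(\Delta_i^{\gamma})\|_{op} \leq 1/\operatorname{dist}(\mu,[0,\infty))$ from the spectral theorem, which the paper leaves implicit but which is a nice touch for making the dependence on $a,b$ visible.
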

\begin{proof}
The result directly follows from the fact that products and linear combinations of bounded operators are bounded. As defined above, $R(\Delta^{\gamma})$ is bounded, which concludes the proof.
\end{proof}

As explained above, there is no analogous guarantee for the commonly used Laplacian commutativity, since the Laplace-Beltrami operator itself is unbounded. In contrast, our resolvent-based commutativity is always well-defined, and leads to a commutator with bounded operator norm, even in the case of smooth surfaces.

\hl{
Notice that Theorem {\ref{Thm:BoundedResolvent}} holds in the operator norm, rather than the more convenient Frobenius norm, which we use to define the energy in Equation {\eqref{eq:res_penalty}}. Our usage of the Frobenius norm can be justified in two ways. First, a Frobenius norm version of Theorem {\ref{Thm:BoundedResolvent}} holds for $\gamma >1/2$. This is shown in Lemma {\ref{LemmaBoundedEnergy}} of Appendix {\ref{appendixHilbertSchmidt}}. Thus, for $\gamma > 1/2$, the usage of the Frobenius norm is perfectly justified in the smooth setting. For lower values of $\gamma$, we invoke the fact that, in practice, we work on a finite dimensional vector space. Since all norms are equivalent on finite dimensional vector spaces, we are free to replace the operator norm with the Frobenius one. In that case, we lose the guarantee that the energy $E_{resolvent}$ makes sense in the full Laplace-Beltrami basis in the smooth setting. Yet, as shown in the experiments reported below, this does not seem to cause any issue.}


 Similarly to the usual Laplacian commutativity term $E_{\text{comm}}$, $E_{\text{resolvent}}$ can also be expressed as a mask matrix. This can be done as follows. Notice that $R(\Delta^{\gamma})$ is diagonal in the Laplacian eigenbasis. Moreover, if $\Delta$ has eigenvalues $\{ \lambda_{n} \}_{n=0}^{\infty}$, then, by matrix inversion, the eigenvalues $\{ r_n \}_{n=0}^{\infty}$ of $R(\Delta^{\gamma})$ will be given by:
 
\begin{equation} \label{eq:mtd:resolvent_eigenvalues}
    r_n = \frac{1}{\lambda^{\gamma}_n - a - ib}  = \frac{\lambda^{\gamma}_n - a}{(\lambda^{\gamma}_n - a)^2 + b^2} + \frac{b}{(\lambda^{\gamma}_n - a)^2 + b^2}i~.
\end{equation}
 
\noindent In practice, we choose $a = 0$ and $b = 1$, which yields 
 
 \begin{equation}
    r_n = \frac{\lambda^{\gamma}_n}{(\lambda^{\gamma}_n )^2 + 1} + \frac{1}{(\lambda^{\gamma}_n )^2 + 1}i~.
\end{equation}

\noindent Since the square of the Frobenius norm independently considers the real and imaginary parts of a matrix, we can re-express Equation \eqref{eq:res_penalty} in terms of two new matrices $M_{Re}$ and $M_{Im}$, defined below.

\begin{equation}
\begin{aligned}
    E_{\text{resolvent}}\big(\matr{C}_{12}\big) &= \big\Vert \matr{M_{Re}} \odot \matr{C}_{12}\big\Vert^2 + \big\Vert \matr{M_{Im}} \odot \matr{C}_{12}\big\Vert^2\\
    &= \sum_{ij} \left[M_{Re} \right]_{ij}^2 \left[ C_{12} \right]_{ij}^2 + \sum_{ij} \left[M_{Im} \right]_{ij}^2 \left[ C_{12} \right]_{ij}^2\\
    &= \sum_{ij} \left( \left[\matr{M}_{Re}\right]_{ij}^2 + \left[M_{Im} \right]_{ij}^2 \right) \left[ C_{12} \right]_{ij}^2
\end{aligned}
\label{eq:mtd:resolvent2masks}
\end{equation}

\noindent The matrices $\matr{M}_{Re}$ and $\matr{M}_{Im}$ correspond to the real and imaginary parts of the eigenvalues of the resolvent, respectively. Explicitly, these matrices are given by

\begin{align}
    \label{eq:mtd:def:mask_real} &\matr{M}_{Re}(i,j) = \frac{\Lambda_2(i)^{\gamma}}{\Lambda_2(i)^{2\gamma} + 1} - \frac{\Lambda_1(j)^{\gamma}}{\Lambda_1(j)^{2\gamma} + 1}\\
    \label{eq:mtd:def:mask_imag} &\matr{M}_{Im}(i,j) =  \frac{1}{\Lambda_2(i)^{2\gamma} + 1} - \frac{1}{\Lambda_1(j)^{2\gamma} + 1}
\end{align}

\noindent These matrices are not quite masks in the sense of Eq. \eqref{eq:comm_as_mask}. As per Equation \eqref{eq:mtd:resolvent2masks}, the above two matrices can be combined into a single mask 
\begin{equation}
    \label{eq:mtd:res_mask} \matr{M}_{res}(i,j) = \matr{M}_{Re}(i,j)^2 + \matr{M}_{Im}(i,j)^2~.
\end{equation}

\noindent The split of $\matr{M}_{res}$ into $\matr{M}_{Re}$ and $\matr{M}_{Im}$ will be \hl{revisited} later, when we explore beyond the established theory and consider a mask constructed from weighted combinations of these two matrices.\\

\subsection{Rescaling the Spectra}

As it stands now, we observe that in practice the mask defined in equation \eqref{eq:mtd:res_mask} decays too quickly as the Laplacian eigenvalues grow.
\hl{In other words, the mask is not sufficiently sensitive to the higher frequencies. This is due to the scale introduced by the parameter $b$ found in the definition of the resolvent. There are a few equivalent ways of addressing this issue.}


 \hl{Our approach is as follows.} We begin by computing the $k$ lowest eigenvalues of the considered Laplacians. Then, both spectra are rescaled according to the rule

\begin{equation} \label{eq:mtd:rescaling}
    \Lambda_{i} ~\longmapsto~ \frac{\Lambda_{i}}{ \max \big( \max(\Lambda_1), \max(\Lambda_2) \big)}~.
\end{equation}

\noindent In other words, the spectra of the source and target shape are rescaled by the same factor, such that the largest considered eigenvalue (over both spectra) becomes equal to $1$. From Equation \eqref{eq:mtd:resolvent_eigenvalues}, one can see that this rescaling can be absorbed into a choice of $b$ and a change in the weight of the resolvent energy in the overall energy. Thus, conceptually, the spectral rescaling is equivalent to a choice of the parameter $b$ used in the definition of the resolvent. Alternatively, recall that rescaling the spectrum is equivalent to rescaling the surface. 
\hl{Consequently, this rescaling does not affect the theoretical guarantees of the previous section.}

\noindent Returning to the previously discussed example of the sphere and torus of unit area (see Fig.~\ref{fig:mtd:sphere_torus}), we illustrate the resolvent energy computed by the above procedure in Fig.~\ref{fig:mtd:spectra_combined}. Note the rapid convergence of the resolvent commutator energy (in red) and the divergence of the Laplacian commutator energy (in blue).

\subsection{Mask Structure as a Function of $\gamma$} \label{sec:mask_structure}
\begin{figure}[!t]
\centering
\includegraphics[width = 0.98\columnwidth]{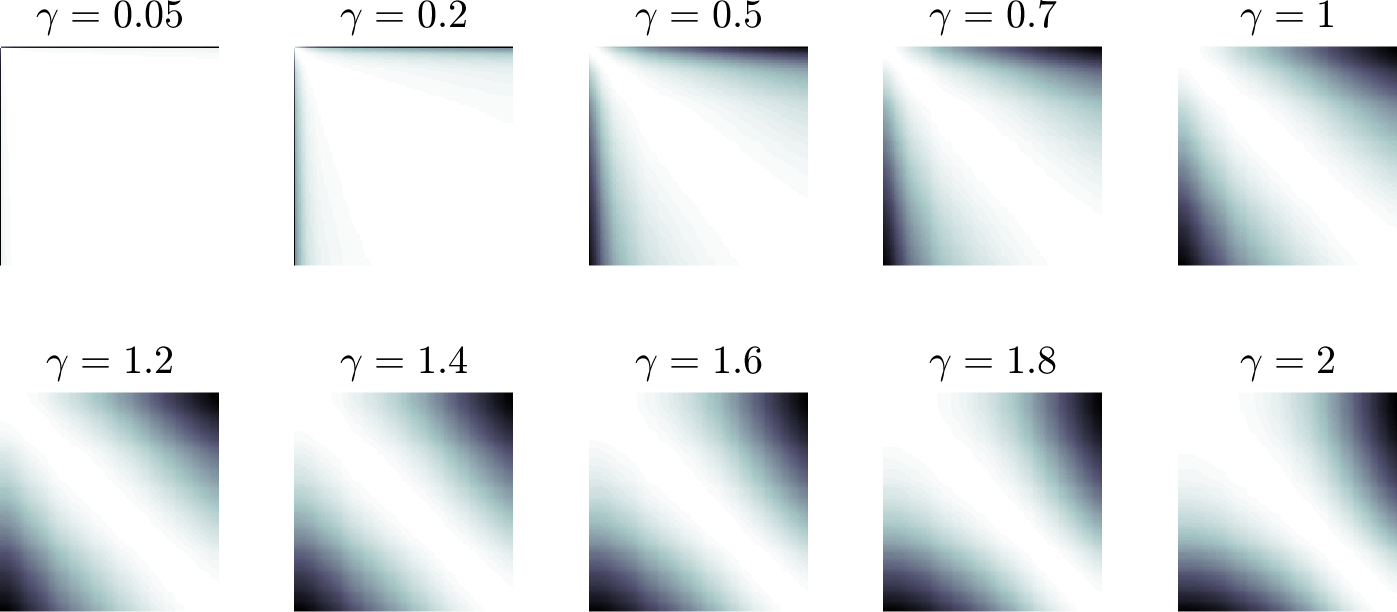}
\caption{The resolvent mask introduced in Eq.~\eqref{eq:mtd:res_mask} with different $\gamma$. The darker the region, the larger the penalty. Here, $\Delta_1$ and $\Delta_2$ are near-isospectral, which explains the approximately symmetric structure of the mask.}
\label{fig:mtd:mask_gamma}
\vspace{-0.3cm}
\end{figure}

Having defined the way to compute the resolvent mask, we are now ready to explore the way in which tuning the parameter $\gamma$ controls its the funnel-like structure.

Fig.~\ref{fig:mtd:mask_gamma} illustrates the resolvent mask for different values of $\gamma$. The darker the region, the more penalized the corresponding entry in the functional map $C_{12}$ would be. Notice that the funnel-like structure of the mask changes with different values of $\gamma$.

From Fig. \ref{fig:mtd:mask_gamma}, we see that the behaviour of the mask as a function of $\gamma$ can be separated into two regimes.
The first  corresponds to $\gamma \in (0,1]$. There, increasing $\gamma$ results in a narrowing of the funnel-like shape of the mask. Thus, the larger $\gamma$ is, the more the mask penalizes functional maps that take eigenfunctions of $\Delta_{1}$ to eigenfunctions of $\Delta_{2}$ with distant eigenvalues. Correspondingly, small values of $\gamma$ are more lax in that regard, allowing for maps between eigenfunctions with quite different eigenvalues.
The former choice of $\gamma$ is appropriate when the shapes under consideration are approximately isometric, as one then expects the eigenfunctions and eigenvalues of both surfaces to be roughly the same. The latter choice is sound for shape pairs that are further away from isometry.

The second regime corresponds to $\gamma > 1$. There, we observe an inversion of the funnel-like structure, with the reverse funnel shape being more and more pronounced as $\gamma$ increases. This results in a low penalty for maps between low frequency eigenfunctions, which does not respect the shape of the ground-truth maps (see Fig. \ref{fig:res:gt_and_masks}).

Later in this paper, we report empirically obtained optimal values for $\gamma$ on a benchmark dataset (see Fig.~\ref{fig:res:diffSigma}).

\subsection{Proposed Functional Map Energy}
In summary, we propose the following energy to compute a functional map for a pair of shapes, with a set of given descriptors:
\begin{equation}
\label{eq:energy:complete}
    E\big(\matr{C}_{12}\big) = \alpha_1E_{\text{desc}} + \alpha_2 E_{\text{mult}} + \alpha_3 E_{\text{orient}} + \alpha_4 E_{\text{resolvent}}
\end{equation}
where $E_{\text{desc}}$ is the descriptor-preserving term defined in Sec.~\ref{sec:background}, $E_{\text{mult}}$ is the descriptor-commutativity term defined as $E_{\text{mult}} = \sum_{i} \big\Vert \matr{C}_{12}\matr{D}_{1i}^{\text{mult}} - \matr{D}_{2i}^{\text{mult}}\matr{C}_{12}\big\Vert^2$ introduced in~\cite{nogneng2017informative}, $E_{\text{orient}}$ is the orientation-preserving term introduced in~\cite{ren2018continuous}. $E_{\text{resolvent}}$ is the our new term introduced in Eq. \eqref{eq:res_penalty} and discussed above. As mentioned above, when functional maps are expressed in the Laplace-Beltrami eigenbasis, this term can be  written via a penalty using the mask matrix given in Eq. \eqref{eq:mtd:res_mask} .

Before proceeding to a more extensive evaluation of the proposed energy, in Fig.~\ref{fig:mtd:example} above, we provide an example of a functional map obtained using this energy and compare it to one obtained using the standard Laplacian commutator regularizer. Fig.~\ref{fig:mtd:example} also shows the resolvent mask with $\gamma = 0.5$ (second row), compared to the standard mask (first row).
The rightmost column shows the quality of pointwise maps, recovered from the functional maps (shown in the second rightmost column) using both our and the standard Laplacian regularizers.


\begin{figure*}[!ht]
\centering
%
%
\definecolor{mycolor1}{rgb}{0.00000,0.44706,0.74118}%
\definecolor{mycolor2}{rgb}{0.92941,0.69412,0.12549}%
\definecolor{mycolor3}{rgb}{0.85098,0.32549,0.09804}%
\pgfplotsset{
compat=1.11,
legend image code/.code={
\draw[mark repeat=2,mark phase=2]
plot coordinates {
(0cm,0cm)
(0.0cm,0cm)        
(0.3cm,0cm)         
};%
}
}
\begin{tikzpicture}

\begin{axis}[%
width=0.17\linewidth,
height=0.12\linewidth,
at={(2.706in,0.598in)},
scale only axis,
xmin=20,
xmax=250,
every x tick label/.append style={font=\color{black}, font=\tiny},
every y tick label/.append style={font=\color{black}, font=\tiny},
xlabel style={font=\color{white!15!black}},
xlabel={\small $k$: functional map size},
ymin=0.04,
ymax=0.22,
ylabel style={font=\color{white!15!black}},
ylabel={Average error},
axis background/.style={fill=white},
title style={font=\bfseries,align=center},
title={\small FAUST \\ \tiny per-vertex measure},
yticklabel style={
    /pgf/number format/fixed,
    /pgf/number format/precision=5
},
scaled y ticks=false,
xtick={50,100,150,200,250},
xmajorgrids,
ymajorgrids,
legend style={at={(0.68,0.65)}, anchor=south east, legend cell align=left, align=left, draw=white!15!black},
legend style={draw=none, legend columns=2, fill opacity=0.8, text opacity = 1, draw opacity=1},
legend style={inner sep=0pt}
]
\addlegendimage{only marks, mark=*,  mark options={solid, mycolor1}}
\addlegendimage{only marks, mark=*,  mark options={solid, mycolor2}}
\addlegendimage{only marks, mark=*,  mark options={solid, mycolor3}}

\addplot [color=mycolor1, line width=1.5pt]
  table[row sep=crcr]{%
20	0.1035\\
30	0.10912\\
40	0.086479\\
50	0.090658\\
60	0.098886\\
70	0.10486\\
80	0.12037\\
90	0.13908\\
100	0.14857\\
110	0.14193\\
120	0.1187\\
130	0.11486\\
140	0.12709\\
150	0.1359\\
160	0.14407\\
170	0.15719\\
180	0.17367\\
190	0.18612\\
200	0.19594\\
210	0.20476\\
220	0.19923\\
230	0.20094\\
240	0.20293\\
250	0.20507\\
};
\addlegendentry{\scriptsize standard}

\addplot [color=mycolor2, line width=1.5pt]
  table[row sep=crcr]{%
20	0.10037\\
30	0.08935\\
40	0.078217\\
50	0.077128\\
60	0.077742\\
70	0.080276\\
80	0.085478\\
90	0.087028\\
100	0.088431\\
110	0.08338\\
120	0.082206\\
130	0.084191\\
140	0.087536\\
150	0.092107\\
160	0.092546\\
170	0.095862\\
180	0.097079\\
190	0.098296\\
200	0.099022\\
210	0.10065\\
220	0.10135\\
230	0.10134\\
240	0.1016\\
250	0.10316\\
};
\addlegendentry{\scriptsize slanted}

\addplot [color=mycolor3, line width=1.5pt]
  table[row sep=crcr]{%
20	0.08889\\
30	0.082336\\
40	0.066915\\
50	0.063822\\
60	0.063505\\
70	0.062332\\
80	0.063067\\
90	0.062642\\
100	0.061891\\
110	0.059365\\
120	0.058197\\
130	0.058316\\
140	0.058832\\
150	0.059375\\
160	0.060128\\
170	0.06077\\
180	0.061592\\
190	0.062359\\
200	0.061455\\
210	0.059559\\
220	0.058472\\
230	0.057733\\
240	0.057842\\
250	0.058123\\
};
\addlegendentry{\scriptsize \textbf{ours}}

\addplot [color=mycolor1, dashed, line width=1.5pt]
  table[row sep=crcr]{%
20	0.071443\\
30	0.081741\\
40	0.060398\\
50	0.063931\\
60	0.067576\\
70	0.073423\\
80	0.085843\\
90	0.10863\\
100	0.12216\\
110	0.10283\\
120	0.08084\\
130	0.079169\\
140	0.086772\\
150	0.091691\\
160	0.097423\\
170	0.10885\\
180	0.12171\\
190	0.13292\\
200	0.13556\\
210	0.14089\\
220	0.1353\\
230	0.13856\\
240	0.13832\\
250	0.14055\\
};

\addplot [color=mycolor2, dashed, line width=1.5pt]
  table[row sep=crcr]{%
20	0.069425\\
30	0.066474\\
40	0.060717\\
50	0.062141\\
60	0.064091\\
70	0.065543\\
80	0.070697\\
90	0.077452\\
100	0.078027\\
110	0.070998\\
120	0.069181\\
130	0.070191\\
140	0.074927\\
150	0.081819\\
160	0.082273\\
170	0.085694\\
180	0.087559\\
190	0.088727\\
200	0.088536\\
210	0.092346\\
220	0.094314\\
230	0.093846\\
240	0.094805\\
250	0.096678\\
};

\addplot [color=mycolor3, dashed, line width=1.5pt]
  table[row sep=crcr]{%
20	0.064166\\
30	0.063501\\
40	0.050166\\
50	0.04791\\
60	0.046949\\
70	0.046971\\
80	0.048194\\
90	0.049204\\
100	0.048763\\
110	0.045694\\
120	0.042837\\
130	0.043097\\
140	0.043871\\
150	0.045197\\
160	0.045639\\
170	0.046602\\
180	0.04823\\
190	0.048422\\
200	0.047289\\
210	0.046838\\
220	0.046544\\
230	0.047095\\
240	0.046103\\
250	0.046006\\
};
\end{axis}
\end{tikzpicture}%
\definecolor{mycolor1}{rgb}{0.00000,0.44706,0.74118}%
\definecolor{mycolor2}{rgb}{0.92941,0.69412,0.12549}%
\definecolor{mycolor3}{rgb}{0.85098,0.32549,0.09804}%
\pgfplotsset{
compat=1.11,
legend image code/.code={
\draw[mark repeat=2,mark phase=2]
plot coordinates {
(0cm,0cm)
(0.0cm,0cm)        
(0.3cm,0cm)         
};%
}
}
\hspace{-3pt}
\begin{tikzpicture}
\begin{axis}[%
width=0.17\linewidth,
height=0.12\linewidth,
at={(2.706in,0.598in)},
scale only axis,
xmin=20,
xmax=250,
every x tick label/.append style={font=\color{black}, font=\tiny},
every y tick label/.append style={font=\color{black}, font=\tiny},
xlabel style={font=\color{white!15!black}},
xlabel={\small $k$},
ymin=0.09,
ymax=0.35,
ylabel style={font=\color{white!15!black}},
ylabel={},
axis background/.style={fill=white},
title style={font=\bfseries,align=center},
title={\small FAUST \\ \tiny direct measure},
yticklabel style={
    /pgf/number format/fixed,
    /pgf/number format/precision=5
},
scaled y ticks=false,
xtick={50,100,150,200,250},
xmajorgrids,
ymajorgrids,
legend style={at={(0.95,0.8)}, anchor=south east, legend cell align=left, align=left, draw=white!15!black},
legend style={draw=none, legend columns=2, fill opacity=0, text opacity = 1, draw opacity=1},
legend style={inner sep=0pt}
]
\addlegendimage{line width=1.5pt}
\addlegendimage{dashed, line width=1.5pt}
\addplot [color=mycolor1, line width=1.5pt]
  table[row sep=crcr]{%
20	0.21841\\
30	0.23739\\
40	0.21447\\
50	0.21438\\
60	0.21234\\
70	0.23984\\
80	0.25874\\
90	0.28187\\
100	0.28575\\
110	0.26702\\
120	0.2283\\
130	0.22006\\
140	0.22727\\
150	0.23452\\
160	0.24298\\
170	0.25568\\
180	0.27207\\
190	0.2861\\
200	0.29887\\
210	0.30798\\
220	0.30381\\
230	0.30493\\
240	0.30613\\
250	0.30659\\
};
\addlegendentry{\scriptsize Initialization}

\addplot [color=mycolor2, line width=1.5pt]
  table[row sep=crcr]{%
20	0.21239\\
30	0.23045\\
40	0.23656\\
50	0.23704\\
60	0.23953\\
70	0.25413\\
80	0.26339\\
90	0.2715\\
100	0.27182\\
110	0.26331\\
120	0.25333\\
130	0.24824\\
140	0.25061\\
150	0.25769\\
160	0.2574\\
170	0.25833\\
180	0.25949\\
190	0.26258\\
200	0.26771\\
210	0.27241\\
220	0.27356\\
230	0.27484\\
240	0.27785\\
250	0.27992\\
};

\addplot [color=mycolor3, line width=1.5pt]
  table[row sep=crcr]{%
20	0.18566\\
30	0.20301\\
40	0.17135\\
50	0.16692\\
60	0.15312\\
70	0.16891\\
80	0.17541\\
90	0.17749\\
100	0.17611\\
110	0.17215\\
120	0.16191\\
130	0.15722\\
140	0.15444\\
150	0.15425\\
160	0.15483\\
170	0.15349\\
180	0.15311\\
190	0.15464\\
200	0.15447\\
210	0.14901\\
220	0.1465\\
230	0.14416\\
240	0.14324\\
250	0.14279\\
};

\addplot [color=mycolor1, dashed, line width=1.5pt]
  table[row sep=crcr]{%
20	0.18045\\
30	0.20138\\
40	0.16657\\
50	0.15667\\
60	0.14667\\
70	0.15946\\
80	0.18254\\
90	0.20111\\
100	0.20844\\
110	0.18118\\
120	0.14769\\
130	0.14169\\
140	0.14836\\
150	0.15584\\
160	0.1611\\
170	0.17393\\
180	0.18567\\
190	0.20193\\
200	0.20761\\
210	0.21203\\
220	0.20658\\
230	0.20571\\
240	0.20123\\
250	0.20082\\
};
\addlegendentry{\scriptsize with ICP}

\addplot [color=mycolor2, dashed, line width=1.5pt]
  table[row sep=crcr]{%
20	0.17296\\
30	0.18965\\
40	0.18798\\
50	0.18427\\
60	0.18405\\
70	0.19949\\
80	0.20981\\
90	0.23242\\
100	0.2302\\
110	0.21647\\
120	0.20777\\
130	0.20485\\
140	0.21092\\
150	0.2221\\
160	0.21702\\
170	0.22234\\
180	0.22248\\
190	0.22359\\
200	0.22397\\
210	0.23347\\
220	0.23507\\
230	0.23459\\
240	0.23831\\
250	0.2403\\
};

\addplot [color=mycolor3, dashed, line width=1.5pt]
  table[row sep=crcr]{%
20	0.15513\\
30	0.17826\\
40	0.144\\
50	0.13095\\
60	0.10723\\
70	0.11607\\
80	0.11883\\
90	0.12265\\
100	0.11805\\
110	0.11174\\
120	0.10271\\
130	0.099894\\
140	0.10157\\
150	0.10501\\
160	0.10467\\
170	0.10716\\
180	0.10995\\
190	0.11286\\
200	0.10979\\
210	0.10559\\
220	0.1064\\
230	0.10546\\
240	0.10268\\
250	0.10032\\
};
\end{axis}
\end{tikzpicture}%
%
%
\definecolor{mycolor1}{rgb}{0.00000,0.44706,0.74118}%
\definecolor{mycolor2}{rgb}{0.92941,0.69412,0.12549}%
\definecolor{mycolor3}{rgb}{0.85098,0.32549,0.09804}%
\pgfplotsset{
compat=1.11,
legend image code/.code={
\draw[mark repeat=2,mark phase=2]
plot coordinates {
(0cm,0cm)
(0.0cm,0cm)        
(0.3cm,0cm)         
};%
}
}
\begin{tikzpicture}
\begin{axis}[%
width=0.17\linewidth,
height=0.12\linewidth,
at={(2.706in,0.598in)},
scale only axis,
xmin=20,
xmax=250,
every x tick label/.append style={font=\color{black}, font=\tiny},
every y tick label/.append style={font=\color{black}, font=\tiny},
xlabel style={font=\color{white!15!black}},
xlabel={\small $k$},
ymin=0.04,
ymax=0.125,
ylabel style={font=\color{white!15!black}},
ylabel={},
axis background/.style={fill=white},
title style={font=\bfseries,align=center},
title={\small TOSCA\\ \tiny per-vertex measure },
yticklabel style={
    /pgf/number format/fixed,
    /pgf/number format/precision=5
},
scaled y ticks=false,
xtick={50,100,150,200,250},
xmajorgrids,
ymajorgrids,
legend style={at={(1.03,0.5)}, anchor=west, legend cell align=left, align=left, draw=white!15!black}
]
\addplot [color=mycolor1, line width=1.5pt]
  table[row sep=crcr]{%
20	0.07145\\
30	0.06837\\
40	0.061771\\
50	0.062594\\
60	0.064414\\
70	0.067376\\
80	0.071365\\
90	0.076168\\
100	0.080084\\
110	0.083664\\
120	0.083722\\
130	0.085124\\
140	0.087442\\
150	0.088349\\
160	0.091537\\
170	0.094074\\
180	0.09824\\
190	0.10189\\
200	0.10304\\
210	0.10563\\
220	0.10864\\
230	0.11169\\
240	0.11341\\
250	0.11762\\
};
\addlegendentry{standard}

\addplot [color=mycolor2, line width=1.5pt]
  table[row sep=crcr]{%
20	0.081275\\
30	0.074875\\
40	0.069179\\
50	0.066949\\
60	0.068705\\
70	0.069811\\
80	0.070836\\
90	0.071131\\
100	0.070884\\
110	0.071149\\
120	0.06979\\
130	0.068732\\
140	0.067632\\
150	0.069573\\
160	0.069235\\
170	0.070246\\
180	0.070497\\
190	0.070421\\
200	0.071533\\
210	0.071759\\
220	0.071394\\
230	0.071834\\
240	0.073398\\
250	0.073398\\
};
\addlegendentry{slanted}

\addplot [color=mycolor3, line width=1.5pt]
  table[row sep=crcr]{%
20	0.067581\\
30	0.062745\\
40	0.056208\\
50	0.055288\\
60	0.053641\\
70	0.05419\\
80	0.054057\\
90	0.053792\\
100	0.053466\\
110	0.053645\\
120	0.052645\\
130	0.052429\\
140	0.052082\\
150	0.052396\\
160	0.052712\\
170	0.053182\\
180	0.053375\\
190	0.053638\\
200	0.053576\\
210	0.053876\\
220	0.053815\\
230	0.053879\\
240	0.054392\\
250	0.05492\\
};
\addlegendentry{complRes}

\addplot [color=mycolor1, dashed, line width=1.5pt]
  table[row sep=crcr]{%
20	0.051502\\
30	0.046661\\
40	0.045737\\
50	0.045872\\
60	0.048941\\
70	0.052236\\
80	0.056364\\
90	0.062976\\
100	0.063684\\
110	0.068729\\
120	0.065568\\
130	0.064699\\
140	0.067308\\
150	0.066838\\
160	0.071679\\
170	0.074291\\
180	0.078634\\
190	0.082335\\
200	0.084474\\
210	0.086197\\
220	0.089602\\
230	0.092042\\
240	0.095811\\
250	0.10008\\
};
\addlegendentry{standard + ICP}

\addplot [color=mycolor2, dashed, line width=1.5pt]
  table[row sep=crcr]{%
20	0.057426\\
30	0.049342\\
40	0.049119\\
50	0.048379\\
60	0.051774\\
70	0.054613\\
80	0.05716\\
90	0.058193\\
100	0.058177\\
110	0.059723\\
120	0.058177\\
130	0.058035\\
140	0.056428\\
150	0.05825\\
160	0.059113\\
170	0.060781\\
180	0.061864\\
190	0.061403\\
200	0.062654\\
210	0.063629\\
220	0.063321\\
230	0.064448\\
240	0.067089\\
250	0.066009\\
};
\addlegendentry{slanted + ICP}

\addplot [color=mycolor3, dashed, line width=1.5pt]
  table[row sep=crcr]{%
20	0.051953\\
30	0.043822\\
40	0.04293\\
50	0.041658\\
60	0.040921\\
70	0.041914\\
80	0.041902\\
90	0.043106\\
100	0.042263\\
110	0.04404\\
120	0.04164\\
130	0.041846\\
140	0.041213\\
150	0.041557\\
160	0.042141\\
170	0.042467\\
180	0.042953\\
190	0.04405\\
200	0.044024\\
210	0.044597\\
220	0.044843\\
230	0.044784\\
240	0.045478\\
250	0.046112\\
};
\addlegendentry{complRes + ICP}
\legend{}
\end{axis}
\end{tikzpicture}%
\definecolor{mycolor1}{rgb}{0.00000,0.44706,0.74118}%
\definecolor{mycolor2}{rgb}{0.92941,0.69412,0.12549}%
\definecolor{mycolor3}{rgb}{0.85098,0.32549,0.09804}%
\pgfplotsset{
compat=1.11,
legend image code/.code={
\draw[mark repeat=2,mark phase=2]
plot coordinates {
(0cm,0cm)
(0.0cm,0cm)        
(0.3cm,0cm)         
};%
}
}
\begin{tikzpicture}
\begin{axis}[%
width=0.17\linewidth,
height=0.12\linewidth,
at={(2.706in,0.598in)},
scale only axis,
xmin=20,
xmax=250,
every x tick label/.append style={font=\color{black}, font=\tiny},
every y tick label/.append style={font=\color{black}, font=\tiny},
xlabel style={font=\color{white!15!black}},
xlabel={\small $k$},
ymin=0.09,
ymax=0.29,
ylabel style={font=\color{white!15!black}},
ylabel={},
axis background/.style={fill=white},
title style={font=\bfseries,align=center},
title={\small TOSCA\\ \tiny direct measure},
yticklabel style={
    /pgf/number format/fixed,
    /pgf/number format/precision=5
},
scaled y ticks=false,
xtick={50,100,150,200,250},
xmajorgrids,
ymajorgrids,
legend style={at={(1.03,0.5)}, anchor=west, legend cell align=left, align=left, draw=white!15!black,
legend style={row sep=0.1pt},
font=\scriptsize}
]
\addplot [color=mycolor1, line width=2.0pt]
  table[row sep=crcr]{%
20	0.13143\\
30	0.12206\\
40	0.12984\\
50	0.14884\\
60	0.173\\
70	0.19365\\
80	0.20675\\
90	0.21895\\
100	0.22624\\
110	0.23242\\
120	0.22884\\
130	0.22625\\
140	0.22605\\
150	0.22827\\
160	0.23366\\
170	0.23667\\
180	0.24136\\
190	0.24517\\
200	0.24816\\
210	0.25282\\
220	0.25651\\
230	0.26013\\
240	0.26231\\
250	0.26576\\
};
\addlegendentry{\scriptsize standard}

\addplot [color=mycolor2, line width=2.0pt]
  table[row sep=crcr]{%
20	0.16626\\
30	0.17048\\
40	0.18457\\
50	0.19684\\
60	0.21216\\
70	0.22233\\
80	0.22914\\
90	0.22752\\
100	0.23309\\
110	0.23778\\
120	0.23496\\
130	0.23055\\
140	0.22605\\
150	0.22862\\
160	0.23002\\
170	0.23002\\
180	0.23111\\
190	0.22989\\
200	0.23464\\
210	0.23774\\
220	0.23452\\
230	0.2393\\
240	0.24273\\
250	0.24172\\
};
\addlegendentry{\scriptsize slanted}

\addplot [color=mycolor3, line width=2.0pt]
  table[row sep=crcr]{%
20	0.12934\\
30	0.11647\\
40	0.11823\\
50	0.12318\\
60	0.13204\\
70	0.1427\\
80	0.14662\\
90	0.15334\\
100	0.15466\\
110	0.15668\\
120	0.1579\\
130	0.15173\\
140	0.14968\\
150	0.15036\\
160	0.15236\\
170	0.15468\\
180	0.15489\\
190	0.15651\\
200	0.15752\\
210	0.15928\\
220	0.15992\\
230	0.16069\\
240	0.1629\\
250	0.16536\\
};
\addlegendentry{\scriptsize complRes}

\addplot [color=mycolor1, dashed, line width=2.0pt]
  table[row sep=crcr]{%
20	0.10955\\
30	0.091946\\
40	0.1021\\
50	0.11378\\
60	0.13129\\
70	0.14826\\
80	0.1626\\
90	0.17774\\
100	0.17988\\
110	0.17849\\
120	0.17183\\
130	0.16505\\
140	0.16993\\
150	0.16844\\
160	0.17389\\
170	0.18094\\
180	0.1866\\
190	0.19212\\
200	0.19458\\
210	0.19856\\
220	0.20297\\
230	0.20612\\
240	0.21095\\
250	0.21526\\
};
\addlegendentry{\scriptsize standard + ICP}

\addplot [color=mycolor2, dashed, line width=2.0pt]
  table[row sep=crcr]{%
20	0.13378\\
30	0.1357\\
40	0.14156\\
50	0.15841\\
60	0.17383\\
70	0.18512\\
80	0.19716\\
90	0.19839\\
100	0.20007\\
110	0.20701\\
120	0.2035\\
130	0.1984\\
140	0.19473\\
150	0.19833\\
160	0.20156\\
170	0.20252\\
180	0.20532\\
190	0.20348\\
200	0.2065\\
210	0.20882\\
220	0.20695\\
230	0.21153\\
240	0.21608\\
250	0.21461\\
};
\addlegendentry{\scriptsize slanted + ICP}

\addplot [color=mycolor3, dashed, line width=2.0pt]
  table[row sep=crcr]{%
20	0.1128\\
30	0.096427\\
40	0.10354\\
50	0.10254\\
60	0.10893\\
70	0.11365\\
80	0.11265\\
90	0.11741\\
100	0.11468\\
110	0.1194\\
120	0.11761\\
130	0.10946\\
140	0.10936\\
150	0.11215\\
160	0.11339\\
170	0.11601\\
180	0.11664\\
190	0.11947\\
200	0.11972\\
210	0.12221\\
220	0.12084\\
230	0.12106\\
240	0.12168\\
250	0.12343\\
};
\addlegendentry{\scriptsize complRes + ICP}
\legend{}
\end{axis}
\end{tikzpicture}%
\caption{Changing the functional map size. We randomly select 50 FAUST non-isometric pairs and 50 TOSCA isometric pairs. For each of the pairs, we only use one pair of WKS \cite{aubry2011wave} descriptors. We then optimize for a \hl{functional} map with different size ranging from 20 to 250 using different Laplacian mask terms. The solid lines represent the initialization with different masks and the dashed lines are the ICP-refined results. We report the per-vertex and the direct error measure. We can see that the proposed mask is much more stable than the standard and the slanted mask as the functional map size increases. }
\label{fig:res:diff_fmapsize}
\vspace{-0.5cm}
\end{figure*}
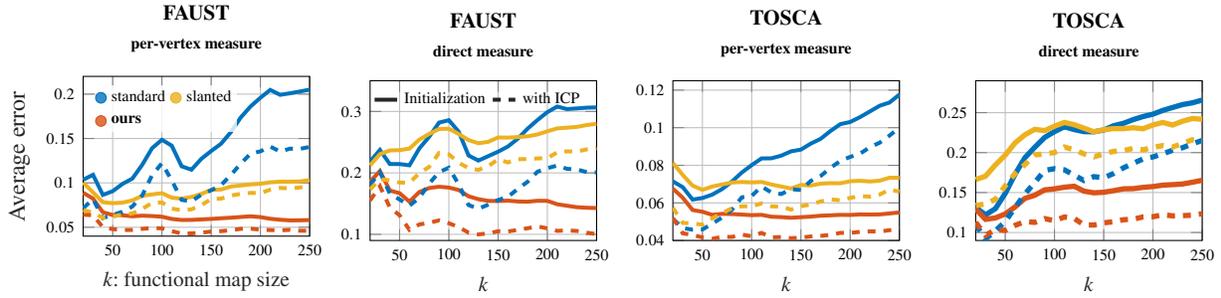

\section{Results}
\label{sec:results}
We tested the proposed approach using a MATLAB-based implementation, which we adapt from the state-of-the-art functional map approach in \cite{ren2018continuous}. Here, we first describe the benchmark datasets and the baseline methods. 

\myparagraph{Datasets}
We use the two datasets introduced in~\cite{ren2018continuous}. These consist of shapes from the FAUST~\cite{Bogo:CVPR:2014} and TOSCA~\cite{tosca} datasets, which were remeshed so that the shapes have different triangulations, and are no longer in one-to-one correspondence, making the matching more challenging and realistic. Specifically, we include 300 FAUST shape pairs and 284 TOSCA pairs for evaluation.

\myparagraph{Baselines}
To evaluate our new regularizer, we compare to the following two masks:
\begin{itemize}
    \item "standard": the standard Laplacian-commutativity mask, which is defined as $\matr{M}_{LB}(i,j) = \big(\Delta_2(i) - \Delta_1(j)\big)$ as discussed before. 
    \item "slanted": the heuristic slanted diagonal penalty mask proposed in~\cite{rodola2017partial}, which is defined as
    $$\matr{M}(i, j) = \exp{\left(-\eta \sqrt{i^2 + j^2}\right)}\left\Vert \frac{\vec{n}}{\Vert \vec{n} \Vert} \times \left((i,j)\tran - \vec{p}\right)\right\Vert$$
    where $\vec{p} = (1,1)\tran$, and $\vec{n} = (1,r/k)\tran$ is the line direction with slope $r/k$, where $r$ is the estimated rank of the functional map, and $k$ is the size of the square functional map. This weight matrix is originally applied to partial shape matching in~\cite{rodola2017partial}, and we use it as a mask matrix to regularize the functional map. In our tests, $\eta$ is set to the default value 0.03 as suggested in the original paper. This mask is illustrated on Fig. \ref{fig:res:gt_and_masks}. Note that it exhibits the desired funnel-like structure for the lower part of the spectrum, but not the upper part.
\end{itemize}

We also compare three different settings:
\begin{itemize}
    \item \textbf{Initialization}. The wave kernel signatures (WKS) \cite{aubry2011wave} are used to construct $E_{\text{desc}}, E_{\text{mult}}$, and $E_{\text{orient}}$ in Eq. \eqref{eq:energy:complete}. Then we optimize the functional map w.r.t. the energy defined in Eq.~\eqref{eq:energy:complete} with three different masks, namely, the standard, slanted, and our resolvent mask.
    \item \textbf{ICP refinement}. After initialization, we use ICP~\cite{ovsjanikov2012functional} to refine the computed functional maps.
    \item \textbf{BCICP refinement}. After initialization, we use the recently proposed BCICP algorithm~\cite{ren2018continuous} to refine the computed functional and pointwise maps, using the open-source implementation and default parameters provided by the authors.
\end{itemize}

\myparagraph{Measurements}
In our experiments, we measured the quality of the functional maps and the recovered point-wise maps:
\begin{itemize}
    \item \textbf{Point-wise maps}. Since most shapes contain left-right symmetries, which are indistinguishable for intrinsic methods, in each dataset, we considered both the ground-truth direct and symmetric correspondences. To measure the accuracy of a computed map, we used the following measures:
	\begin{itemize}
	\item \textbf{per-vertex error}: for each vertex we accept the ground-truth direct and symmetric correspondences and take the minimum as the error of this vertex. This measure reflects the accuracy of the map regardless of the symmetry.  
	\item \textbf{direct error}: we compute the average per-vertex error to the direct ground-truth correspondences only. This measure reflects both the accuracy and the smoothness of the map.
	\end{itemize} 
    \item \textbf{Functional maps}. We can evaluate a functional map by the quality of its recovered point-wise map, or by measuring the penalty from a given mask. Specifically, for a given functional map $\matr{C}$ and a  mask matrix $\matr{M}$, we can measure the total penalty as 
    $\sum_{ij} \left[\matr{M} \right]_{ij} \left[ \matr{C} \right]_{ij}^2$~. 
\end{itemize}

\begin{figure}[!t]
  \begin{overpic}
  [trim=0cm 0cm 0cm 0cm,clip,width=1\linewidth, grid=false]{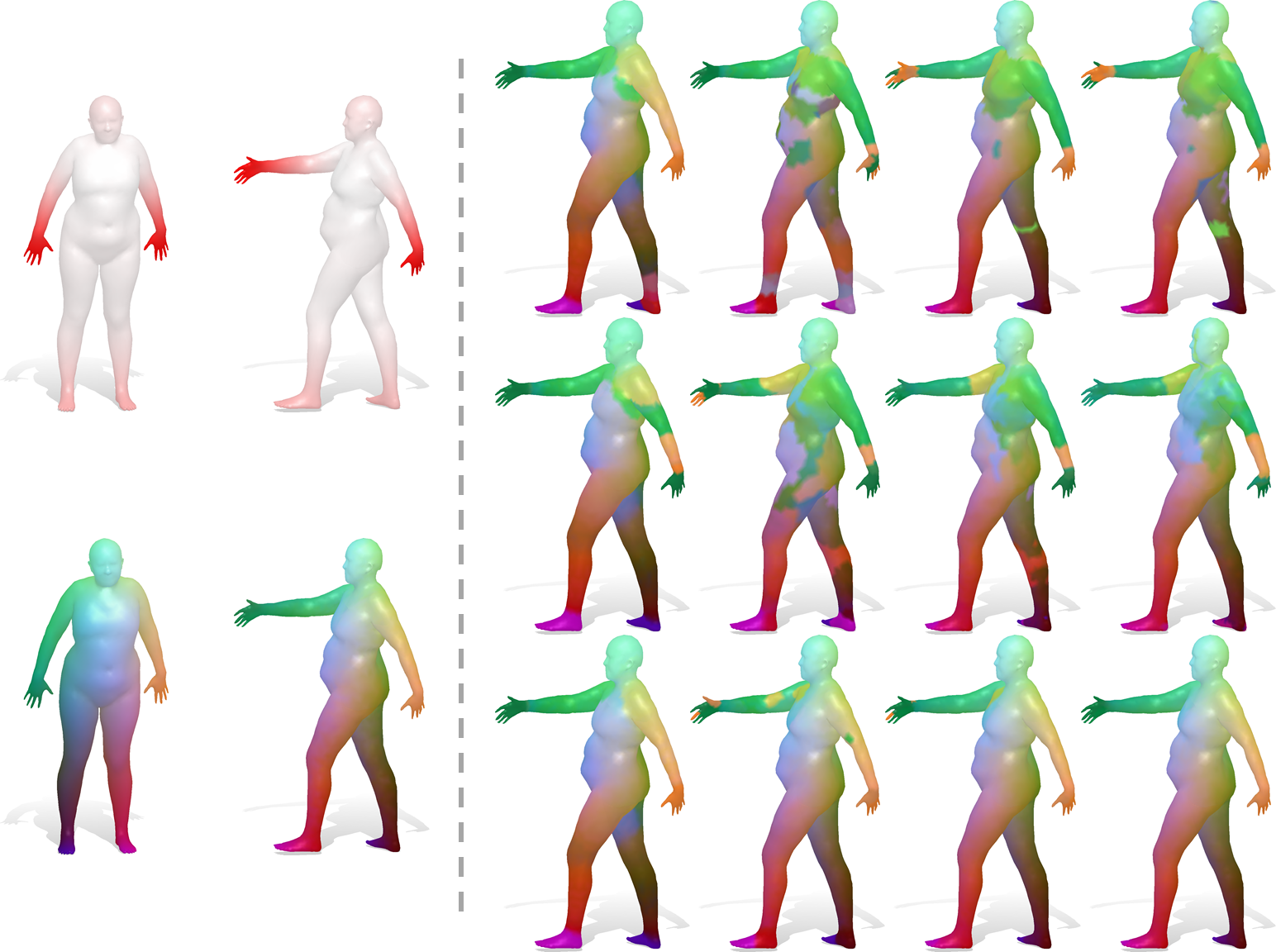}
  \put(5,70){Given descriptor}
  \put(3.5,35){Ground-truth map}
  \put(101,17){\rotatebox{-90}{\textbf{ours}}}
  \put(101,43){\rotatebox{-90}{slanted}}
  \put(101,70){\rotatebox{-90}{standard}}
  
  \put(43,76){$k=20$}
  \put(58,76){$k=75$}
  \put(73,76){$k=150$}
  \put(88,76){$k=250$}
  \end{overpic}
  \caption{Given one pair of WKS descriptors, as visualized on the left top, we use different Laplacian mask terms to optimize for a functional map with size $k$. The ground-truth map is visualized on the bottom left. We can see that our mask is much more stable over different size $k$.}
  \label{fig:res:diff_fmapsize:eg}
  \vspace{-0.5cm}
\end{figure}

\begin{figure*}[!t]
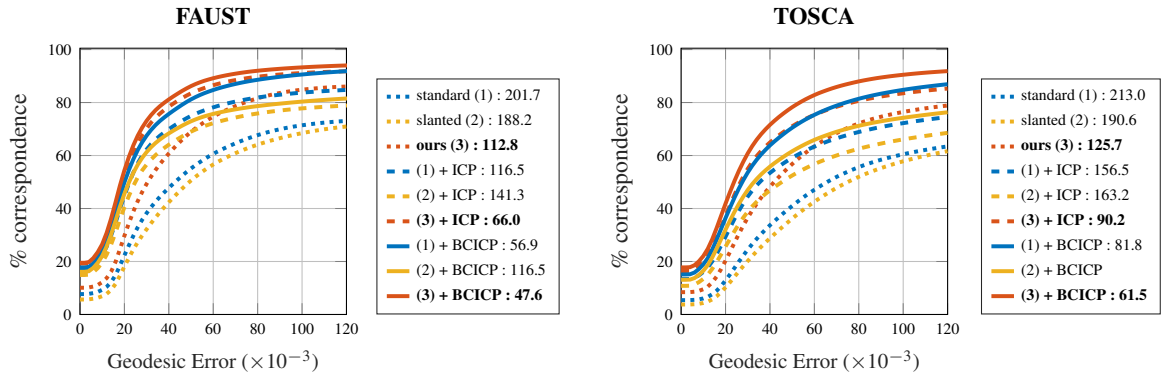

\centering
\input{figures_tex/summaryCurve_norm_faust_combined.tex}
\input{figures_tex/summaryCurve_norm_tosca_new.tex}
\caption{For each dataset, we compare the quality of the functional maps between the standard mask (blue curves), the slanted mask (yellow curves), and our resolvent mask (red curves) with the same set of parameters, where a single pair of descriptors is used to optimize a 100-by-100 functional map. The comparison is made in three different settings: comparing the initialization directly (dotted lines), comparing the initialization with ICP refinement (dashed lines), and with the BCICP refinement (solid lines). 
Specifically, these curves are measured on 300 FAUST shape pairs and 284 TOSCA pairs. The average direct errors are reported in the legends.}
\label{fig:res:err_summary}
\end{figure*}

\subsection{Effect of functional map size}
For the functional maps pipeline, a set of corresponding descriptors is given as input. We then optimize 
a $k \times k$ functional map by minimizing an objective function based on Eq.  \eqref{eq:energy:complete}. 
Therefore, we have to solve for $k^2$ variables. If $k$ is a smaller value, e.g., $k < 50$, the optimization problem
is easier to solve since the number of variables is small. However, if $k$ is too small, the information that is encoded
into the optimized functional map is limited to the low-resolution of the spectrum of the shapes, and hence it will be hard
to transfer detailed information. On the other hand, if $k$ is too large, solving the optimization problem is potentially more time-consuming. Even more importantly, 
this optimization problem can become under-constrained when the number of variables exceeds the constraints stemming from the input descriptors.
In this case, we need effective regularizers to regularize the functional maps. In real experiments, the choice of the parameter $k$ is a key hyper parameter. 

To quantify the stability and the effectiveness of the proposed resolvent mask, we conduct the following test: for each of the test pairs, we only use one pair of corresponding WKS descriptors. We then fix this descriptor pair and optimize for functional maps with different sizes ranging from 20 to 250. We randomly select 50 pairs of FAUST and 50 pairs of TOSCA, and report the average error over the tested shape pairs w.r.t. different functional map size in Fig.~\ref{fig:res:diff_fmapsize}. We can see that the standard mask fails to regularize the functional map with a large size: the average per-vertex error is three or four times larger than ours. At the same time, the slanted mask has a better performance than the standard one in the per-vertex measure. However, the slanted mask has large direct errors, which suggests that the smoothness is not well preserved. 
We believe this is due to the fact that the orientation-preserving regularizer starts to fail to disambiguate the symmetry as the functional map size $k$ increases. In this case, the slanted mask cannot help the orientation-preserving operator, while our mask can strengthen the functionality of the orientation-preserving operator and leads to maps with much lower per-vertex and direct error.
In summary, even with limited constraints from a single pair of WKS descriptors, increasing the number of variables does not significantly affect the performance of our mask. Fig.~\ref{fig:res:diff_fmapsize:eg} shows an illustrative example.

This test shows that our mask is much more stable and can better regularize larger functional maps even in very challenging cases with little input information or constraints. Also, it suggests that with this new mask, we no longer need to tune the parameter $k$ as much as needed by the standard mask to achieve a better result.

\subsection{Evaluation on shape matching}

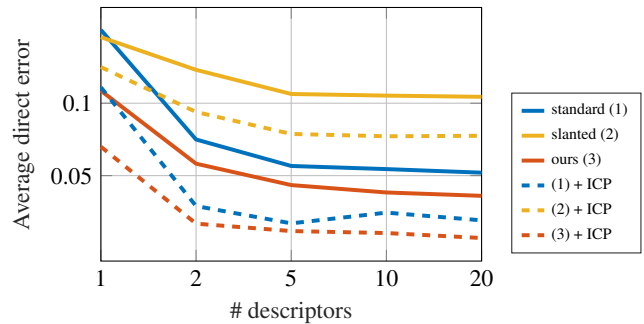
\begin{figure}
    \centering
%
%
\definecolor{mycolor1}{rgb}{0.00000,0.44706,0.74118}%
\definecolor{mycolor2}{rgb}{0.92941,0.69412,0.12549}%
\definecolor{mycolor3}{rgb}{0.85098,0.32549,0.09804}%
\pgfplotsset{scaled x ticks=false}
\pgfplotsset{
compat=1.11,
legend image code/.code={
\draw[mark repeat=2,mark phase=2]
plot coordinates {
(0cm,0cm)
(0.15cm,0cm)        
(0.3cm,0cm)         
};%
}
}
\begin{tikzpicture}

\begin{axis}[%
width=0.6\linewidth,
height=0.4\linewidth,
at={(1.472in,1.026in)},
scale only axis,
xmin=1,
xmax=5,
ymode=log,
xtick={1,2,3,4,5},
xticklabels={{1},{2},{5},{10},{20}},
ytick={0.05, 0.1},
yticklabels={0.05,0.1},
xlabel style={font=\color{white!15!black}},
xlabel={\# descriptors},
ymin=0,
ymax=0.25,
ylabel style={font=\color{white!15!black}},
ylabel={Average direct error},
axis background/.style={fill=white},
xmajorgrids,
ymajorgrids,
legend style={at={(1.42,0.03)}, anchor=south east, legend cell align=left, align=left, draw=white!15!black, fill=white, fill opacity=0.6, draw opacity=1, text opacity=1},
]
\addplot [color=mycolor1, line width=1.5 pt]
  table[row sep=crcr]{%
1	0.20173\\
2	0.070664\\
3	0.054841\\
4	0.053226\\
5	0.051437\\
};
\addlegendentry{\scriptsize standard (1)}

\addplot [color=mycolor2, line width=1.5 pt]
  table[row sep=crcr]{%
1	0.18815\\
2	0.13763\\
3	0.10923\\
4	0.10755\\
5	0.1063\\
};
\addlegendentry{\scriptsize slanted (2)}

\addplot [color=mycolor3, line width=1.5 pt]
  table[row sep=crcr]{%
1	0.11276\\
2	0.056063\\
3	0.045678\\
4	0.042554\\
5	0.041212\\
};
\addlegendentry{\scriptsize ours (3)}

\addplot [color=mycolor1, dashed, line width=1.5 pt]
  table[row sep=crcr]{%
1	0.11649\\
2	0.037365\\
3	0.031668\\
4	0.035089\\
5	0.032601\\
};
\addlegendentry{\scriptsize (1) + ICP}

\addplot [color=mycolor2, dashed, line width=1.5 pt]
  table[row sep=crcr]{%
1	0.14135\\
2	0.091973\\
3	0.074516\\
4	0.072879\\
5	0.073227\\
};
\addlegendentry{\scriptsize (2) + ICP}

\addplot [color=mycolor3, dashed, line width=1.5 pt]
  table[row sep=crcr]{%
1	0.065958\\
2	0.031577\\
3	0.029414\\
4	0.028862\\
5	0.02752\\
};
\addlegendentry{\scriptsize (3) + ICP}

\end{axis}
\end{tikzpicture}%
    \caption{Changing the number of input descriptors. We use different numbers of descriptors to optimize a 100-by-100 functional map with different masks. The results are reported for 300 FAUST shape pairs. As the number of descriptors increases, the results improve for all the masks.}
    \label{fig:res:diff_desc}
\end{figure}

\begin{figure}[!t]
  \vspace{0.5cm}
  \centering
  \begin{overpic}
  [trim=0cm 0cm 0cm 0cm,clip,width=1\linewidth, grid=false]{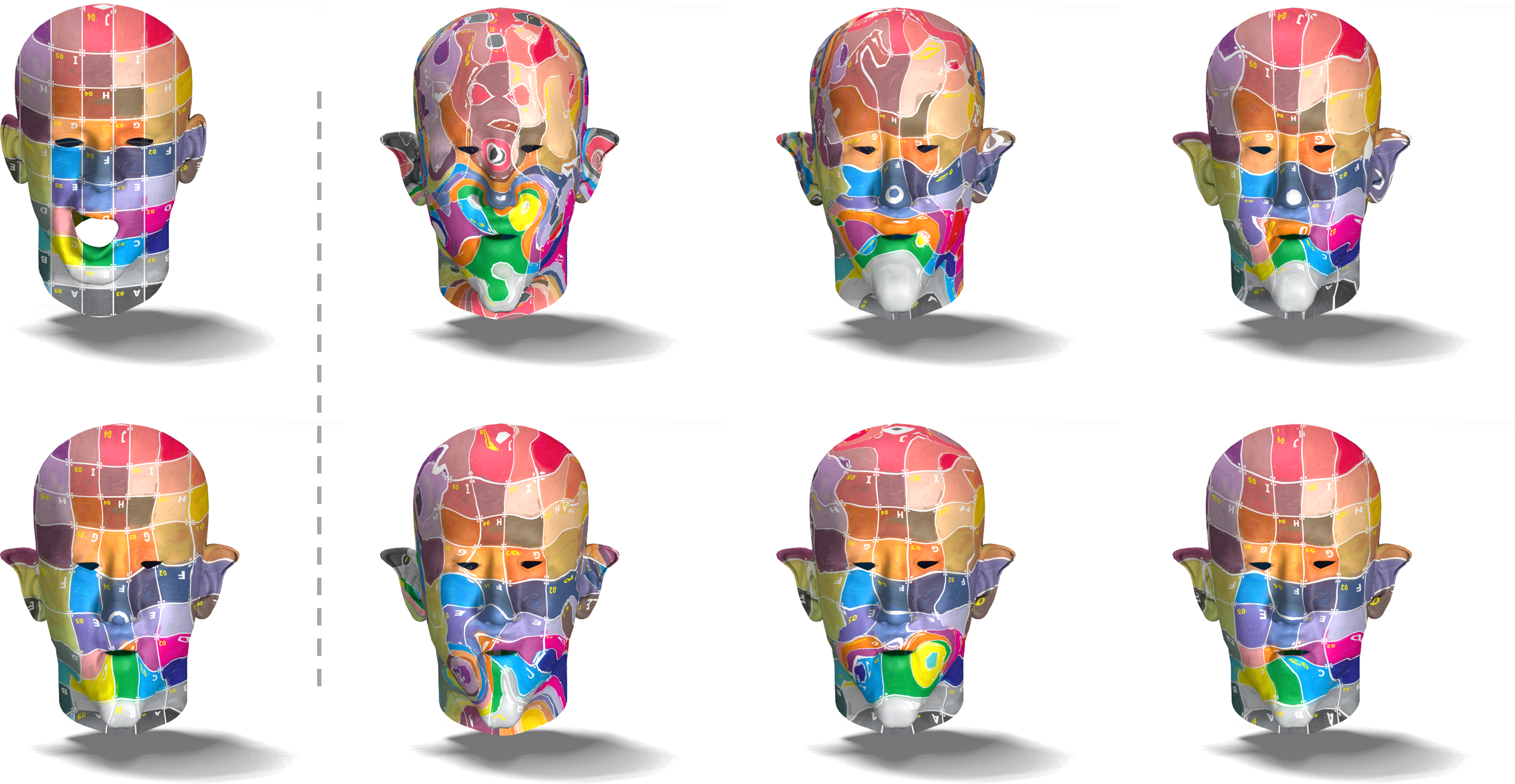}
  \put(1.8,52.5){Source}
  \put(25.5,52.5){Standard}
  \put(53,52.5){Slanted}
  \put(81,52.5){\textbf{Ours}}
  
  \put(0,25){Ground-truth}
  \put(96,50){\rotatebox{-90}{Initialization}}
  \put(96,17){\rotatebox{-90}{+ ICP}}
  
  \end{overpic}
  \caption{Example. Comparing the quality of the initial maps and after ICP refinement via texture transfer. (One pair of descriptors is used to optimize a 50-by-50 functional map)}
  \label{fig:res:eg:face}
\end{figure}

\begin{figure}[!t]
  \vspace{0.5cm}
  \centering
  \begin{overpic}
  [trim=0cm 0cm 0cm 0cm,clip,width=1\linewidth, grid=false]{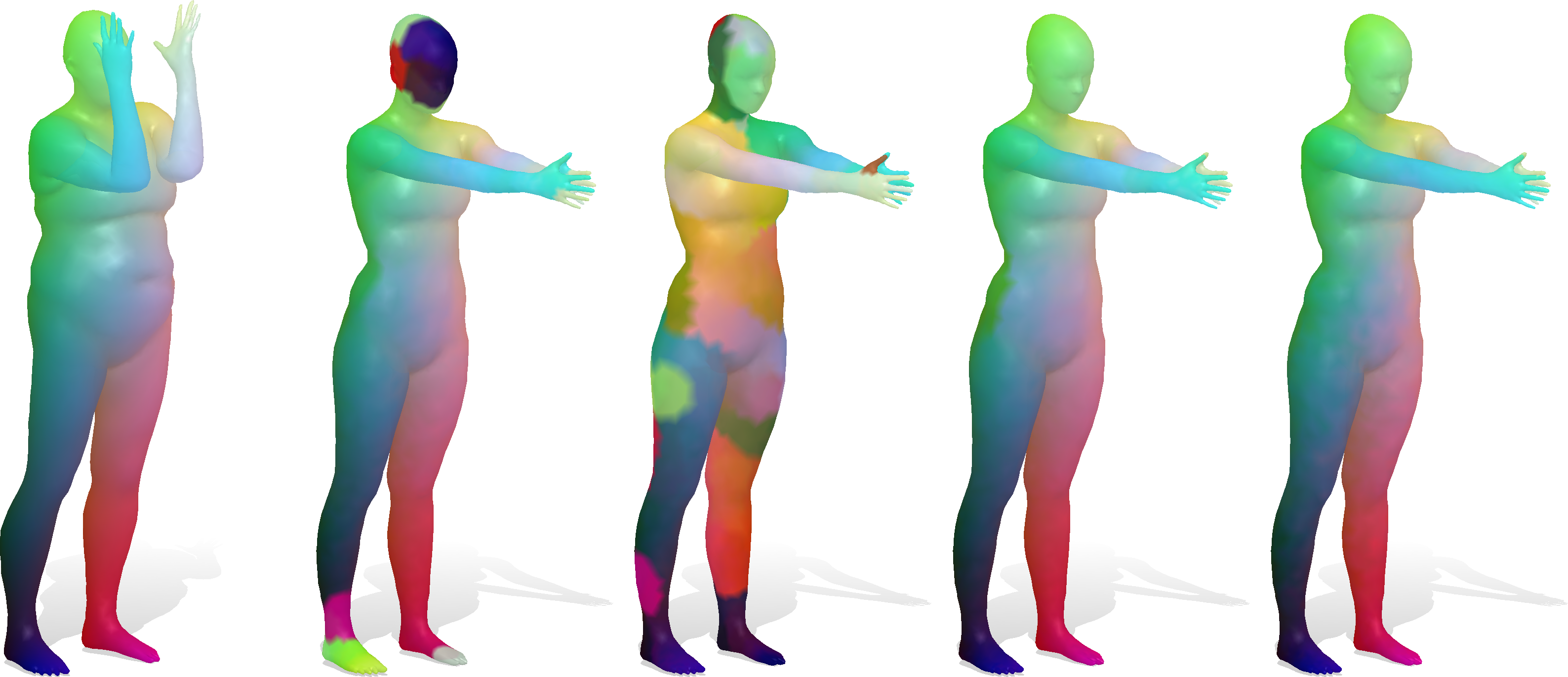}
  \put(1.8,44){Source}
  \put(20,44){Standard}
  \put(42,44){Slanted}
  \put(62.5,44){\textbf{Ours}}
  \put(78.5,44){Ground-truth}
  
  \end{overpic}
  \caption{Example. Comparing the quality of the ICP refined maps via color transfer. (One pair of descriptors is used to optimize a 100-by-100 functional map)}
  \label{fig:res:eg:faust}
  \vspace{-0.2cm}
\end{figure}

In this experiment, we compare our resolvent mask to the standard and the slanted mask on a larger FAUST and TOSCA dataset in three settings: the initialization, with ICP refinement, and with BCICP refinement. 
To make a fair comparison between different masks, the weights for different terms (the $\alpha_i$ in Eq.~\eqref{eq:energy:complete}) are fixed across different test pairs and different test masks.

In this test, for each test pair, we use one pair of WKS descriptors to optimize a 100-by-100 functional map. 
As reported in Fig.~\ref{fig:res:err_summary}, our mask leads to 40.4\%, 43.1\%, and 16.3\% improvement w.r.t the settings of the initialization, with ICP refinement, and with BCICP refinement respectively, over the best of the standard and the slanted mask on the FAUST dataset regarding the average direct error. Similarly, for the TOSCA dataset, ours achieves 34.2\%, 42.3\%, and 24.8\% improvement respectively over the best of the standard and the slanted mask.

Fig.~\ref{fig:res:diff_desc} shows a test on the FAUST dataset, where we use different number of carefully curated input
descriptors based on WKS \cite{aubry2011wave} using parameters from \cite{ren2018continuous} to compute 100-by-100
functional maps. As the number of input descriptors increases, more constraints are added to regularize the functional
map. Therefore, for all the masks, and all the test settings, the results improve.  Observe that when the number of
variables is small (as shown in Fig.~\ref{fig:res:diff_fmapsize}  for small $k$) or the number of
descriptor constraints is large (as shown in Fig.~\ref{fig:res:diff_desc} for large descriptor number), all 
masks perform well since the problem is well-constrained. However, when this is not the case, our resolvent mask can still
regularize the functional map better than the other two.  For completeness,  in
Appendix~\ref{appendix:descSize} we also include results with the BCICP refinement. We remark that this refinement is very computationally and memory intensive due, in part, to requiring all-pairs geodesic distances, but can, as such, improve upon even very poor quality maps.

\begin{figure}[!t]
  \vspace{0.5cm}
  \centering
  \begin{overpic}
  [trim=0cm 0cm 0cm 0cm,clip,width=1\linewidth, grid=false]{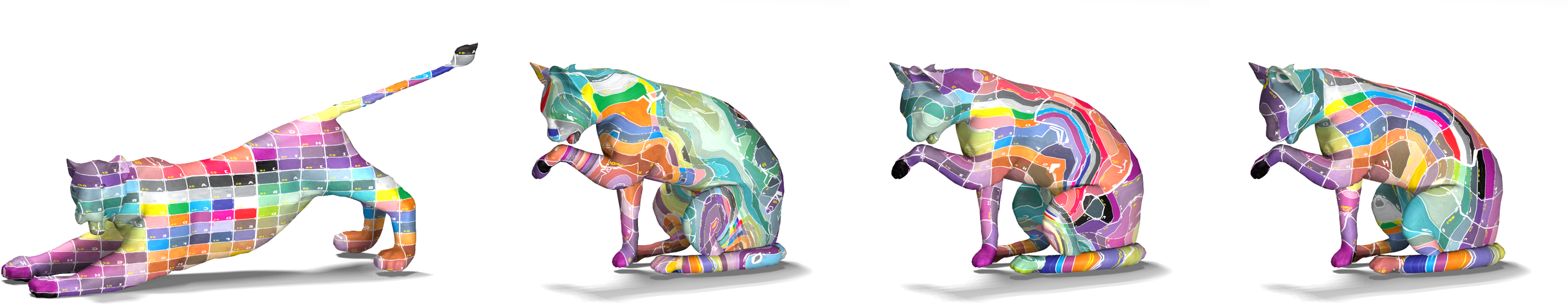}
  \put(9,17){Source}
  \put(35,17){Standard}
  \put(60,17){Slanted}
  \put(84,17){\textbf{Ours}}
  
  \end{overpic}
  \caption{Example. Here we show a challenging pair of a lion and a cat and the results are refined by BCICP. We can see that, when the initial maps are in a low quality and the BCICP refinement fails to improve the initial maps, our new mask still gives a more reasonable map. (Twenty pairs of descriptors are used to optimize a 50-by-50 functional map)}
  \label{fig:res:eg:cat_lion}
\end{figure}

Fig.~\ref{fig:res:eg:face} shows a qualitative example of using a single descriptor pair to optimize for a 100$\times$100 functional map. The first row shows the quality of the initial maps with different masks, and the second rows the quality of the corresponding maps refined by ICP. We can see that our resolvent mask outperforms the other two masks. Also the quality of our initial map is close to the ICP refined map, which shows that with our mask, we do not rely on the post-processing refinement as much as the other two. Fig.~\ref{fig:res:eg:faust} shows another example of the results refined by ICP. Fig.~\ref{fig:res:eg:cat_lion} shows a challenging non-isometric example with results refined after BCICP.

\subsection{Analysis of the complex resolvent mask}
As shown in Fig.~\ref{fig:res:gt_and_masks}, the funnel-shape of our resolvent mask aligns well with the ground-truth functional map, which leads to a better performance over the standard and the slanted mask. To further analyze the properties of our mask, we also conduct the following experiments: We test the range of the parameter $\gamma$, and the relative weight between the complex and the real part to construct the resolvent mask. Moreover, we investigate the correlation between the mask penalty added on a functional map and the corresponding recovered point-wise map.

\subsubsection{Different parameters for the complex resolvent mask}
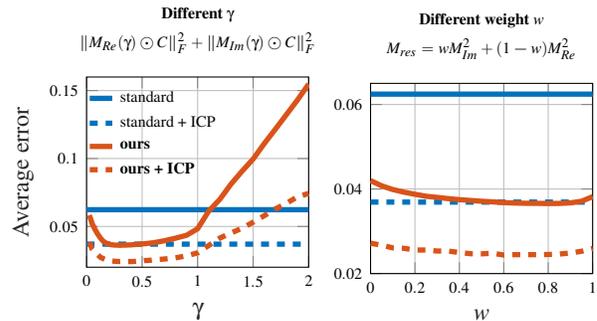
\begin{figure}[!t]
\centering
%
%
\definecolor{mycolor1}{rgb}{0.00000,0.44700,0.74100}%
\definecolor{mycolor2}{rgb}{0.85098,0.32549,0.09804}%
\pgfplotsset{
compat=1.11,
legend image code/.code={
\draw[mark repeat=2,mark phase=2]
plot coordinates {
(0cm,0cm)
(0.0cm,0cm)        
(0.3cm,0cm)         
};%
}
}
\begin{tikzpicture}

\begin{axis}[%
width=0.35\linewidth,
height=0.3\linewidth,
at={(6.441in,0.894in)},
scale only axis,
xmin=0,
xmax=2,
xlabel style={font=\color{white!15!black}},
xlabel={$\gamma$},
ymin=0.02,
ymax=0.16,
every x tick label/.append style={font=\color{black}, font=\tiny},
every y tick label/.append style={font=\color{black}, font=\tiny},
ylabel style={font=\color{white!15!black}},
ylabel={Average error},
axis background/.style={fill=white},
title style={font=\bfseries, align=center},
title={\tiny Different $\gamma$\\ \tiny $\Vert M_{Re}(\gamma) \odot C \Vert_F^2 + \Vert M_{Im}(\gamma) \odot C\Vert_F^2$},
yticklabel style={
    /pgf/number format/fixed,
    /pgf/number format/precision=5
},
scaled y ticks=false,
xmajorgrids,
ymajorgrids,
legend style={at={(0,0.7)}, anchor=west, legend cell align=left, align=left, draw=none, fill opacity=0, text opacity=1},
legend style={inner sep=0pt}
]
\addplot [color=mycolor1, line width=2.0pt]
  table[row sep=crcr]{%
0	0.0623232778110984\\
2	0.0623232778110984\\
};
\addlegendentry{\scriptsize standard}

\addplot [color=mycolor1, dashed, line width=2.0pt]
  table[row sep=crcr]{%
0	0.0369725128322993\\
2	0.0369725128322993\\
};
\addlegendentry{\scriptsize standard + ICP}

\addplot [color=mycolor2, line width=2.0pt]
  table[row sep=crcr]{%
0.025	0.0583152368421981\\
0.05	0.0521026272527965\\
0.075	0.0479194291687966\\
0.1	0.0443987808623975\\
0.125	0.0414164434206974\\
0.15	0.0390856300281968\\
0.175	0.0378017628686979\\
0.2	0.0369727399182985\\
0.225	0.0365386788215979\\
0.25	0.0362931853709975\\
0.275	0.0362342441398975\\
0.3	0.0361314092052974\\
0.325	0.0362422869789977\\
0.35	0.0362734897058975\\
0.375	0.0363799626078973\\
0.4	0.0364435168204975\\
0.425	0.0366122288034978\\
0.45	0.036759590376898\\
0.475	0.0369067219745983\\
0.5	0.0371222591543984\\
0.525	0.0372981830416987\\
0.55	0.0375146266414985\\
0.575	0.0377592913342988\\
0.6	0.0379744867329989\\
0.7	0.0394444530496004\\
0.8	0.0413471053966815\\
0.9	0.0434959573925015\\
1	0.0483048387134006\\
1.1	0.0613125641601987\\
1.2	0.0699588645933976\\
1.3	0.0811066329265983\\
1.4	0.0906580449131988\\
1.5	0.0996321135478975\\
1.6	0.110989262251892\\
1.7	0.121583089024591\\
1.8	0.13235149037609\\
1.9	0.143169955066397\\
2	0.154796697072591\\
};
\addlegendentry{\scriptsize \textbf{ours}}

\addplot [color=mycolor2, dashed, line width=2.0pt]
  table[row sep=crcr]{%
0.025	0.0383934680210978\\
0.05	0.033326151640699\\
0.075	0.0307348240462399\\
0.1	0.029224551507539\\
0.125	0.0272981572234594\\
0.15	0.0263066336647999\\
0.175	0.0254669636449993\\
0.2	0.0249498077236993\\
0.225	0.0245837765132995\\
0.25	0.0243806315742994\\
0.275	0.0242751123744992\\
0.3	0.0241783891945991\\
0.325	0.0242675997474994\\
0.35	0.0241152016695994\\
0.375	0.0241979016097999\\
0.4	0.0241361919454995\\
0.425	0.0242285151607993\\
0.45	0.0244394103092991\\
0.475	0.0247862654553993\\
0.5	0.0246673034437999\\
0.525	0.0249318941739996\\
0.55	0.0249694644020994\\
0.575	0.0251315978918998\\
0.6	0.0254560058948996\\
0.7	0.0260108274538997\\
0.8	0.0268856713556994\\
0.9	0.0283567982500997\\
1	0.0305497878734997\\
1.1	0.0359991089231993\\
1.2	0.0414407196311997\\
1.3	0.0451297832754989\\
1.4	0.0487173973353972\\
1.5	0.0530933142122965\\
1.6	0.0579054850009979\\
1.7	0.0628788951940965\\
1.8	0.067025554674599\\
1.9	0.0718594538182979\\
2	0.0742803619985962\\
};
\addlegendentry{\scriptsize \textbf{ours + ICP}}

\end{axis}
\end{tikzpicture}%
%
%
\definecolor{mycolor1}{rgb}{0.00000,0.44700,0.74100}%
\definecolor{mycolor2}{rgb}{0.85098,0.32549,0.09804}%
\pgfplotsset{
compat=1.11,
legend image code/.code={
\draw[mark repeat=2,mark phase=2]
plot coordinates {
(0cm,0cm)
(0.0cm,0cm)        
(0.3cm,0cm)         
};%
}
}
\hspace{-6pt}
\begin{tikzpicture}
\begin{axis}[%
width=0.35\linewidth,
height=0.3\linewidth,
at={(6.441in,0.894in)},
scale only axis,
xmin=0,
xmax=1,
xlabel style={font=\color{white!15!black}},
xlabel={$w$},
ymin=0.02,
ymax=0.065,
ylabel style={font=\color{white!15!black}},
axis background/.style={fill=white},
every x tick label/.append style={font=\color{black}, font=\tiny},
every y tick label/.append style={font=\color{black}, font=\tiny},
title style={font=\bfseries, align=center},
title={\tiny Different weight $w$ \\ \tiny $M_{res} = w M_{Im}^2 + (1-w)M_{Re}^2$},
yticklabel style={
    /pgf/number format/fixed,
    /pgf/number format/precision=5
},
scaled y ticks=false,
xmajorgrids,
ymajorgrids,
legend style={at={(1.03,0.5)}, anchor=west, legend cell align=left, align=left, draw=white!15!black}
]
\addplot [color=mycolor1, line width=2.0pt]
  table[row sep=crcr]{%
0	0.0624537966860984\\
1	0.0624537966860984\\
};
\addlegendentry{standard}

\addplot [color=mycolor1, dashed, line width=2.0pt]
  table[row sep=crcr]{%
0	0.0369103798652996\\
1	0.0369103798652996\\
};
\addlegendentry{standard + ICP}

\addplot [color=mycolor2, line width=2.0pt]
  table[row sep=crcr]{%
0	0.0419940313321026\\
0.05	0.0407565551437019\\
0.1	0.0398833051566007\\
0.15	0.0392642518791006\\
0.2	0.0387585556273003\\
0.25	0.0382881544195996\\
0.3	0.0380229839208986\\
0.35	0.0377277810931984\\
0.4	0.0374842654237985\\
0.45	0.0372936728917986\\
0.5	0.0371117676156985\\
0.55	0.0369603658568981\\
0.6	0.0367906834243979\\
0.65	0.0366871721588974\\
0.7	0.0366280014479973\\
0.75	0.0366138638427982\\
0.8	0.036572327902198\\
0.85	0.0365928149390981\\
0.9	0.0367247735554983\\
0.95	0.0371088450001986\\
1	0.0381972682613983\\
};
\addlegendentry{complRes}

\addplot [color=mycolor2, dashed, line width=2.0pt]
  table[row sep=crcr]{%
0	0.0271738263983991\\
0.05	0.0267111949965993\\
0.1	0.0260802357187996\\
0.15	0.0259784179599999\\
0.2	0.0255595201852998\\
0.25	0.0255742205973997\\
0.3	0.0253323837697991\\
0.35	0.0253102468341993\\
0.4	0.0249880831742996\\
0.45	0.0246980628605993\\
0.5	0.0246723603679996\\
0.55	0.024796041372\\
0.6	0.0243854913708995\\
0.65	0.0244769438813994\\
0.7	0.0244786891499993\\
0.75	0.0244737875036996\\
0.8	0.0245104325869996\\
0.85	0.0246135441591997\\
0.9	0.0248716994911994\\
0.95	0.0252750233820993\\
1	0.0259257198668997\\
};
\addlegendentry{complRes + ICP}
\legend{}
\end{axis}
\end{tikzpicture}%
\caption{Left: changing the parameter $\gamma$ (as defined in Eq.~\eqref{eq:res_penalty}); Right: changing the relative weight $w$ between the imaginary part and the real part (see the original definition in Eq.~\eqref{eq:mtd:res_mask}). The results are based on~50 TOSCA pairs.}
\label{fig:res:diffSigma}
\end{figure}

In this section, we empirically tune the parameters in our resolvent based mask. First, we explore different values of $\gamma$. Recall that $\gamma$ controls the funnel-like structure of the mask. Thus, it is expected that tuning $\gamma$ can influence the functional map quality.

On Fig. \ref{fig:res:diffSigma}, we report results for 100 FAUST pairs, where the result of the standard mask is colored blue, and ours is colored red. As is shown on the left, when $\gamma$ lies between~0 and~1, our mask always outperforms the standard mask over both the initialization (solid lines) and after the ICP refinement (the dashed lines).

Note that the case $\gamma = 1$ corresponds to the resolvent of the Laplacian. Thus, the fact that our mask outperforms the standard one for $\gamma = 1$ experimentally justifies the usage of the resolvent of the Laplacian, rather than the Laplacian itself. Note also that, as suggested in Sec. \ref{sec:mask_structure}, the resolvent mask performs poorly for $\gamma > 1$.

Finally, we explore the relative contribution of the real and imaginary parts of the resolvent mask in order to analyze the utility of the two components of our mask construction.  For this, we analyze the accuracy of the computed pointwise maps when allowing $M_{Re}$ and $M_{Im}$ to have different weights, $(1-w)$ and $w$ for $w \in [0,1]$, respectively. As shown in Fig.~\ref{fig:res:diffSigma} (right), the convex shape of the red curve (where the weight $w$ changes) demonstrates that both the real part and the imaginary part contribute to the improvement over the standard mask. Note that our mask with any convex combination between the real part and the imaginary parts outperforms the standard mask. In practice we always use the equal weight $w=0.5$.


\subsubsection{Correlation between the mask penalty and map accuracy}

\begin{figure}[!t]
\centering
\input{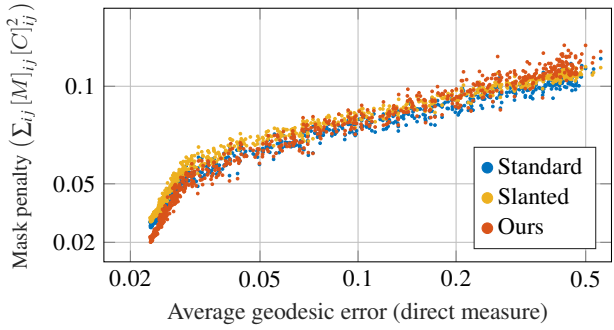}
\caption{Correlation between the mask penalty on a functional map and the accuracy of the corresponding point-wise map. For a TOSCA isometric pair, we sample 700 functional maps of size 50-by-50 with different quality (normalized to the same scale). We measure the penalty of different masks, as an indicator of the quality of the functional map. We also measure the average geodesic error (w.r.t. the direct measurement) of the recovered point-wise map of the corresponding functional map. The correlation between the functional map penalty and the point-wise map quality is visualized as a scatter plot with 700 samples. Compared to the standard and the slanted mask, our resolvent mask applies a smaller penalty to functional maps with good underlying point-wise maps and more heavily penalizes functional maps with bad underlying point-wise maps.}
\vspace{-0.5cm}
\label{fig:res:err_correlation}
\end{figure}

To justify that our resolvent mask is a better regularizer than the standard and the slanted ones, we also measure how the mask penalty relates to the accuracy of the corresponding point-wise map shown in Fig.~\ref{fig:res:err_correlation}. In this experiment, we generate 700 different point-wise maps with different levels of accuracy, then convert them to a functional map representation and measure the penalty added by different masks w.r.t. the average geodesic error computed from the pointwise maps. Each scatter point in Fig.~\ref{fig:res:err_correlation} shows such a test sample. 

We can observe that, compared to the standard and the slanted mask, the new mask induces a lower penalty on functional maps with a good quality (i.e., smaller average geodesic error), and penalizes a functional map with larger error more heavily. 
This further confirms that using our resolvent mask is more likely to produce a better functional, and ultimately
better pointwise map.

\subsection{Parameters}

\begin{figure}[!t]
    \centering
%
%
\definecolor{mycolor1}{rgb}{0.00000,0.44700,0.74100}%
\definecolor{mycolor2}{rgb}{0.92941,0.69412,0.12549}%
\definecolor{mycolor3}{rgb}{0.85098,0.32549,0.09804}%
\begin{tikzpicture}

\begin{axis}[%
width=0.4\linewidth,
height=0.4\linewidth,
at={(1.904in,0.693in)},
scale only axis,
xmode=log,
xmin=0,
xmax=10,
xminorticks=true,
ymin=0.04,
ymax=0.35,
every x tick label/.append style={font=\color{black}, font=\tiny},
every y tick label/.append style={font=\color{black}, font=\tiny},
axis background/.style={fill=white},
xmajorgrids,
xminorgrids,
ymajorgrids,
title={per-vertex measure},
legend style={at={(0.671,0.75)}, anchor=south west, legend cell align=left, align=left, draw=white!15!black}
]
\addplot [color=mycolor1, line width=1.0pt, mark size=0.6pt, mark=*, mark options={solid, mycolor1}]
  table[row sep=crcr]{%
0	0.34308\\
1e-06	0.32713\\
1.1e-05	0.25562\\
2.1e-05	0.2276\\
3.1e-05	0.21102\\
4.1e-05	0.20053\\
5.1e-05	0.19134\\
6.1e-05	0.18499\\
7.1e-05	0.17866\\
8.1e-05	0.17444\\
9.1e-05	0.17056\\
0.0002	0.15132\\
0.0003	0.14224\\
0.0004	0.1363\\
0.0005	0.13142\\
0.0006	0.12735\\
0.0007	0.12517\\
0.0008	0.12297\\
0.0009	0.12111\\
0.001	0.11933\\
0.002	0.10925\\
0.003	0.1033\\
0.004	0.098591\\
0.005	0.094639\\
0.006	0.092207\\
0.007	0.090654\\
0.008	0.089904\\
0.009	0.089209\\
0.01	0.08852\\
0.011	0.088641\\
0.012	0.087717\\
0.013	0.087634\\
0.014	0.087789\\
0.015	0.087778\\
0.016	0.087556\\
0.017	0.088057\\
0.018	0.087831\\
0.019	0.088255\\
0.02	0.088653\\
0.025	0.089001\\
0.03	0.090034\\
0.035	0.091244\\
0.04	0.092079\\
0.045	0.093776\\
0.05	0.094782\\
0.055	0.095461\\
0.06	0.096072\\
0.065	0.096751\\
0.07	0.097437\\
0.075	0.097903\\
0.08	0.098633\\
0.085	0.099156\\
0.09	0.09894\\
0.095	0.10018\\
0.1	0.1004\\
0.2	0.10651\\
0.3	0.11154\\
0.4	0.11458\\
0.5	0.11748\\
0.6	0.12026\\
0.7	0.12289\\
0.8	0.12515\\
0.9	0.12755\\
1	0.12965\\
3	0.15837\\
5	0.17705\\
7	0.19205\\
9	0.20542\\
};
\addlegendentry{standard}

\addplot [color=mycolor2, line width=1.0pt, mark size=0.6pt, mark=*, mark options={solid, mycolor2}]
  table[row sep=crcr]{%
0	0.34308\\
1e-06	0.16491\\
1.1e-05	0.076483\\
2.1e-05	0.068257\\
3.1e-05	0.065503\\
4.1e-05	0.063896\\
5.1e-05	0.063175\\
6.1e-05	0.062809\\
7.1e-05	0.062622\\
8.1e-05	0.062525\\
9.1e-05	0.062664\\
0.0002	0.063749\\
0.0003	0.065356\\
0.0004	0.066886\\
0.0005	0.068222\\
0.0006	0.06935\\
0.0007	0.07024\\
0.0008	0.071061\\
0.0009	0.071795\\
0.001	0.072481\\
0.002	0.077635\\
0.003	0.080687\\
0.004	0.083029\\
0.005	0.08475\\
0.006	0.086282\\
0.007	0.087607\\
0.008	0.088741\\
0.009	0.089595\\
0.01	0.090383\\
0.011	0.091041\\
0.012	0.091671\\
0.013	0.092225\\
0.014	0.092846\\
0.015	0.093353\\
0.016	0.093934\\
0.017	0.094481\\
0.018	0.094979\\
0.019	0.09543\\
0.02	0.095872\\
0.025	0.098219\\
0.03	0.099914\\
0.035	0.10156\\
0.04	0.10325\\
0.045	0.10461\\
0.05	0.10594\\
0.055	0.10725\\
0.06	0.10853\\
0.065	0.10989\\
0.07	0.111\\
0.075	0.11191\\
0.08	0.11288\\
0.085	0.11394\\
0.09	0.115\\
0.095	0.11587\\
0.1	0.11683\\
0.2	0.12879\\
0.3	0.13622\\
0.4	0.14148\\
0.5	0.14589\\
0.6	0.14932\\
0.7	0.15228\\
0.8	0.1546\\
0.9	0.15679\\
1	0.15887\\
3	0.17912\\
5	0.18949\\
7	0.19804\\
9	0.20487\\
};
\addlegendentry{slanted}

\addplot [color=mycolor3, line width=1.0pt, mark size=0.6pt, mark=*, mark options={solid, mycolor3}]
  table[row sep=crcr]{%
0	0.34308\\
1e-06	0.276\\
1.1e-05	0.16213\\
2.1e-05	0.13131\\
3.1e-05	0.11233\\
4.1e-05	0.10165\\
5.1e-05	0.09377\\
6.1e-05	0.088277\\
7.1e-05	0.083857\\
8.1e-05	0.080124\\
9.1e-05	0.077433\\
0.0002	0.064282\\
0.0003	0.060367\\
0.0004	0.058478\\
0.0005	0.057341\\
0.0006	0.056573\\
0.0007	0.056102\\
0.0008	0.055851\\
0.0009	0.055781\\
0.001	0.055666\\
0.002	0.056156\\
0.003	0.057119\\
0.004	0.058081\\
0.005	0.059144\\
0.006	0.060144\\
0.007	0.060997\\
0.008	0.061577\\
0.009	0.062325\\
0.01	0.062966\\
0.011	0.063606\\
0.012	0.064305\\
0.013	0.06488\\
0.014	0.065433\\
0.015	0.065909\\
0.016	0.066318\\
0.017	0.066743\\
0.018	0.067156\\
0.019	0.06754\\
0.02	0.067925\\
0.025	0.069548\\
0.03	0.070897\\
0.035	0.072003\\
0.04	0.072993\\
0.045	0.073958\\
0.05	0.074819\\
0.055	0.075643\\
0.06	0.076229\\
0.065	0.07699\\
0.07	0.077612\\
0.075	0.078144\\
0.08	0.07858\\
0.085	0.079141\\
0.09	0.079592\\
0.095	0.080096\\
0.1	0.080457\\
0.2	0.086736\\
0.3	0.090847\\
0.4	0.093558\\
0.5	0.095952\\
0.6	0.098167\\
0.7	0.10041\\
0.8	0.10282\\
0.9	0.10496\\
1	0.10705\\
3	0.13565\\
5	0.15615\\
7	0.17221\\
9	0.18407\\
};
\addlegendentry{ours}
\legend{}
\end{axis}
\end{tikzpicture}%
%
%
\definecolor{mycolor1}{rgb}{0.00000,0.44700,0.74100}%
\definecolor{mycolor2}{rgb}{0.92941,0.69412,0.12549}%
\definecolor{mycolor3}{rgb}{0.85098,0.32549,0.09804}%
\begin{tikzpicture}

\begin{axis}[%
width=0.4\linewidth,
height=0.4\linewidth,
at={(1.904in,0.693in)},
scale only axis,
xmode=log,
xmin=0,
xmax=10,
xminorticks=true,
ymin=0.1,
ymax=0.42,
every x tick label/.append style={font=\color{black}, font=\tiny},
every y tick label/.append style={font=\color{black}, font=\tiny},
axis background/.style={fill=white},
xmajorgrids,
xminorgrids,
ymajorgrids,
title = {direct measure},
legend style={legend cell align=left, align=left}
]
\addplot [color=mycolor1, line width=1.0pt, mark size=0.6pt, mark=*, mark options={solid, mycolor1}]
  table[row sep=crcr]{%
0	0.40556\\
1e-06	0.39152\\
1.1e-05	0.32949\\
2.1e-05	0.30423\\
3.1e-05	0.28819\\
4.1e-05	0.27771\\
5.1e-05	0.26849\\
6.1e-05	0.26157\\
7.1e-05	0.25497\\
8.1e-05	0.25047\\
9.1e-05	0.24601\\
0.0002	0.22338\\
0.0003	0.21223\\
0.0004	0.20486\\
0.0005	0.19882\\
0.0006	0.19437\\
0.0007	0.19156\\
0.0008	0.18896\\
0.0009	0.18658\\
0.001	0.18467\\
0.002	0.17266\\
0.003	0.16557\\
0.004	0.16027\\
0.005	0.15571\\
0.006	0.15307\\
0.007	0.15172\\
0.008	0.1513\\
0.009	0.15075\\
0.01	0.1504\\
0.011	0.1513\\
0.012	0.15099\\
0.013	0.15119\\
0.014	0.15217\\
0.015	0.15262\\
0.016	0.15311\\
0.017	0.15452\\
0.018	0.15477\\
0.019	0.15577\\
0.02	0.15692\\
0.025	0.15962\\
0.03	0.1627\\
0.035	0.16555\\
0.04	0.16796\\
0.045	0.17104\\
0.05	0.17302\\
0.055	0.17461\\
0.06	0.17617\\
0.065	0.1775\\
0.07	0.17883\\
0.075	0.17989\\
0.08	0.18112\\
0.085	0.18224\\
0.09	0.1825\\
0.095	0.18425\\
0.1	0.18463\\
0.2	0.19584\\
0.3	0.20297\\
0.4	0.20715\\
0.5	0.21088\\
0.6	0.21422\\
0.7	0.21736\\
0.8	0.21993\\
0.9	0.22237\\
1	0.22465\\
3	0.25256\\
5	0.27025\\
7	0.28437\\
9	0.29655\\
};
\addlegendentry{Standard}

\addplot [color=mycolor2, line width=1.0pt, mark size=0.6pt, mark=*, mark options={solid, mycolor2}]
  table[row sep=crcr]{%
0	0.40556\\
1e-06	0.25369\\
1.1e-05	0.15364\\
2.1e-05	0.1422\\
3.1e-05	0.13821\\
4.1e-05	0.13662\\
5.1e-05	0.13597\\
6.1e-05	0.13607\\
7.1e-05	0.13643\\
8.1e-05	0.13658\\
9.1e-05	0.1375\\
0.0002	0.1439\\
0.0003	0.14904\\
0.0004	0.15298\\
0.0005	0.15601\\
0.0006	0.15841\\
0.0007	0.16023\\
0.0008	0.16199\\
0.0009	0.16358\\
0.001	0.16492\\
0.002	0.17317\\
0.003	0.17774\\
0.004	0.1813\\
0.005	0.18378\\
0.006	0.1862\\
0.007	0.18803\\
0.008	0.1897\\
0.009	0.19102\\
0.01	0.19204\\
0.011	0.19307\\
0.012	0.19388\\
0.013	0.19474\\
0.014	0.19551\\
0.015	0.19621\\
0.016	0.19685\\
0.017	0.19739\\
0.018	0.19808\\
0.019	0.19845\\
0.02	0.19898\\
0.025	0.20177\\
0.03	0.20376\\
0.035	0.20577\\
0.04	0.20761\\
0.045	0.20908\\
0.05	0.21046\\
0.055	0.21164\\
0.06	0.21291\\
0.065	0.21425\\
0.07	0.21529\\
0.075	0.21618\\
0.08	0.21707\\
0.085	0.21811\\
0.09	0.21901\\
0.095	0.21991\\
0.1	0.22082\\
0.2	0.23227\\
0.3	0.23955\\
0.4	0.24467\\
0.5	0.24916\\
0.6	0.25247\\
0.7	0.2553\\
0.8	0.25736\\
0.9	0.25926\\
1	0.26133\\
3	0.27922\\
5	0.28853\\
7	0.29592\\
9	0.30177\\
};
\addlegendentry{Slanted}

\addplot [color=mycolor3, line width=1.0pt, mark size=0.6pt, mark=*, mark options={solid, mycolor3}]
  table[row sep=crcr]{%
0	0.40556\\
1e-06	0.3546\\
1.1e-05	0.24532\\
2.1e-05	0.20842\\
3.1e-05	0.18527\\
4.1e-05	0.17105\\
5.1e-05	0.16023\\
6.1e-05	0.15277\\
7.1e-05	0.14654\\
8.1e-05	0.14144\\
9.1e-05	0.13719\\
0.0002	0.11764\\
0.0003	0.11121\\
0.0004	0.10745\\
0.0005	0.10531\\
0.0006	0.10412\\
0.0007	0.10332\\
0.0008	0.10285\\
0.0009	0.10269\\
0.001	0.10261\\
0.002	0.10492\\
0.003	0.10864\\
0.004	0.1124\\
0.005	0.11625\\
0.006	0.11986\\
0.007	0.12309\\
0.008	0.12574\\
0.009	0.12818\\
0.01	0.1304\\
0.011	0.13259\\
0.012	0.13442\\
0.013	0.13605\\
0.014	0.13758\\
0.015	0.13904\\
0.016	0.14036\\
0.017	0.1416\\
0.018	0.14267\\
0.019	0.14377\\
0.02	0.14491\\
0.025	0.14915\\
0.03	0.15257\\
0.035	0.15539\\
0.04	0.15807\\
0.045	0.16035\\
0.05	0.16241\\
0.055	0.16412\\
0.06	0.16561\\
0.065	0.16712\\
0.07	0.16859\\
0.075	0.16994\\
0.08	0.17093\\
0.085	0.17209\\
0.09	0.17316\\
0.095	0.17418\\
0.1	0.17503\\
0.2	0.18583\\
0.3	0.19203\\
0.4	0.19572\\
0.5	0.19835\\
0.6	0.20077\\
0.7	0.20332\\
0.8	0.20526\\
0.9	0.20673\\
1	0.20825\\
3	0.23389\\
5	0.25204\\
7	0.26713\\
9	0.27761\\
};
\addlegendentry{Ours}

\end{axis}
\end{tikzpicture}%
    \caption{Changing the relative weight $\alpha_4$ of the mask term. We choose 68 different values for $\alpha_4$ in the range of 0 and 10 for this test.}
    \label{fig:res:change_alpha4}
\end{figure}
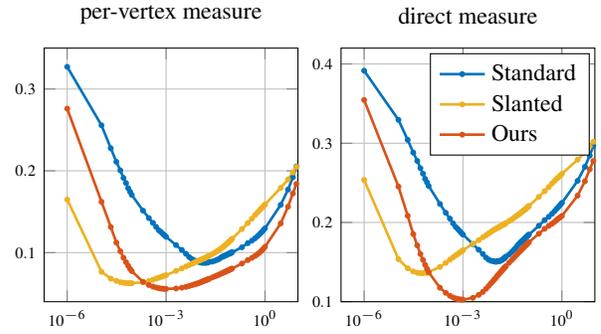

\begin{table}[!t]
\caption{Average error on 284 TOSCA pairs of different masks, each mask using its own optimal weight from Fig.~\ref{fig:res:change_alpha4}. Specifically, we set $\alpha_4$ to $10^{-2}$, $10^{-4}$, $10^{-3}$ for the standard, slanted, and our mask respectively.}
\label{tb:res:opti_mask_weight}
\centering
\footnotesize
\begin{tabular}{ccccccc}
\hline
\multirow{3}{*}{\begin{tabular}[c]{@{}c@{}}Mask\\ type\end{tabular}} & \multicolumn{6}{c}{Average error \small{$(\times 10^{-3})$}} \\ \cline{2-7} 
 & \multicolumn{3}{c|}{per-vertex measure} & \multicolumn{3}{c}{direct measure} \\ \cline{2-7} 
 & \footnotesize Ini & \footnotesize{+ ICP} & \multicolumn{1}{c|}{\footnotesize + BCICP} &\footnotesize Ini &\footnotesize + ICP &\footnotesize + BCICP \\ \hline
standard & 87.26 & 65.97 & \multicolumn{1}{c|}{41.22} & 178.8 & 130.8 & 76.87 \\
slanted & 64.66 & 50.76 & \multicolumn{1}{c|}{37.45} & 166.9 & 131.5 & 100.0 \\
\textbf{ours} & \textbf{55.52} & \textbf{43.82} & \multicolumn{1}{c|}{\textbf{32.48}} & \textbf{124.5} & \textbf{90.6} & \textbf{62.33} \\ \hline
\end{tabular}
\vspace{-0.2cm}
\end{table}

In our tests, we use $\gamma = 0.5$ and $w = 0.5$ to construct our resolvent mask, and use the default value $\eta = 0.03$ to construct the slanted mask. When comparing the three masks in the initialization setting, i.e., to optimize the energy defined in~\eqref{eq:energy:complete}, the weights $\alpha_i$ are set to the same values as reported in~\cite{ren2018continuous} \hl{but in a relative way (see Appendix~{\ref{appendix:descSize}} for more details)}. As for the comparison of the ICP and BCICP refinement settings, the same default parameters and the same number of iterations are used for different masks. 

Fig.~\ref{fig:res:change_alpha4} shows the results of the changing the weight of the mask term, i.e., $\alpha_4$, while keeping the rest weights fixed on 50 TOSCA shape pairs. We can see that, for all different choices of the weight, our resolvent mask is always better than the standard mask. When the weight is small enough, it seems the slanted mask can achieve a faster decrease of the error than ours. However, the range of the effective weight is smaller than ours besides that the construction is purely heuristic and lacks a theoretical justification.

\begin{figure}[!t]
  \vspace{0.5cm}
  \centering
  \begin{overpic}
  [trim=0cm 0cm 0cm 0cm,clip,width=1\linewidth, grid=false]{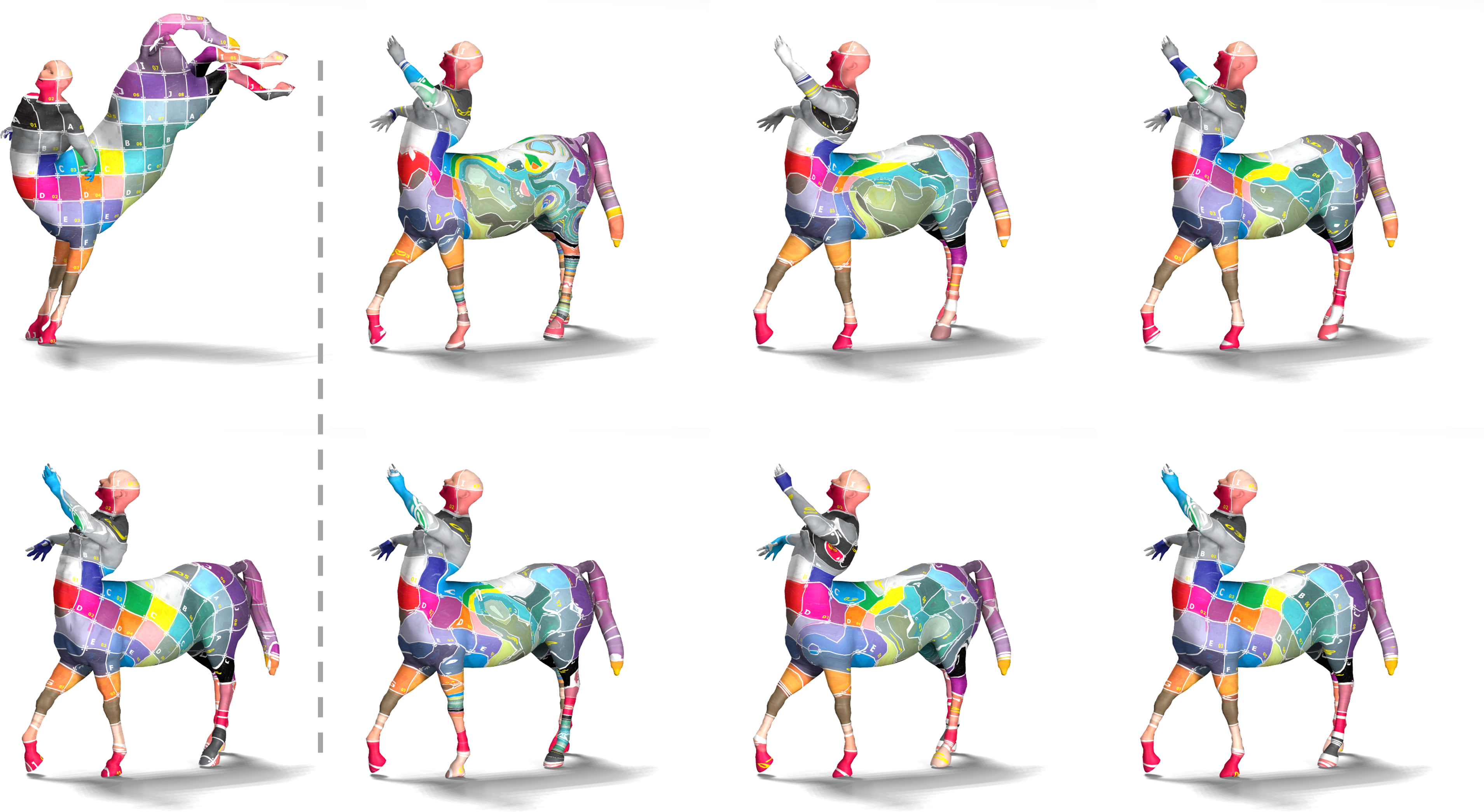}
  \put(1.8,53.5){Source}
  \put(25.5,53.5){Standard}
  \put(53,53.5){Slanted}
  \put(81,53.5){\textbf{Ours}}
  
  \put(0,25){Ground-truth}
  \put(96,50){\rotatebox{-90}{Initialization}}
  \put(96,17){\rotatebox{-90}{+ ICP}}
  \end{overpic}
  \caption{Example. Comparing the quality of the initial maps computed with different masks and after ICP refinement via texture transfer. Each mask uses its own optimal weight from Fig.~\ref{fig:res:change_alpha4}.}
  \label{fig:res:eg:centaur}
  \vspace{-0.2cm}
\end{figure}

On the other hand, Fig.~\ref{fig:res:change_alpha4} also suggests that different masks have their own preferred choice of the weight $\alpha_4$. Therefore, instead of fixing $\alpha_4$, we set it independently w.r.t. the optimal values on a subset reflected in Fig.~\ref{fig:res:change_alpha4}. Specifically, for the standard mask, we set $\alpha_4 = 10^{-2}$, for the slanted mask, we set $\alpha_4 = 10^{-4}$, and for ours, we set $\alpha_4 = 10^{-3}$. The corresponding average errors on the complete TOSCA dataset are reported in Table~\ref{tb:res:opti_mask_weight}. We can see that, even when the parameters are carefully tuned for the other two masks, our resolvent mask still achieves the best accuracy. Fig.~\ref{fig:res:eg:centaur} shows a qualitative example.

\subsection{Application to non-isometric shape pairs}
\begin{figure}[!th]
    \centering
    \begin{overpic}
    [trim=0cm 0cm 0cm 0cm,clip,width=1\columnwidth, grid=false]{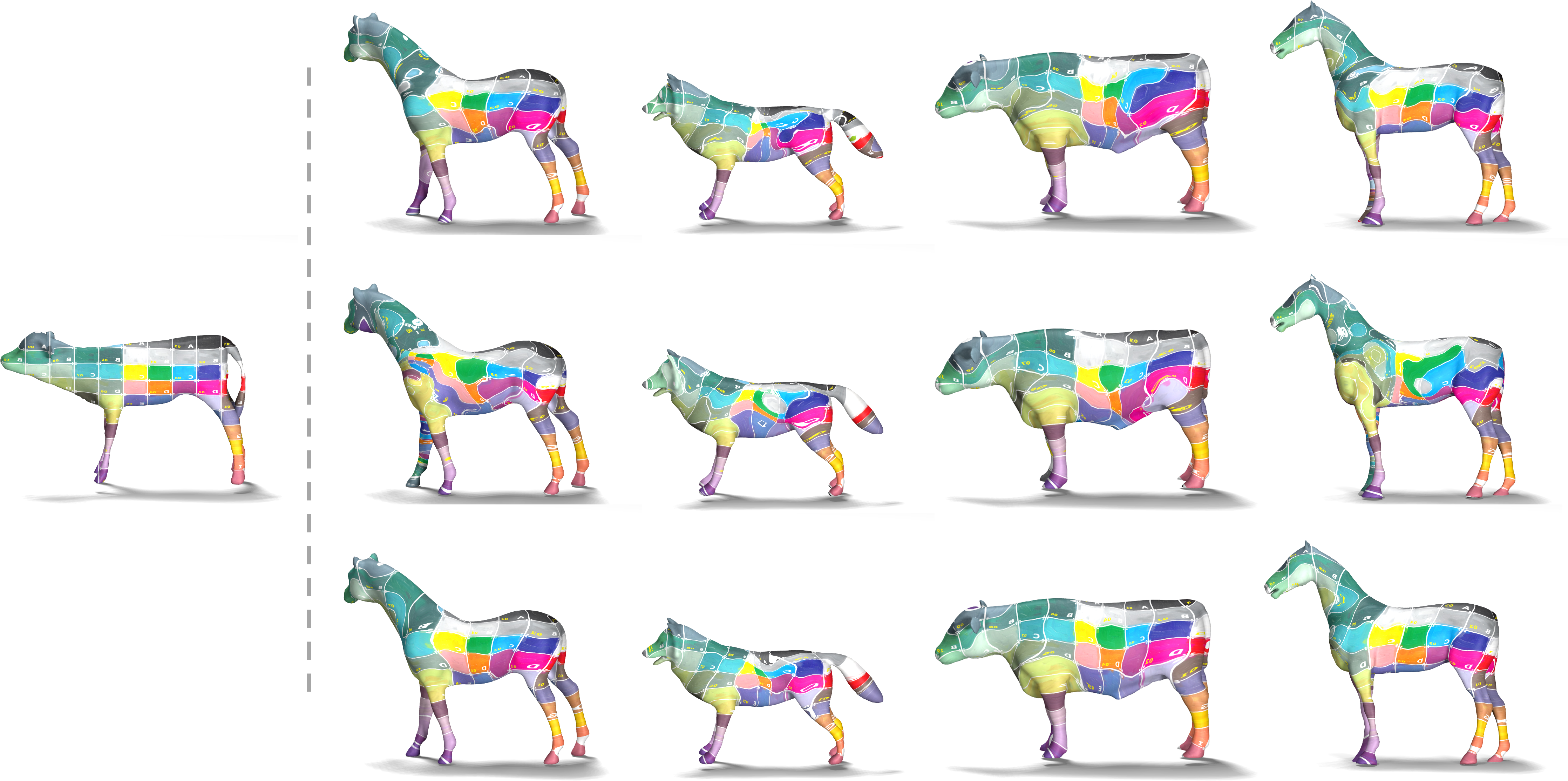}
    \put(4,32){Source}
    \put(98,49){\rotatebox{-90}{Standard}}
    \put(98,31){\rotatebox{-90}{Slanted}}
    \put(98,12){\rotatebox{-90}{\textbf{Ours}}}
    
  \end{overpic}
    \caption{Example. Comparing the quality of the initial maps computed with different masks of four SHREC shape pairs via texture transfer.}
    \label{fig:res:shrec:texture}
    \vspace{-0.2cm}
\end{figure}

\begin{figure}[!t]
    \centering
    \input{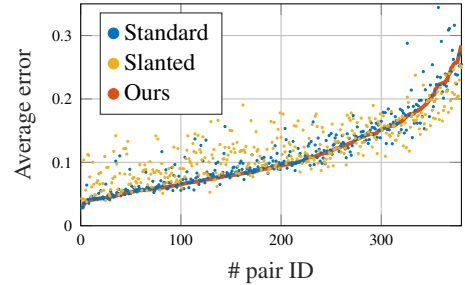}
    \caption{For 380 SHREC shape pairs, we plot the average error for each pair with different masks, where the result of the standard, slanted and our mask is colored in blue, yellow, and red respectively. Therefore, in the case where the red curve is below the blue or the yellow points, our mask leads to a lower error for these pairs.}
    \label{fig:res:shrec:error}
\end{figure}

\begin{figure}[!t]
    \centering
    \begin{overpic}
    [trim=0cm 1cm 0cm 0cm,clip,width=1\columnwidth, grid=false]{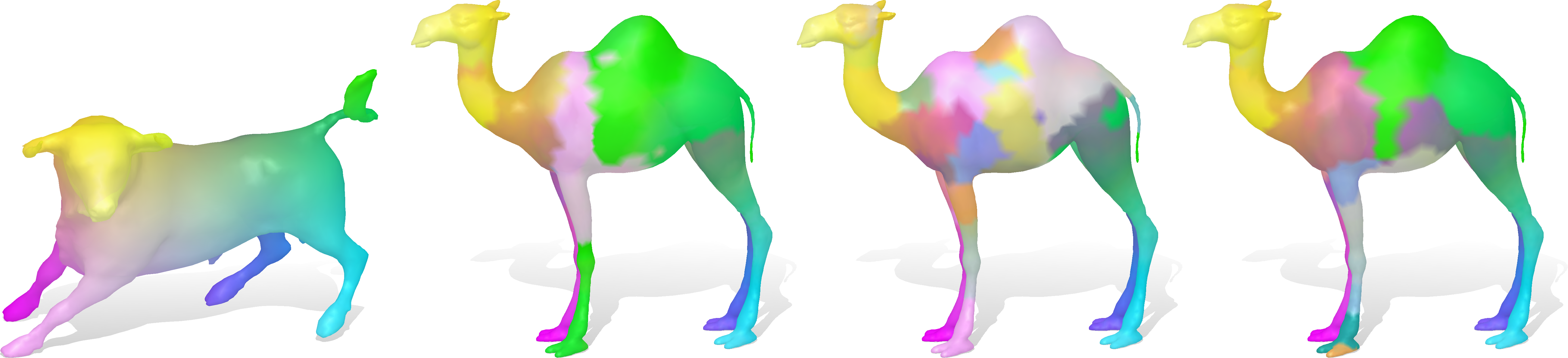}
    \put(7,24){Source}
    \put(33,24){Standard}
    \put(59,24){Slanted}
    \put(85,24){\textbf{Ours}}
  \end{overpic}
    \caption{Failure case. Here we show a challenging shape pair from SHREC, where all the three masks fail to produce a good map.}
    \label{fig:res:failure}
    \vspace{-0.2cm}
\end{figure}

To show the usefulness of our resolvent mask on non-isometric shape pairs, we test the 20 FourLeg shapes from the SHREC 2007 dataset~\cite{giorgi2007shape}. 
Specifically, we use 10 pairs of descriptors constructed from 4 landmarks to optimize a 120-by-120 functional map using the standard, the slanted, and our resolvent mask.
For a fair comparison, we use the same parameters as the previous tests: we set $\gamma = 0.5$ for our resolvent mask, and set $\eta = 0.03$ for the slanted mask.
Fig.~\ref{fig:res:shrec:texture} shows a qualitative example. 
We can see that our resolvent mask gives the best initialization. 
We also measure the accuracy of the 380 point-wise maps among the 20 shapes on the given landmarks (21 ground-truth landmarks are given for each shape).
The average error for the standard, slanted, and our mask is 0.114, 0.122, and 0.107 respectively. 
Fig.~\ref{fig:res:shrec:error} reports the average error for each shape pair. 
We observe that resolvent mask gives a better result than the standard mask on 68.4\% out of 380 pairs, and outperforms the slanted mask on 69.5\% pairs.
The limited quantitative improvement of our mask over the other two masks is due to: 
(1) for the shape pairs where our mask significantly outperforms the other two, e.g., as shown in Fig.~\ref{fig:res:shrec:texture}, the map quality is measured on only 21 landmarks, where the average error does not fully reflect our improvement.
(2) for some extremely challenging pairs, e.g., a failure case shown in Fig.~\ref{fig:res:failure}, all the maps from different masks have a poor quality, and thus the ``relative improvement'' on the average error is not informative. 
Moreover, we also computed the average error in the case where the mask term is removed from the total energy, i.e., set $\alpha_4 = 0$. In this case, the average error over the complete dataset is 0.112. We can see that, the standard and the slanted mask can have a negative effective in this case, while our resolvent mask still works to improve the map quality to some extent.

\section{Conclusion, Limitations and Future Work}
In this paper, we proposed a new regularizer, the resolvent Laplacian commutativity, for the functional map pipeline.  We first analyzed the limitations of the original Laplacian commutativity term and theoretically justified the effectiveness of our proposed new term. This new regularizer can significantly improve the quality of the computed functional maps and the corresponding recovered point-wise maps before or after refinement.

However, our method also has several limitations that we would like to overcome in future work.  First, our proposed regularizer is well justified on isometric and non-isometric shape pairs, but not on partial shapes, where the ground-truth functional maps can have a different structure. Therefore, it would be interesting to extend the analysis to partial shape pairs.
Second, besides the funnel pattern, in our experiments, we also observed the slanted-diagonal structure of the ground-truth functional map of some non-isometric shape pairs as discussed in~\cite{rodola2017partial}. It would be interesting to consider this feature into the mask constructions for further improvement for non-isometric datasets. 
\hl{
Thirdly, the role of the (a, b) introduced in Eq.~(5) is not well studied, and we would like to leave this exploration as future work. 
}
Finally, we believe that it would be interesting to study and potentially apply data-driven techniques to 
learn the optimal operators for enforcing commutativity across diverse shape pairs.


\paragraph*{Acknowledgement}
The authors would like to thank the anonymous reviewers for their valuable comments and helpful suggestions.
Parts of this work were supported by the KAUST OSR Awards No. CRG2017-3426 and CRG2018-3730, a gift from the NVIDIA Corporation, and the ERC Starting Grant No. 758800 (EXPROTEA).

\bibliographystyle{eg-alpha}
\bibliography{reference}

\newcommand{\etalchar}[1]{$^{#1}$}
\begin{thebibliography}{\uppercase{VKZHCO11}}

\bibitem[ADK16]{aflalo2013spectral}
\textsc{Aflalo Y., Dubrovina A., Kimmel R.}:
\newblock Spectral generalized multi-dimensional scaling.
\newblock \emph{International Journal of Computer Vision 118}, 3 (2016),
  380--392.

\bibitem[ASC11]{aubry2011wave}
\textsc{Aubry M., Schlickewei U., Cremers D.}:
\newblock The wave kernel signature: A quantum mechanical approach to shape
  analysis.
\newblock In \emph{2011 IEEE international conference on computer vision
  workshops (ICCV workshops)} (2011), IEEE, pp.~1626--1633.

\bibitem[BBK08]{tosca}
\textsc{Bronstein A.~M., Bronstein M.~M., Kimmel R.}:
\newblock \emph{{N}umerical {G}eometry of {N}on-{R}igid {S}hapes}.
\newblock Springer Science \& Business Media, 2008.

\bibitem[BCBB16]{biasotti2016recent}
\textsc{Biasotti S., Cerri A., Bronstein A., Bronstein M.}:
\newblock Recent trends, applications, and perspectives in 3d shape similarity
  assessment.
\newblock In \emph{Computer Graphics Forum} (2016), vol.~35, pp.~87--119.

\bibitem[BDK17]{burghard2017embedding}
\textsc{Burghard O., Dieckmann A., Klein R.}:
\newblock Embedding shapes with {G}reen's functions for global shape matching.
\newblock \emph{Computers \& Graphics 68} (2017), 1--10.

\bibitem[BRLB14]{Bogo:CVPR:2014}
\textsc{Bogo F., Romero J., Loper M., Black M.~J.}:
\newblock {FAUST}: Dataset and evaluation for {3D} mesh registration.
\newblock In \emph{Proceedings IEEE Conf. on Computer Vision and Pattern
  Recognition (CVPR)} (Piscataway, NJ, USA, June 2014), IEEE.

\bibitem[CSBK18]{choukroun2018hamiltonian}
\textsc{Choukroun Y., Shtern A., Bronstein A.~M., Kimmel R.}:
\newblock Hamiltonian operator for spectral shape analysis.
\newblock \emph{IEEE transactions on visualization and computer graphics}
  (2018).

\bibitem[Cut13]{cuturi2013sinkhorn}
\textsc{Cuturi M.}:
\newblock Sinkhorn distances: Lightspeed computation of optimal transport.
\newblock In \emph{Advances in neural information processing systems} (2013),
  pp.~2292--2300.

\bibitem[Dod81]{dodziuk1981eigenvalues}
\textsc{Dodziuk J.}:
\newblock Eigenvalues of the {L}aplacian and the heat equation.
\newblock \emph{The American Mathematical Monthly 88}, 9 (1981), 686--695.

\bibitem[EBC17]{ezuz2017deblurring}
\textsc{Ezuz D., Ben-Chen M.}:
\newblock Deblurring and denoising of maps between shapes.
\newblock In \emph{Computer Graphics Forum} (2017), vol.~36, Wiley Online
  Library, pp.~165--174.

\bibitem[ERGB16]{eynard2016coupled}
\textsc{Eynard D., Rodola E., Glashoff K., Bronstein M.~M.}:
\newblock Coupled functional maps.
\newblock In \emph{3D Vision (3DV), 2016 Fourth International Conference on}
  (2016), IEEE, pp.~399--407.

\bibitem[GBKS18]{gehre2018interactive}
\textsc{Gehre A., Bronstein M., Kobbelt L., Solomon J.}:
\newblock Interactive curve constrained functional maps.
\newblock In \emph{Computer Graphics Forum} (2018), vol.~37, Wiley Online
  Library, pp.~1--12.

\bibitem[GBP]{giorgi2007shape}
\textsc{Giorgi D., Biasotti S., Paraboschi L.}:
\newblock Shape retrieval contest 2007: Watertight models track.

\bibitem[HO17]{huang2017adjoint}
\textsc{Huang R., Ovsjanikov M.}:
\newblock Adjoint map representation for shape analysis and matching.
\newblock In \emph{Computer Graphics Forum} (2017), vol.~36, Wiley Online
  Library, pp.~151--163.

\bibitem[HRS{\etalchar{*}}16]{heeren2016splines}
\textsc{Heeren B., Rumpf M., Schr{\"o}der P., Wardetzky M., Wirth B.}:
\newblock Splines in the space of shells.
\newblock In \emph{Computer Graphics Forum} (2016), vol.~35, Wiley Online
  Library, pp.~111--120.

\bibitem[HWG14]{huang2014functional}
\textsc{Huang Q., Wang F., Guibas L.}:
\newblock Functional map networks for analyzing and exploring large shape
  collections.
\newblock \emph{ACM Transactions on Graphics (TOG) 33}, 4 (2014), 36.

\bibitem[KBB{\etalchar{*}}13]{kovnatsky2013coupled}
\textsc{Kovnatsky A., Bronstein M.~M., Bronstein A.~M., Glashoff K., Kimmel
  R.}:
\newblock Coupled quasi-harmonic bases.
\newblock In \emph{Computer Graphics Forum} (2013), vol.~32, pp.~439--448.

\bibitem[KBBV15]{kovnatsky2015functional}
\textsc{Kovnatsky A., Bronstein M.~M., Bresson X., Vandergheynst P.}:
\newblock Functional correspondence by matrix completion.
\newblock In \emph{Proceedings of the IEEE conference on computer vision and
  pattern recognition} (2015), pp.~905--914.

\bibitem[KGB16]{kovnatsky2016madmm}
\textsc{Kovnatsky A., Glashoff K., Bronstein M.~M.}:
\newblock {MADMM:} a generic algorithm for non-smooth optimization on
  manifolds.
\newblock In \emph{Proc. ECCV} (2016), Springer, pp.~680--696.

\bibitem[LRB{\etalchar{*}}16]{litany2016non}
\textsc{Litany O., Rodol{\`a} E., Bronstein A.~M., Bronstein M.~M., Cremers
  D.}:
\newblock Non-rigid puzzles.
\newblock In \emph{Computer Graphics Forum} (2016), vol.~35, Wiley Online
  Library, pp.~135--143.

\bibitem[LRBB17]{litany2017fully}
\textsc{Litany O., Rodol{\`a} E., Bronstein A.~M., Bronstein M.~M.}:
\newblock Fully spectral partial shape matching.
\newblock In \emph{Computer Graphics Forum} (2017), vol.~36, Wiley Online
  Library, pp.~247--258.

\bibitem[MCSK{\etalchar{*}}17]{mandad2017variance}
\textsc{Mandad M., Cohen-Steiner D., Kobbelt L., Alliez P., Desbrun M.}:
\newblock Variance-minimizing transport plans for inter-surface mapping.
\newblock \emph{ACM Transactions on Graphics (TOG) 36} (2017), 14.

\bibitem[MP49]{minakshisundaram1949some}
\textsc{Minakshisundaram S., Pleijel A.}:
\newblock Some properties of the eigenfunctions of the {L}aplace-operator on
  {R}iemannian manifolds.
\newblock \emph{Canadian J. Math 1}, 8 (1949).

\bibitem[MRCB18]{melzi2018localized}
\textsc{Melzi S., Rodol{\`a} E., Castellani U., Bronstein M.~M.}:
\newblock Localized manifold harmonics for spectral shape analysis.
\newblock In \emph{Computer Graphics Forum} (2018), vol.~37, Wiley Online
  Library, pp.~20--34.

\bibitem[NMR{\etalchar{*}}18]{nogneng2018improved}
\textsc{Nogneng D., Melzi S., Rodol{\`a} E., Castellani U., Bronstein M.,
  Ovsjanikov M.}:
\newblock Improved functional mappings via product preservation.
\newblock In \emph{Computer Graphics Forum} (2018), vol.~37, Wiley Online
  Library, pp.~179--190.

\bibitem[NO17]{nogneng2017informative}
\textsc{Nogneng D., Ovsjanikov M.}:
\newblock Informative descriptor preservation via commutativity for shape
  matching.
\newblock \emph{Computer Graphics Forum 36}, 2 (2017), 259--267.

\bibitem[NVT{\etalchar{*}}14]{neumann2014compressed}
\textsc{Neumann T., Varanasi K., Theobalt C., Magnor M., Wacker M.}:
\newblock Compressed manifold modes for mesh processing.
\newblock In \emph{Computer Graphics Forum} (2014), vol.~33, Wiley Online
  Library, pp.~35--44.

\bibitem[OBCS{\etalchar{*}}12]{ovsjanikov2012functional}
\textsc{Ovsjanikov M., Ben-Chen M., Solomon J., Butscher A., Guibas L.}:
\newblock {F}unctional {M}aps: {A} {F}lexible {R}epresentation of {M}aps
  {B}etween {S}hapes.
\newblock \emph{ACM Transactions on Graphics (TOG) 31}, 4 (2012), 30.

\bibitem[OCB{\etalchar{*}}17]{ovsjanikov2017course}
\textsc{Ovsjanikov M., Corman E., Bronstein M., Rodol\`{a} E., Ben-Chen M.,
  Guibas L., Chazal F., Bronstein A.}:
\newblock Computing and processing correspondences with functional maps.
\newblock In \emph{ACM SIGGRAPH 2017 Courses} (2017), SIGGRAPH '17,
  pp.~5:1--5:62.

\bibitem[PSO18]{poulenard2018topological}
\textsc{Poulenard A., Skraba P., Ovsjanikov M.}:
\newblock Topological function optimization for continuous shape matching.
\newblock In \emph{Computer Graphics Forum} (2018), vol.~37, Wiley Online
  Library, pp.~13--25.

\bibitem[RCB{\etalchar{*}}17]{rodola2017partial}
\textsc{Rodol{\`a} E., Cosmo L., Bronstein M.~M., Torsello A., Cremers D.}:
\newblock Partial functional correspondence.
\newblock In \emph{Computer Graphics Forum} (2017), vol.~36, Wiley Online
  Library, pp.~222--236.

\bibitem[RMC15]{rodola-vmv15}
\textsc{Rodol\`{a} E., Moeller M., Cremers D.}:
\newblock Point-wise map recovery and refinement from functional
  correspondence.
\newblock In \emph{Proc. Vision, Modeling and Visualization (VMV)} (2015).

\bibitem[ROA{\etalchar{*}}13]{rustamov13}
\textsc{Rustamov R., Ovsjanikov M., Azencot O., Ben-Chen M., Chazal F., Guibas
  L.}:
\newblock Map-based exploration of intrinsic shape differences and variability.
\newblock \emph{ACM Transactions on Graphics (TOG) 32}, 4 (July 2013),
  72:1--72:12.

\bibitem[RPWO18]{ren2018continuous}
\textsc{Ren J., Poulenard A., Wonka P., Ovsjanikov M.}:
\newblock Continuous and orientation-preserving correspondences via functional
  maps.
\newblock \emph{ACM Transactions on Graphics (TOG) 37}, 6 (2018).

\bibitem[RS80]{reedsimon1}
\textsc{Reed M., Simon B.}:
\newblock \emph{Methods of Modern Mathematical Physics I: Functional Analysis},
  revised and enlarged~ed.
\newblock Academic Press, San Diego, 1980.

\bibitem[Sau06a]{sauvigny2006partial1}
\textsc{Sauvigny F.}:
\newblock \emph{Partial Differential Equations 1: Foundations and Integral
  Representations}.
\newblock Springer, Berlin, Heidelberg, 2006.

\bibitem[Sau06b]{sauvigny2006partial2}
\textsc{Sauvigny F.}:
\newblock \emph{Partial Differential Equations 2: Functional Analytic Methods}.
\newblock Springer, Berlin, Heidelberg, 2006.

\bibitem[SDGP{\etalchar{*}}15]{solomon2015convolutional}
\textsc{Solomon J., De~Goes F., Peyr{\'e} G., Cuturi M., Butscher A., Nguyen
  A., Du T., Guibas L.}:
\newblock Convolutional wasserstein distances: Efficient optimal transportation
  on geometric domains.
\newblock \emph{ACM Transactions on Graphics (TOG) 34}, 4 (2015), 66.

\bibitem[SPKS16]{solomon2016entropic}
\textsc{Solomon J., Peyr{\'e} G., Kim V.~G., Sra S.}:
\newblock Entropic metric alignment for correspondence problems.
\newblock \emph{ACM Transactions on Graphics (TOG) 35}, 4 (2016), 72.

\bibitem[TCL{\etalchar{*}}13]{tam2013registration}
\textsc{Tam G.~K., Cheng Z.-Q., Lai Y.-K., Langbein F.~C., Liu Y., Marshall D.,
  Martin R.~R., Sun X.-F., Rosin P.~L.}:
\newblock Registration of {3D} point clouds and meshes: a survey from rigid to
  nonrigid.
\newblock \emph{IEEE TVCG 19}, 7 (2013), 1199--1217.

\bibitem[VKZHCO11]{van2011survey}
\textsc{Van~Kaick O., Zhang H., Hamarneh G., Cohen-Or D.}:
\newblock A survey on shape correspondence.
\newblock In \emph{Computer Graphics Forum} (2011), vol.~30, pp.~1681--1707.

\bibitem[VLR{\etalchar{*}}17]{vestner2017product}
\textsc{Vestner M., Litman R., Rodol{\`a} E., Bronstein A., Cremers D.}:
\newblock Product manifold filter: Non-rigid shape correspondence via kernel
  density estimation in the product space.
\newblock In \emph{Proc. CVPR} (2017), pp.~6681--6690.

\bibitem[WGBS18]{wang2018kernel}
\textsc{Wang L., Gehre A., Bronstein M.~M., Solomon J.}:
\newblock Kernel functional maps.
\newblock In \emph{Computer Graphics Forum} (2018), vol.~37, Wiley Online
  Library, pp.~27--36.

\bibitem[WHG13]{wang2013image}
\textsc{Wang F., Huang Q., Guibas L.~J.}:
\newblock Image co-segmentation via consistent functional maps.
\newblock In \emph{Proceedings of the IEEE International Conference on Computer
  Vision} (2013), pp.~849--856.

\bibitem[WLZT18]{wang2018vector}
\textsc{Wang Y., Liu B., Zhou K., Tong Y.}:
\newblock Vector field map representation for near conformal surface
  correspondence.
\newblock In \emph{Computer Graphics Forum} (2018), vol.~37, Wiley Online
  Library, pp.~72--83.

\end{thebibliography}
\appendix
\section{Pullbacks are Bounded}\label{appendixPullback}

\begin{lemma}
Let $\Phi:\mathcal{M} \to \mathcal{N}$ be a diffeomorphism between connected compact oriented Riemannian manifolds. Then, the associated pullback $\Phi^{*}$ is bounded as a map $L_{2}(\mathcal{N}) \to L_{2}(\mathcal{M})$.
\end{lemma}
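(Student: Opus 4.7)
The plan is to prove boundedness by a direct change-of-variables computation, exploiting the compactness hypothesis to control the Jacobian factor. Writing $\mu_{\mathcal{M}}$ and $\mu_{\mathcal{N}}$ for the Riemannian volume measures, I would begin by unfolding the definition of the $L_2$ norm of the pullback:
\begin{equation}
\|\Phi^{*} f\|^{2}_{L_{2}(\mathcal{M})} = \int_{\mathcal{M}} |f(\Phi(x))|^{2} \, d\mu_{\mathcal{M}}(x).
\end{equation}
Since $\Phi$ is a diffeomorphism between oriented manifolds, the pullback of the volume form satisfies $\Phi^{*} d\mu_{\mathcal{N}} = J_{\Phi} \, d\mu_{\mathcal{M}}$ for some smooth strictly positive function $J_{\Phi}$ on $\mathcal{M}$, namely the absolute value of the determinant of $d\Phi$ computed with respect to local orthonormal frames on the two manifolds.

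The next step is to apply the change of variables $y = \Phi(x)$, which transports the integral to $\mathcal{N}$ and produces the reciprocal Jacobian factor:
\begin{equation}
\int_{\mathcal{M}} |f(\Phi(x))|^{2} \, d\mu_{\mathcal{M}}(x) = \int_{\mathcal{N}} |f(y)|^{2} \, J_{\Phi^{-1}}(y) \, d\mu_{\mathcal{N}}(y),
\end{equation}
where $J_{\Phi^{-1}} = 1/(J_{\Phi} \circ \Phi^{-1})$ is the analogous Jacobian of the inverse diffeomorphism, and is again smooth and strictly positive on $\mathcal{N}$.

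At this stage I invoke compactness: $\mathcal{N}$ is compact and $J_{\Phi^{-1}}$ is continuous on it, so $C := \sup_{y \in \mathcal{N}} J_{\Phi^{-1}}(y) < \infty$. Substituting back yields
\begin{equation}
\|\Phi^{*} f\|^{2}_{L_{2}(\mathcal{M})} \;\le\; C \int_{\mathcal{N}} |f(y)|^{2} \, d\mu_{\mathcal{N}}(y) \;=\; C \, \|f\|^{2}_{L_{2}(\mathcal{N})},
\end{equation}
which is the desired bound with operator norm at most $\sqrt{C}$.

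No step here is genuinely hard; the only mildly delicate point is justifying the identity $\Phi^{*} d\mu_{\mathcal{N}} = J_{\Phi} \, d\mu_{\mathcal{M}}$ with $J_{\Phi}$ smooth and bounded away from zero. This is standard differential geometry (it is the statement that a diffeomorphism between oriented Riemannian manifolds pulls the volume form back to a smooth positive multiple of the source volume form), but I would state it explicitly and note that compactness of $\mathcal{M}$ guarantees a uniform positive lower bound on $J_{\Phi}$, hence a uniform upper bound on $J_{\Phi^{-1}}$. The hypotheses of connectedness and orientedness are essentially cosmetic here; they only ensure that the volume forms are globally defined, while the actual analytic content of the lemma rests entirely on compactness plus smoothness of $\Phi$ and $\Phi^{-1}$.
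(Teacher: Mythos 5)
Your proof is correct and follows essentially the same route as the paper's: transport the integral to $\mathcal{N}$ via the (inverse) diffeomorphism, identify the resulting Jacobian/volume-form factor as a continuous function on $\mathcal{N}$, and bound it by its supremum, which is finite by compactness. The only cosmetic difference is that the paper first reduces to smooth $f$ by density and writes the factor as $|u|$ with $(\Phi^{-1})^{*}d\mathcal{M} = u\,d\mathcal{N}$, whereas you phrase it directly as a change of variables with a positive Jacobian; the substance is identical.
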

\begin{proof}
Let $d\mathcal{M}$ and $d\mathcal{N}$ denote the volume forms of $\mathcal{M}$ and $\mathcal{N}$, respectively. Since $C^{\infty}$ is dense in $L_{2}$, it is enough to show that there exists a constant $B > 0$ such that, for any $f \in C^{\infty}(\mathcal{N})$,

\begin{equation}
    \int_{\mathcal{M}} \left( \Phi^{*} f \right)^{2} d \mathcal{M} \leq B \int_{\mathcal{N}} f^{2} d \mathcal{N}~.
\end{equation}

\noindent By virtue of being a diffeomorphism, $\Phi$ is invertible. Using the pullback of $\Phi^{-1}$, we can express the left-hand side of the desired inequality as an integral over $\mathcal{N}$.

Begin by considering $\left[ \left(\Phi^{-1} \right)^{*}d \mathcal{M} \right]$, the pullback of the volume form of $\mathcal{M}$. Since volume forms are top degree forms, there exists $u \in C^{\infty}(\mathcal{N})$ such that

\begin{equation}
    \left[ \left(\Phi^{-1} \right)^{*}d \mathcal{M} \right] = u d \mathcal{N}
\end{equation}

\noindent $\Phi$ is either orientation preserving or orientation reversing. In the former case, $u > 0$. In the latter, $u < 0$. In either case, the left-hand side of the desired inequality can be recast as

\begin{equation}
\begin{split}
    \int_{\mathcal{M}} \left( \Phi^{*} f \right)^{2} d \mathcal{M} &= \int_{\mathcal{N}} \left( \left(\Phi^{-1} \right)^{*} \Phi^{*} f \right)^{2} |u| d\mathcal{N}\\
    &= \int_{\mathcal{N}}  f^{2} |u| d\mathcal{N}~.
\end{split}
\end{equation}

\noindent It remains to bound this expression.

\begin{equation}
        \int_{\mathcal{N}}  f^{2} |u| d\mathcal{N} \leq \sup_{x \in \mathcal{N}} |u(x)| \int_{\mathcal{N}}  f^{2} d\mathcal{N}
\end{equation}

\noindent Since $\mathcal{N}$ is compact and $u$ is continuous, the supremum is achieved and is finite. This concludes the proof.
\end{proof}

\section{Bounded Frobenius Norm for $\gamma > 1/2$} \label{appendixHilbertSchmidt}

\noindent In this appendix, we prove a sufficient condition for the Frobenius norm based energy to be defined in the continuous case. We begin by introducing an infinite dimensional analog of the Frobenius norm, which is provided by the Hilbert-Schmidt norm.

\begin{definition}[Hilbert-Schmidt Norm]
Let $A: \mathcal{H}_1 \to \mathcal{H}_2$ be a linear operator between Hilbert spaces. Let $A^{\dagger}$ denote the adjoint of the operator $A$. Then, the Hilbert-Schmidt norm of $A$ is given by:
\begin{equation}
    \| A \|^{2}_{HS} = \text{Tr}\left( A^{\dagger} A \right) = \sum_{i=1}^{\infty} \langle e_{i} , A^{\dagger} A e_{i}  \rangle~,
\end{equation}

\noindent where $\{e_{i}\}_{i=1}^{\infty}$ is any orthonormal basis of $\mathcal{H}_1$. This norm is also known as the Schatten $2$-norm.
\end{definition}

\noindent The following lemma is the main result of this appendix.

\begin{lemma} \label{LemmaBoundedEnergy}
Let $\Delta_1$ and $\Delta_2$ be Laplacians on compact, connected, oriented surfaces $\mathcal{M}_1$ and $\mathcal{M}_2$, respectively. Let $\matr{C}_{12}: L_2(\mathcal{M}_1) \to L_2(\mathcal{M}_2)$ be a bounded operator. If $\gamma > 1/2$, then:
\begin{equation}
    \left\| \matr{C}_{12} R_{\mu} \left( \Delta_1^{\gamma} \right) - R_{\mu} \left( \Delta_2^{\gamma} \right) \matr{C}_{12}  \right\|^{2}_{HS} < \infty~,
\end{equation}

\noindent where $\mu$ is any complex number not on the non-negative real line.
\end{lemma}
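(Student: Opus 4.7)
The plan is to exploit the fact that the Hilbert-Schmidt class forms a two-sided ideal inside the algebra of bounded operators, so that it suffices to show each resolvent $R_{\mu}(\Delta_i^{\gamma})$ is itself Hilbert-Schmidt; the commutator will then automatically be Hilbert-Schmidt with a norm bounded by a sum of products of the operator norm of $C_{12}$ and the Hilbert-Schmidt norms of the two resolvents.

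First I would recall the ideal property: if $A$ is bounded on a Hilbert space and $B$ is Hilbert-Schmidt, then $AB$ and $BA$ are both Hilbert-Schmidt, with $\|AB\|_{HS} \le \|A\|_{op}\|B\|_{HS}$ and similarly for $BA$. Applying the triangle inequality,
\begin{equation}
\left\| C_{12} R_{\mu}(\Delta_1^{\gamma}) - R_{\mu}(\Delta_2^{\gamma}) C_{12} \right\|_{HS}
\le \|C_{12}\|_{op}\bigl(\|R_{\mu}(\Delta_1^{\gamma})\|_{HS} + \|R_{\mu}(\Delta_2^{\gamma})\|_{HS}\bigr),
\end{equation}
so the problem reduces to showing $\|R_{\mu}(\Delta_i^{\gamma})\|_{HS} < \infty$ for $i = 1,2$.

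The key step is to compute the Hilbert-Schmidt norm of each resolvent in the Laplacian eigenbasis, in which $R_{\mu}(\Delta_i^{\gamma})$ is diagonal. Letting $\{\lambda_n^{(i)}\}_{n=0}^{\infty}$ be the eigenvalues of $\Delta_i$ (listed with multiplicity), the eigenvalues of $R_{\mu}(\Delta_i^{\gamma})$ are $1/((\lambda_n^{(i)})^{\gamma} - \mu)$, hence
\begin{equation}
\|R_{\mu}(\Delta_i^{\gamma})\|_{HS}^{2} = \sum_{n=0}^{\infty} \frac{1}{\bigl|(\lambda_n^{(i)})^{\gamma} - \mu\bigr|^{2}}.
\end{equation}
Since $\mu$ lies off the non-negative real line and the $\lambda_n^{(i)}$ are non-negative, $|(\lambda_n^{(i)})^{\gamma} - \mu| \ge c(\mu) > 0$ uniformly in $n$ and, for large $n$, $|(\lambda_n^{(i)})^{\gamma} - \mu| \sim (\lambda_n^{(i)})^{\gamma}$. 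By Weyl's law on a compact surface, $\lambda_n^{(i)} \sim \tfrac{4\pi}{S_i} n$, so the summand behaves like $n^{-2\gamma}$.

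The convergence of the series $\sum n^{-2\gamma}$ precisely requires $2\gamma > 1$, which is the hypothesis $\gamma > 1/2$. Combining finiteness of both $\|R_{\mu}(\Delta_i^{\gamma})\|_{HS}$ with the displayed triangle-inequality bound completes the proof. The main technical point — and the only place the threshold $\gamma = 1/2$ appears — is in matching Weyl's asymptotic to the convergence of the $p$-series; everything else is a direct invocation of the ideal property of the Hilbert-Schmidt class.
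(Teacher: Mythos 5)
Your proposal is correct and follows essentially the same route as the paper's proof: reduce to showing each resolvent $R_{\mu}(\Delta_i^{\gamma})$ is Hilbert--Schmidt via the ideal property of the Hilbert--Schmidt class, diagonalize in the Laplacian eigenbasis, and invoke Weyl's law to compare the resulting series with $\sum_k k^{-2\gamma}$, which converges exactly when $\gamma > 1/2$. The only cosmetic difference is that you spell out the triangle-inequality bound $\|C_{12}\|_{op}\bigl(\|R_{\mu}(\Delta_1^{\gamma})\|_{HS} + \|R_{\mu}(\Delta_2^{\gamma})\|_{HS}\bigr)$ and the uniform lower bound $|\lambda_n^{\gamma}-\mu| \ge c(\mu)$ explicitly, where the paper cites the ideal property and a two-sided comparison inequality for large $k$.
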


\begin{proof}
Operators with finite Hilbert-Schmidt norm are known as operators of Hilbert-Schmidt class. It can be shown (see \cite{reedsimon1}) that linear combinations of Hilbert-Schmidt class operators are of Hilbert-Schmidt class. Moreover, the product of a bounded operator and a Hilbert-Schmidt class operator is also of Hilbert-Schmidt class. Thus, it is sufficient to show that $R_{\mu}(\Delta^{\gamma})$ has finite Hilbert-Schmidt norm for $\gamma > 1/2$.

Denote the eigenfunctions and eigenvalues of $\Delta$ by $\{ \psi_{k} \}_{k=0}^\infty$ and $\{ \lambda_{k} \}_{k=0}^\infty$, respectively. We assume that the eigenvalues are numbered in the usual non-decreasing order.

$R_{\mu}(\Delta^{\gamma})$ is diagonal in the eigenbasis of $\Delta$ and has eigenvalues $1/(\lambda_{k}^{\gamma} - \mu )$. Thus, the Hilbert-Schmidt norm of $R_{\mu}(\Delta^{\gamma})$ is given by:
\begin{equation}
\begin{split}
    \| R_{\mu}(\Delta^{\gamma}) \|^{2}_{HS} &= \sum_{k = 0}^{\infty} \langle \psi_{k} , R_{\mu}(\Delta^{\gamma})^{\dagger} R_{\mu}(\Delta^{\gamma}) \psi_{k} \rangle\\
    &= \sum_{k = 0}^{\infty} \frac{1}{\left| \lambda_{k}^{\gamma} - \mu \right|^{2}}
\end{split}
\end{equation}

We will establish the convergence of this series for $\gamma > 1/2$ using the comparison test with the series $\sum_{k=1}^{\infty} \frac{1}{k^{p}}$.  This series is well-known to converge if and only if $p>1$.

Since the $\lambda_{k}$ are non-negative and increasing with $k$ towards $\infty$, the following inequality holds for all large enough $k$:

\begin{equation}
    \frac{1}{4 \lambda_{k}^{2\gamma}} \leq \frac{1}{\left| \lambda_{k}^{\gamma} - \mu \right|^{2}} \leq \frac{2}{ \lambda_{k}^{2 \gamma}}~.
\end{equation}

\noindent By Weyl's law, there exists a constant $B > 0$ such that $\lambda_{k} \sim B k$ for large $k$ \cite{minakshisundaram1949some}. Then, the inequality becomes

\begin{equation}
  \frac{1}{ 4 B^{2 \gamma}} \frac{1}{ k^{2 \gamma} } \leq  \frac{1}{\left| \lambda_{k}^{\gamma} - \mu \right|^{2}} \leq   \frac{2}{ B^{2 \gamma} } \frac{1}{ k^{2 \gamma}}~.
\end{equation}

\noindent Thus, by the comparison test, the series for $\| R_{\mu}(\Delta^{\gamma}) \|^{2}_{HS}$ converges if and only if $\gamma > 1/2$. This concludes the proof.
\end{proof}

We conclude this appendix with two remarks on the above result. First, note that we have shown a result slightly stronger than required by the statement of the lemma. In fact, we only needed to show that the series for $\| R_{\mu}(\Delta^{\gamma}) \|^{2}_{HS}$  converges if $\gamma > 1/2$. The "only if" part was optional. We have done this to illustrate that the above proof \emph{strategy} is guaranteed to fail for $\gamma \leq 1/2$. Specifically, it is no longer sufficient to assume that $C_{12}$ is merely a bounded operator. One also cannot simply assume $C_{12}$ to be of Hilbert-Schmidt class, as this rules out the important case of $C_{12} = Id$, which is bounded, but not Hilbert-Schmidt.

As a final remark, note that an analogous proof strategy can be applied to the Schatten $p$-norm, which can be seen as the $l_{p}$ generalization of the Hilbert-Schmidt norm. There, the $k^{\text{th}}$ term of the series would be $1/|\lambda_{k}^{\gamma} - \mu |^{p}$ and convergence would be guaranteed for $\gamma > 1/p$. Thus, one can find $p$ large enough so that $R_{\mu}(\Delta^{\gamma})$ has a well-defined Schatten $p$-norm for any given $\gamma > 0$.

\section{Comparison to heat mask}\label{appendix:heatMask}

\begin{figure}[!ht]
\centering
\includegraphics[width = 0.98\columnwidth]{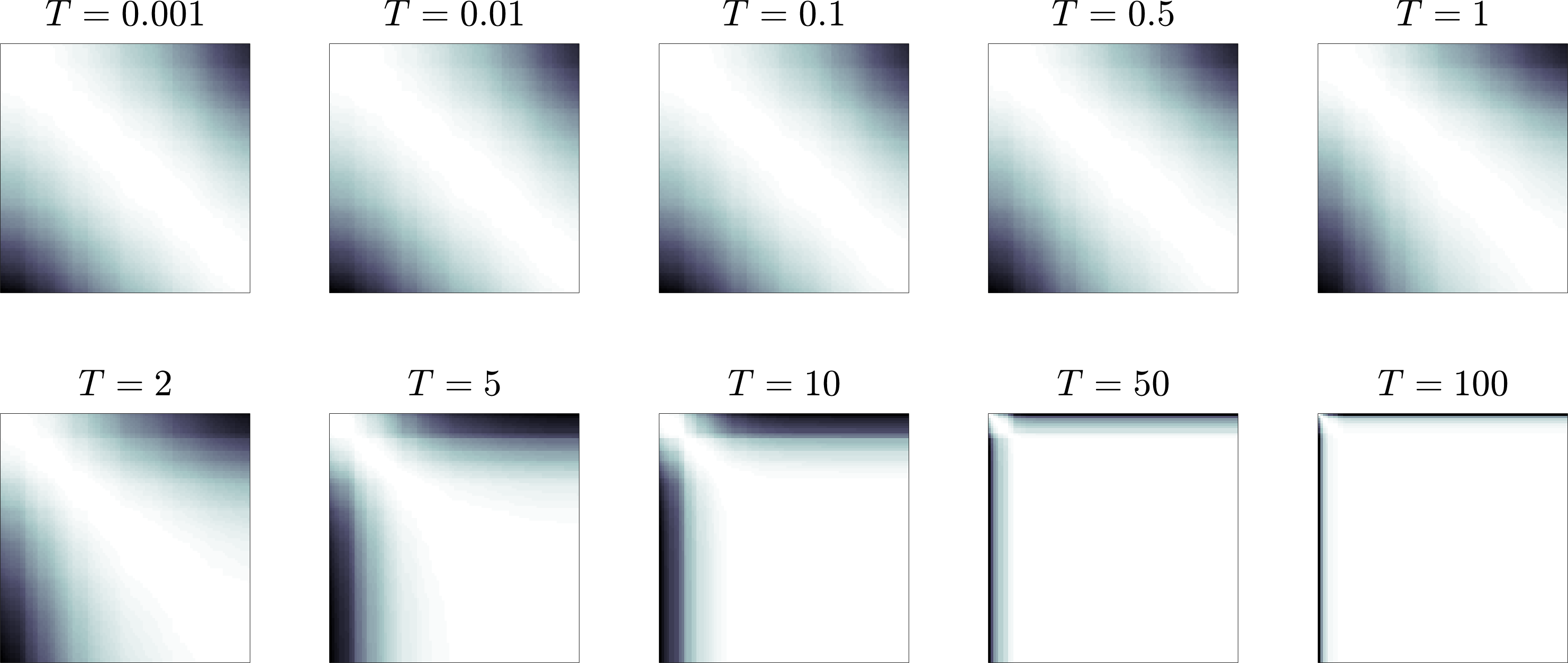}
\caption{Visualization of the heat mask with different time-scale parameter $T$.}
\label{fig:res:heatMask}
\end{figure}

\begin{figure}[!ht]
\centering
%
%
\definecolor{mycolor1}{rgb}{0.92941,0.69412,0.12549}%
\definecolor{mycolor2}{rgb}{0.00000,0.44700,0.74100}%
\definecolor{mycolor3}{rgb}{0.85098,0.32549,0.09804}%
\pgfplotsset{scaled x ticks=false}
\pgfplotsset{
compat=1.11,
legend image code/.code={
\draw[mark repeat=2,mark phase=2]
plot coordinates {
(0cm,0cm)
(0.0cm,0cm)        
(0.3cm,0cm)         
};%
}
}
\begin{tikzpicture}
\begin{axis}[%
width=0.6\linewidth,
height=0.4\linewidth,
at={(6.441in,0.894in)},
scale only axis,
xmode=log,
xmin=0.001,
xmax=500,
xminorticks=true,
xlabel style={font=\color{white!15!black}},
xlabel={$T$},
ymin=0.08,
ymax=0.26,
ylabel style={font=\color{white!15!black}},
ylabel={Average error},
axis background/.style={fill=white},
xmajorgrids,
ymajorgrids,
scaled x ticks=false,
scaled y ticks=false,
every x tick label/.append style={font=\color{black}, font=\scriptsize},
every y tick label/.append style={font=\color{black}, font=\scriptsize},
legend style={at={(1.05,0.304)}, anchor=south west, legend cell align=left, align=left, draw=white!15!black}
]
\addplot [color=mycolor1, line width=1.5pt]
  table[row sep=crcr]{%
0.001	0.20225\\
0.01	0.20205\\
0.05	0.2001\\
0.1	0.19844\\
0.2	0.19488\\
0.3	0.19066\\
0.4	0.18675\\
0.5	0.18328\\
0.6	0.17833\\
0.7	0.17501\\
0.8	0.17116\\
0.9	0.1673\\
1	0.16374\\
1.1	0.15968\\
1.2	0.15594\\
1.3	0.15133\\
1.4	0.14725\\
1.5	0.14437\\
1.6	0.1405\\
1.7	0.13548\\
1.8	0.13197\\
1.9	0.12864\\
2	0.12543\\
2.25	0.11528\\
2.5	0.10928\\
2.75	0.10461\\
3	0.10165\\
3.25	0.09918\\
3.5	0.098165\\
3.75	0.097176\\
4	0.097112\\
5	0.097478\\
6	0.10088\\
7	0.10516\\
8	0.11018\\
9	0.11419\\
10	0.11872\\
15	0.13342\\
20	0.14053\\
25	0.14536\\
30	0.14937\\
35	0.15414\\
40	0.15912\\
45	0.16345\\
50	0.16647\\
100	0.18726\\
200	0.20391\\
300	0.23086\\
400	0.24668\\
500	0.2587\\
};
\addlegendentry{heat}

\addplot [color=mycolor2, dashed, line width=2.0pt]
  table[row sep=crcr]{%
0.001	0.20238\\
500	0.20238\\
};
\addlegendentry{standard}

\addplot [color=mycolor3, dashed, line width=2.0pt]
  table[row sep=crcr]{%
0.001	0.088847\\
500	0.088847\\
};
\addlegendentry{ours}

\end{axis}
\end{tikzpicture}%
\caption{Changing the time-scale parameter $T$ of the heat mask. Here we test the performance of the heat mask (yellow curve) with different $T$ and report the average direct error of~50 random FAUST isometric shape pairs. The average error of the standard mask (blue dashed line) and the average error of our resolvent mask (red dashed line) are included for comparison.}
\label{fig:res:heatMask:curve}
\vspace{-12pt}
\end{figure}
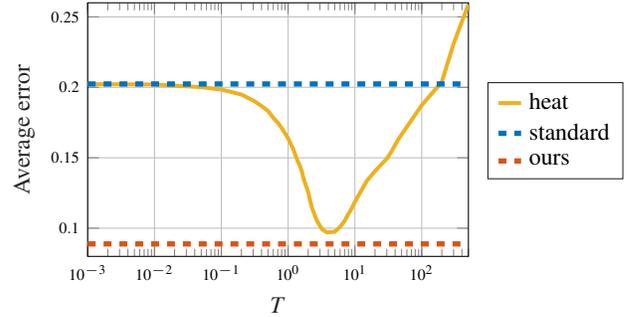

Besides our resolvent-based commutativity term, another natural bounded option would be to use the commutativity with the \emph{heat} operators, which are also bounded linear functional operators, which can act as regularizers on functional maps. This would lead to the following mask matrix:
\begin{equation}
\matr{M}_{heat}(i,j) = \exp(-T \Lambda_2(i)) - \exp(-T \Lambda_1(j)),
\end{equation}
where $T$ is the scalar time parameter. Fig.~\ref{fig:res:heatMask} visualizes the heat mask with different values of $T$ on an isometric shape pair. In Fig.~\ref{fig:res:heatMask:curve} we compared the performance of the heat mask with different values of $T$ ranging from $10^{-3}$ to 500 on FAUST isometric shape pairs.
We can see that, as a bounded operator, the heat mask has a better performance than the standard mask. 
When $T = 5$, it gives the best performance among the tested values. The heat mask with $T = 5$ (in Fig.~\ref{fig:res:heatMask}) has a similar funnel structure, but still does not achieve the quality of the results we obtain with the mask based on the resolvent operators.
\section{Change the descriptor size}\label{appendix:descSize}
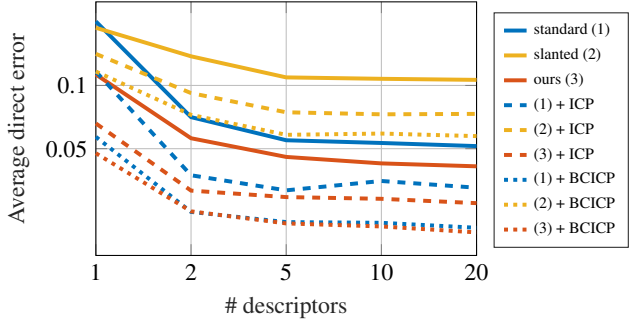
\begin{figure}[!h]
    \centering
%
%
\definecolor{mycolor1}{rgb}{0.00000,0.44706,0.74118}%
\definecolor{mycolor2}{rgb}{0.92941,0.69412,0.12549}%
\definecolor{mycolor3}{rgb}{0.85098,0.32549,0.09804}%
\pgfplotsset{scaled x ticks=false}
\pgfplotsset{
compat=1.11,
legend image code/.code={
\draw[mark repeat=2,mark phase=2]
plot coordinates {
(0cm,0cm)
(0.15cm,0cm)        
(0.3cm,0cm)         
};%
}
}
\begin{tikzpicture}

\begin{axis}[%
width=0.6\linewidth,
height=0.4\linewidth,
at={(1.472in,1.026in)},
scale only axis,
xmin=1,
xmax=5,
ymode=log,
xtick={1,2,3,4,5},
xticklabels={{1},{2},{5},{10},{20}},
ytick={0.05, 0.1},
yticklabels={0.05,0.1},
xlabel style={font=\color{white!15!black}},
xlabel={\# descriptors},
ymin=0,
ymax=0.25,
ylabel style={font=\color{white!15!black}},
ylabel={Average direct error},
axis background/.style={fill=white},
xmajorgrids,
ymajorgrids,
legend style={at={(1.4,0.03)}, anchor=south east, legend cell align=left, align=left, draw=white!15!black, fill=white, fill opacity=0.6, draw opacity=1, text opacity=1},
]
\addplot [color=mycolor1, line width=1.5 pt]
  table[row sep=crcr]{%
1	0.20173\\
2	0.070664\\
3	0.054841\\
4	0.053226\\
5	0.051437\\
};
\addlegendentry{\scriptsize standard (1)}

\addplot [color=mycolor2, line width=1.5 pt]
  table[row sep=crcr]{%
1	0.18815\\
2	0.13763\\
3	0.10923\\
4	0.10755\\
5	0.1063\\
};
\addlegendentry{\scriptsize slanted (2)}

\addplot [color=mycolor3, line width=1.5 pt]
  table[row sep=crcr]{%
1	0.11276\\
2	0.056063\\
3	0.045678\\
4	0.042554\\
5	0.041212\\
};
\addlegendentry{\scriptsize ours (3)}

\addplot [color=mycolor1, dashed, line width=1.5 pt]
  table[row sep=crcr]{%
1	0.11649\\
2	0.037365\\
3	0.031668\\
4	0.035089\\
5	0.032601\\
};
\addlegendentry{\scriptsize (1) + ICP}

\addplot [color=mycolor2, dashed, line width=1.5 pt]
  table[row sep=crcr]{%
1	0.14135\\
2	0.091973\\
3	0.074516\\
4	0.072879\\
5	0.073227\\
};
\addlegendentry{\scriptsize (2) + ICP}

\addplot [color=mycolor3, dashed, line width=1.5 pt]
  table[row sep=crcr]{%
1	0.065958\\
2	0.031577\\
3	0.029414\\
4	0.028862\\
5	0.02752\\
};
\addlegendentry{\scriptsize (3) + ICP}

\addplot [color=mycolor1, dotted, line width=1.5pt]
  table[row sep=crcr]{%
1	0.056924\\
2	0.024996\\
3	0.022361\\
4	0.022187\\
5	0.02101\\
};
\addlegendentry{\scriptsize (1) + BCICP}

\addplot [color=mycolor2, dotted, line width=1.5pt]
  table[row sep=crcr]{%
1	0.11649\\
2	0.07245\\
3	0.058206\\
4	0.059018\\
5	0.057382\\
};
\addlegendentry{\scriptsize (2) + BCICP}

\addplot [color=mycolor3, dotted, line width=1.5pt]
  table[row sep=crcr]{%
1	0.04757\\
2	0.025163\\
3	0.022038\\
4	0.021319\\
5	0.019977\\
};
\addlegendentry{\scriptsize (3) + BCICP}

\end{axis}
\end{tikzpicture}%
    \caption{The average geodesic error v.s. the number of descriptors used for functional map estimation on the FAUST dataset. Note that our mask leads to better initialization, which results in better maps, even after the ICP and BCICP refinement.}
    \label{fig:append:descSize}
    \vspace{-12pt}
\end{figure}

Fig.~\ref{fig:append:descSize} shows the average direct error of 300 FAUST shapes with different number of input descriptors. The descriptors are used to optimize a 100-by-100 functional map for each test pair. The results with ICP/BCICP refinement are also included.

\begin{table}[!h]
\caption{We also compare our results to the exact setting of~\cite{ren2018continuous} on the same set of shape pairs of FAUST dataset. The average direct geodesic error over 300 shape pairs are reported and our mask leads to better results.
}
\label{table:appendix:compare_to_SGA}
\centering
\begin{tabular}{|c|ccc|}
\hline
\multirow{2}{*}{} & \multicolumn{3}{c|}{Avg. direct error \footnotesize{($\times 10^{-3}$)}} \\ \cline{2-4} 
 & Ini & +ICP & +BCICP \\ \hline
\cite{ren2018continuous} & 175.5 & 121.2 & 58.2 \\
Ours & 72.7 & 53.5 & 42.0 \\ \hline
\end{tabular}
\end{table}

We also compare to the results obtained directly using the code and the dataset provided by the authors of~\cite{ren2018continuous}
(see Table~{\ref{table:appendix:compare_to_SGA}}). The reproduced results that are reported in Table~{\ref{table:appendix:compare_to_SGA}}
are consistent with the values reported in the Table 1-2 ("WKS + directOp + BCICP") in the paper,
and Table 2-3 ("WKS + directOp") in the supplementary materials of~\cite{ren2018continuous}. Note that~\cite{ren2018continuous} split the dataset into isometric and non-isometric categories, and here we report the results altogether.

Recall that the total energy to optimize is 
$E\big(\matr{C}_{12}\big) = \alpha_1E_{\text{desc}} + \alpha_2 E_{\text{mult}} + \alpha_3 E_{\text{orient}} + \alpha_4 E_{\text{mask}}$. 
We would like to highlight the fact that the parameters in~\cite{ren2018continuous} are not set in the same way as in our work. In particular,
in~\cite{ren2018continuous}
the standard Laplacian mask is used, and the weight $\alpha_i$ are set to fixed values $\alpha_i^*$. While in our test,
we used the proposed resolvent Laplacian mask, and the weight $\alpha_i$ are set to $\alpha_i^* / E_i(C_{\text{ini}})$, where 
$E_i(C_{\text{ini}})$ is the corresponding energy term acting on the initial functional map $C_{\text{ini}}$. 

Our approach allows a better control over the relative contribution of the different terms in the energy and we observed that it typically works better in practice as well. 
Specifically, in our setting, the relative weight of each term $\alpha_i E_i(C)$ are fixed across different shape pairs.
In this case, if we change the mask construction, 
we can conclude that the improvement indeed come from our proposed resolvent mask. 
However, in the comparison to the exact setting of~\cite{ren2018continuous}, since only $\alpha_i^*$ is fixed over different test pairs, 
the relative weight of each term $\alpha_i^* E_i(C)$ can have different scale 
since different test pairs may have different scale of eigenvalues, descriptors and etc. 
Therefore, the improvement from Table~{\ref{table:appendix:compare_to_SGA}} is not obtained in a well controlled setting, 
since the improvement can also come from the change of the relative weight of different energy terms as well as from our resolvent mask.

Thus, in addition to the new Table~{\ref{table:appendix:compare_to_SGA}}, we also emphasize that in Fig.~{\ref{fig:append:descSize}} we provide 
a more fair and controlled comparison to~\cite{ren2018continuous}
in which 10 pairs of descriptors are used. Note that, in all the tests across the paper, 
we used the above discussed way to fix the relative weight of the mask term w.r.t. the rest terms to make sure the improvement indeed comes from our new resolvent mask.

\section{Stability under remeshing and refinement}\label{appendix:stability}
\begin{figure}[!h]
\centering
  \begin{overpic}
  [trim=5cm -1cm 5cm 0cm,clip,width=1\columnwidth,grid=false]{./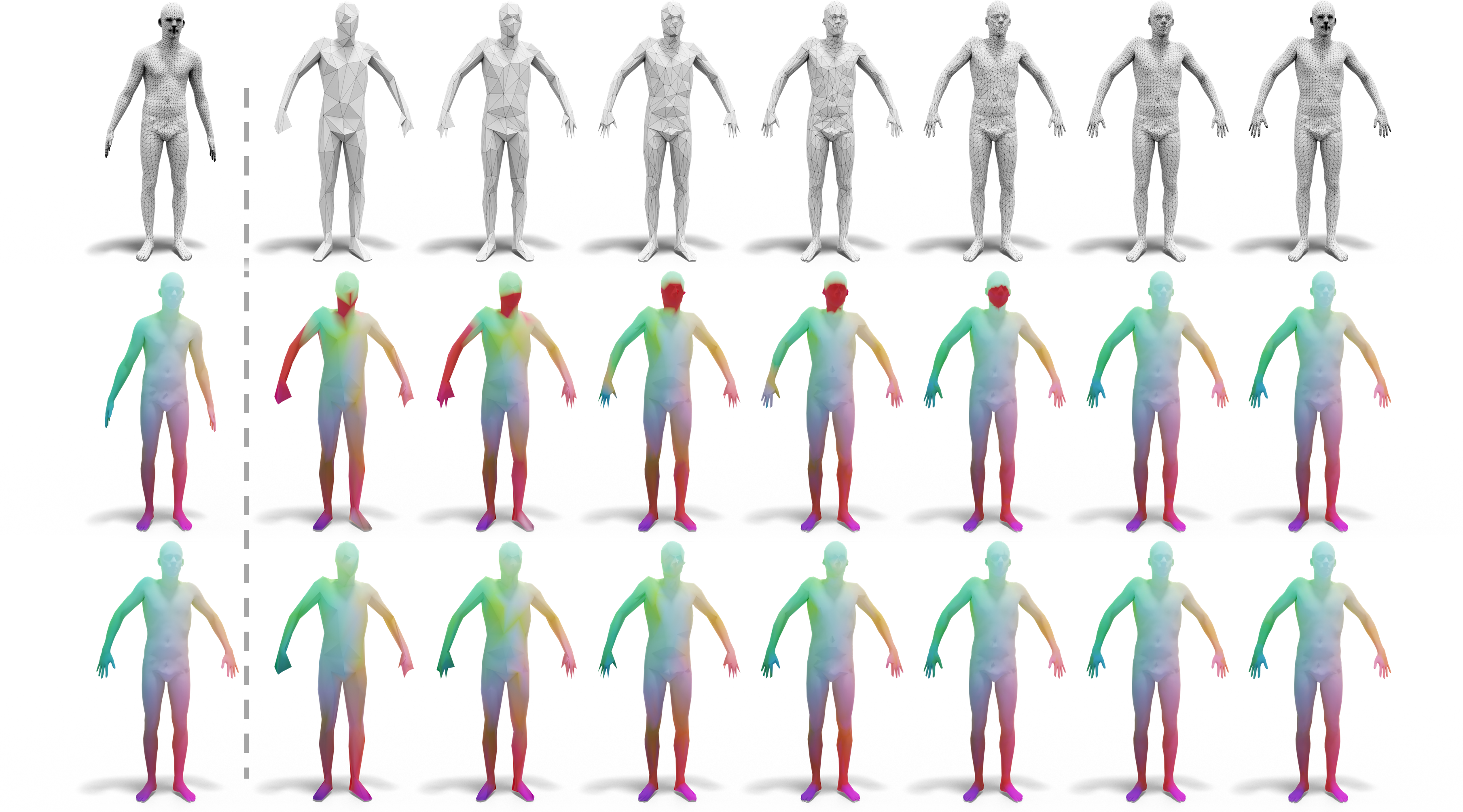}
  \put(1,64){\scriptsize{n = 6890}}
  \put(16.5,64){\scriptsize{n = 200}}
  \put(30,64){\scriptsize{n=300}}
  \put(42,64){\scriptsize{n=500}}
  \put(54,64){\scriptsize{n=1000}}
  \put(67,64){\scriptsize{n=3000}}
  \put(79,64){\scriptsize{n=5000}}
  \put(92,64){\scriptsize{n=6890}}
  \put(-1,40){\scriptsize{Source}}
  \put(-1,20){\scriptsize{Target}}
  \put(101,39){\footnotesize{\rotatebox{-90}{Standard}}}
  \put(101,16){\footnotesize{\rotatebox{-90}{\textbf{Ours}}}}
  \end{overpic}
  \caption{\hl{
  First row: The target shape is fixed, while the source shape is remeshed and downsampled to resolution ranging from 200 to 5000 (the original target shape without remeshing is shown in the last column); Second row: we use the standard mask to optimize a 100-by-100 functional map with 3 descriptors. The recovered pointwise maps are visualized on the corresponding target shape; Third row: similar to the second row but using our resolvent mask. 
  } 
  }
  \label{fig:appendix:stability}
\end{figure}

\hl{
Fig.~{\ref{fig:appendix:stability}} illustrate the stability of our resolvent mask under remeshing and refinement. Specifically, we fix the source shape, and remesh and downsample the target shape to different resolutions ranging from 200 to 5000. Note that the original source shape (first row, first column) and the original target shape (first row, last column) have the same triangulation. The downsampled meshes (using QSlim) are shown in the first row with the number of vertices reported in the above.

We then compute a 100-by-100 functional map between the source shape and the downsampled target shape using the standard Laplacian mask (the corresponding point maps are shown in the second row) and our resolvent mask (in the third row). We can see that, our resolvent mask is more stable than the standard mask across different mesh resolution and irregular/inconsistent triangulation.
}
\end{document}